\documentclass[11pt]{article}
\usepackage{amsmath}
\usepackage{graphicx,psfrag,epsf}
\usepackage{enumerate}
\usepackage{natbib}
\usepackage{url} % not crucial - just used below for the URL 
\usepackage{mathtools}

\newtagform{brackets}{[}{]}
\usetagform{brackets}

\usepackage{setspace}
 \DeclareGraphicsExtensions{.eps, .ps}
\usepackage{soul,color}
\usepackage{amsfonts}
\usepackage{amssymb}
\usepackage{newfloat}
\usepackage{multirow}
\usepackage{arydshln}
\usepackage{comment}
\usepackage{enumitem}
\usepackage{amsthm}
\usepackage[title]{appendix}
\usepackage{lineno}
\usepackage{booktabs}
\usepackage{tikz}
\usepackage{comment}
\usepackage{xr}
\usetikzlibrary{positioning}
\usepackage{subcaption}
\usepackage{authblk}
\usepackage{adjustbox}
\usepackage{etoolbox}

\usepackage{xcolor}
\usepackage{hyperref}

\newcommand{\blind}{1}

\usepackage[margin=1in]{geometry}

\newtheorem{theorem}{Theorem}[section]
\newtheorem{example}{Example}[section] % [section] ensures 
\newtheorem{corollary}{Corollary}[theorem]
\newtheorem{lemma}[theorem]{Lemma}
\newtheorem{assumption}[theorem]{Assumption}
\newtheorem{remark}[theorem]{Remark}

\DeclareMathOperator{\Var}{\mathrm{Var}}
\DeclareMathOperator{\Cov}{\mathrm{Cov}}

\newcommand{\naive}{na\"{\i}ve }
\newcommand{\Naive}{Na\"{\i}ve }

\begin{document}

\def\spacingset#1{\renewcommand{\baselinestretch}%
{#1}\small\normalsize} \spacingset{1}

%%%%%%%%%%%%%%%%%%%%%%%%%%%%%%%%%%%%%%%%%%%%%%%%%%%%%%%%%%%%%%%%%%%%%%%%%%%%%%

\if1\blind
{
  \title{\bf Selective Inference in Growth Curve Models}
  \author[a]{Haochen Lei}
    \author[b]{Qian Zhang}
    \author[c]{Hongyuan Cao}
\affil[a]{Department of Statistics, Florida State University}
\affil[b]{Department of Educational Psychology and Learning Systems, Florida State University}
\affil[c]{Department of Biostatistics and Medical Informatics, University of Wisconsin-Madison}

  \maketitle
} \fi

\if0\blind
{
  \bigskip
  \bigskip
  \bigskip
  \begin{center}
    {\LARGE\bf Selective Inference in Growth Curve Models}
\end{center}
  \medskip
} \fi

\bigskip
\begin{abstract}
Growth curve models are widely used in psychological research, and variable
selection can help identify baseline characteristics associated with
longitudinal heterogeneity. However, conventional inference after
data-driven variable selection can be invalid because the same outcome data
are used for both selection and inference. We develop a data-fission
framework for post-selection inference in growth-curve models that separates
the information used for selection and inference while retaining all
participants in both stages through the addition and subtraction of Gaussian
noise. The framework accommodates flexible variable-selection procedures and
targets covariance-weighted linear projection parameters in the selected
working model. It provides exact inference when the covariance structure is
known, and we establish asymptotic validity under suitable regularity
conditions when the covariance structure is estimated. Simulations show that
the proposed method provides valid inference and improves efficiency over
subject-level data splitting, whereas \naive post-selection inference can be
biased. An application to the Longitudinal Study of American Youth
illustrates the proposed framework.
\end{abstract}

\noindent%
{\it Keywords:}  Growth Curve Model, Longitudinal Data Analysis, Selective Inference, Variable Selection
\vfill

\newpage
\spacingset{1.45} % DON'T change the spacing!
\section{Introduction}
\subsection{Variable Selection and Inference in Longitudinal Growth-Curve Models}

Growth curve models are widely used in psychological research to study
individual trajectories and heterogeneity in developmental change
\citep{bollen2006latent}. Modern studies may also include large numbers of
candidate predictors, such as genetic, neuroimaging, or environmental
measures \citep{golub1999molecular}, making variable selection an important
part of the modeling process.

Variable selection can help identify baseline characteristics associated
with different aspects of longitudinal heterogeneity, including baseline
levels and rates of change in child development \citep{mccoy2016early},
depressive-symptom trajectories \citep{xiang2022prediction}, conduct-problem
trajectories \citep{Odgers2008}, and cognitive decline in older adulthood
\citep{yaffe2004metabolic}. More generally, reducing the set of candidate
predictors can improve interpretability and focus subsequent inference on
the covariates most relevant to trajectory heterogeneity.

After identifying a subset of variables, a common next step is to fit a standard longitudinal model, such as a growth curve model, using the selected covariates with the same data and report the corresponding coefficient estimates and confidence intervals. Although this second-stage analysis appears straightforward, it raises two distinct questions. First, because the variables were selected using the same outcome data, can conventional inference still be trusted? Second, when the selection procedure is flexible and the relationships
between the baseline covariates and the growth parameters are nonlinear,
what do the coefficients in the subsequently fitted growth curve model
actually represent?

\subsection{Post-Selection Inference and the Remaining Gap}
Unfortunately, the answer to the first question is generally no. When the
same data are used for both variable selection and subsequent inference,
conventional inferential procedures that ignore the selection step cannot
remain valid. The resulting estimates and confidence intervals may reflect not only the
underlying associations but also the data-dependent selection process,
leading to distorted effect estimates and confidence intervals with coverage
below their nominal level in both standard linear models
\citep{berk2013valid,kuchibhotla2022post} and longitudinal or linear
mixed-effects models
\citep{rugamer2022selective,d2026methods}.

Although post-selection inference has been studied extensively for standard
linear models \citep{berk2013valid,lee2016exact,tian2018selective, rasines2023splitting}, comparatively fewer methods have been developed
specifically for linear mixed-effects models. For a recent review and
empirical comparison of the existing methods, see \citet{d2026methods}. Several different strategies have been proposed for post-selection
inference in linear mixed-effects models.
For example, \citet{kramlinger2023uniformly}
developed Lasso-based confidence regions, while \citet{claeskens2021post}
studied selection based on conditional AIC and constructed post-selection
regions using Monte Carlo methods.

However, these approaches rely on specific properties of the selection
procedure and are therefore difficult to extend to flexible machine-learning
selectors, such as mixed-effects trees or sparse neural networks, whose
selection events may be analytically intractable.
\citet{rugamer2022selective} provided a more general Monte Carlo strategy by
repeatedly applying the selection procedure to simulated responses, but this
can become computationally expensive when the observed selection event has
low probability.

An alternative method is subject-level data splitting, in which one group of
participants is used for selection and an independent group is reserved for
inference. This approach places essentially no restrictions on the selection
procedure, but uses only a fraction of the available participants at each
stage and can therefore %lose selection power and 
reduce inferential precision.

Existing approaches therefore involve a trade-off between restrictions or
computational costs associated with the selection procedure and the loss of
efficiency caused by sample splitting. This motivates a method that can accommodate a flexible selector without repeatedly characterizing or reconstructing its selection event, while retaining as much information as possible from all participants.

\subsection{Post-Selection Inferential Targets}
Another important question is what we are actually estimating after a set of
features has been selected. When the relationships between the baseline
covariates and the growth parameters, such as the intercept and rate of
change, are nonlinear, there may not exist a unique ``true'' sparse linear
coefficient vector for the selection procedure to recover. Instead, we treat
the selected linear mixed-effects model as a working approximation and
interpret its coefficients as linear projection parameters. This perspective
is consistent with the broader view of regression models as useful
approximations even when the parametric model is misspecified
\citep{buja2019modelsI, buja2019modelsII}, and is closely related to the
projection-based targets considered in post-selection inference with
randomized splitting \citep{rasines2023splitting}.

Depending on the inferential context, we consider two related projection
targets. First, conditioning on the covariates and observation times actually
observed yields a finite-sample, design-conditional projection target. Second,
when the covariates and observation schedule are regarded as random, averaging
over their distribution yields a population-level projection target. The
design-conditional target is particularly natural for post-selection inference
because it describes the best covariance-weighted linear approximation for the
realized sample and selected working model, without requiring the relationships between the baseline covariates and growth parameters to be linear.

Once the inferential target has been defined, the remaining challenge is how
to conduct a valid inference for these projection parameters after
data-dependent variable selection.
\subsection{Proposed Framework}

Data fission provides another way to separate the information used for
selection and inference without assigning different participants exclusively
to the two stages \citep{leiner2025data}. The general idea is to introduce
additional randomization and construct two related responses from the
observed data: one for model selection and the other for subsequent inference.

Extending data fission to longitudinal data introduces an additional
challenge. Repeated observations from the same participant are correlated,
so the randomization must account for the within-subject covariance
structure. %Moreover, 
This covariance structure is typically unknown and
must be estimated from the observed data. 
%A post-selection procedure for
%growth curve models must therefore accommodate both the dependence among repeated observations and
%the uncertainty associated with estimating their covariance structure.

In this paper, we develop a post-selection inference framework for growth curve models represented through linear mixed-effects models. Our goal is
to provide valid inference for the projection parameters in a selected
working model while allowing the variable-selection procedure to remain
flexible. In particular, the proposed framework avoids the need to explicitly
characterize or repeatedly reconstruct the selection event and, unlike
subject-level data splitting, retains all participants in both the selection
and inference stages.

%When the covariance matrix
%$\Sigma_0$ of the longitudinal outcome is known, we generate an auxiliary
%Gaussian random vector
% \(
%W\sim N(0,\Sigma_0),
%W\perp Y,
%\)
%and, for a fixed information-allocation parameter $\tau>0$, construct $Y^{\mathrm{sel}}$ used for variable selection and $Y^{\mathrm{inf}}$ reserved for post-selective inference as
%\(
%Y^{\mathrm{sel}} = Y + \tau W\) and \(Y^{\mathrm{inf}} = Y - \tau^{-1}W,
%\)
%where $Y$ is the observed longitudinal response and $W$ is the auxiliary noise variable.
%The covariance of the added noise and the relative scaling of the two
%perturbations are chosen so that
%\(
%\operatorname{Cov}
%\left(
%Y^{\mathrm{sel}},
%Y^{\mathrm{inf}}
%\right)
%=0.
%\)
%Under the Gaussian mixed-effects model, $Y^{\mathrm{sel}}$ and $Y^{\mathrm{inf}}$ are therefore
%independent. $Y^{\mathrm{sel}}$ can be used with an arbitrary variable-selection
%procedure, while $Y^{\mathrm{inf}}$ is reserved exclusively for estimation and
%inference after selection.

When the covariance matrix $\Sigma_0$ is known, we generate
$W\sim N(0,\Sigma_0)$ independently of $Y$ and construct
\[
Y^{\mathrm{sel}}=Y+\tau W,
\qquad
Y^{\mathrm{inf}}=Y-\tau^{-1}W.
\]
%\textcolor{red}{what is Their?}Their cross-
The covariance between $Y^{\mathrm{sel}}$ and $Y^{\mathrm{inf}}$ is zero, so under the Gaussian model the two randomized
responses are independent. The selection response may therefore be used with
a flexible variable-selection procedure, while the inference response is
reserved for estimation after selection.

We further consider the practically important setting in which the covariance
matrix is unknown. In this case, the covariance components are estimated from
the observed longitudinal data, and the estimated covariance matrix is used to
construct the randomized selection and inference responses. Because the
covariance estimate itself depends on the observed outcome, these empirical
responses are no longer independent in finite samples. Under suitable
conditions on covariance estimation, we show that the conditional distribution
of the post-selection estimator based on the estimated covariance is
asymptotically equivalent to that of the oracle procedure based on the true
covariance matrix.
%We show that, under suitable
% conditions %on covariance estimation, the selected design, and the probability of the selection event,
% the conditional distribution of estimated target using the estimated covariance converge to those using the true covariance matrix.

Our framework places few restrictions on the variable-selection algorithm
itself. To illustrate this flexibility, we consider two substantially
different working selectors: a mixed-effects model tree and a sparse neural
network. Both methods allow the selection stage to capture relationships
that may be more complex than those represented in the linear
working model.
%
%The remainder of the paper is organized as follows. Section~2 introduces the
%growth curve model, the general selection mechanism, and the
%design-conditional and population projection targets. Section~3 develops the
%oracle data fission procedure when the covariance structure is known.
%Section~4 considers the unknown-covariance setting and establishes the
%asymptotic validity of the empirical procedure. Section~5 describes the
%tree-based and neural-network working selectors. Section~6 presents the
%simulation study. Section~7 presents the empirical application, and
%Section~8 concludes with a discussion. Technical proofs are provided in the Appendix.

\section{Model Framework and Inferential Targets}
\subsection{Growth Curve Model}
In standard longitudinal data analysis, the linear mixed-effects model \citep{laird1982random} is routinely employed to characterize developmental trajectories. However, when the number of candidate covariates %such as intensive behavioral measures, 
 is large, specifying a parametric linear relationship a priori is often restrictive. We therefore start from a semiparametric growth curve representation that accommodates arbitrary nonlinear main effects and interactions.

Consider longitudinal data collected from $N$ independent participants (subjects) indexed by $i = 1, \ldots, N$. For participant $i$, data are collected across $T_i$ distinct time points. Let $t_i = (t_{i1}, t_{i2}, \ldots, t_{iT_i})^\top \in \mathbb{R}^{T_i}$ denote the vector of time points for participant $i$. Incorporating subject-specific total observation counts $T_i$ and timing vectors $t_i$ accommodates unbalanced longitudinal designs where participants are observed at varying total frequencies or irregular time intervals. Let $x_i = (x_{i1}, \ldots, x_{ip})^\top \in \mathbb{R}^p$ represent a $p$-dimensional vector of time-invariant, baseline covariates for participant $i$. Following the latent growth curve framework \citep{meredith1990latent, bollen2006latent}, we characterize the outcome trajectory of participant $i$ at continuous time $t$ as a linear developmental process:
\begin{equation*}
	Y_i(t) = \eta_{1i} + \eta_{2i} t + \epsilon_i(t),
\end{equation*}
where $\eta_{1i} \in \mathbb{R}$ is the latent intercept representing the subject-specific baseline level at $t = 0$, $\eta_{2i} \in \mathbb{R}$ is the latent slope representing the subject-specific rate of change over time, and $\epsilon_i(t) \in \mathbb{R}$ is the measurement error at time $t$. We collect individual trajectory parameters into the vector $\eta_i = (\eta_{1i}, \eta_{2i})^\top \in \mathbb{R}^2$. To avoid imposing restrictive linear constraints on how baseline covariates relate to trajectory dynamics, we model the relationship nonparametrically:
\begin{equation}
\label{eq:eta}
\eta_i = m(x_i) + b_i, \qquad b_i \sim N(0, D),
\end{equation}
where $m(x_i) = \{m_1(x_i), m_2(x_i)\}^\top \in \mathbb{R}^2$ represents unknown, flexible mean functions. Here, $m_1(x_i): \mathbb{R}^p \to \mathbb{R}$ captures how baseline covariates impact the overall average baseline level (intercept), while $m_2(x_i): \mathbb{R}^p \to \mathbb{R}$ captures how baseline covariates impact the average rate of change over time (slope). The vector $b_i = (b_{1i}, b_{2i})^\top \in \mathbb{R}^2$ contains unobserved, subject-specific random effects accounting for individual heterogeneity in baseline status ($b_{1i}$) and growth rate ($b_{2i}$) unexplained by $x_i$. The matrix $D \in \mathbb{R}^{2 \times 2}$ is the positive-definite covariance matrix of random trajectory deviations:
{
	\renewcommand{\arraystretch}{0.8}
\begin{equation}
\label{eq:D}
D = \begin{pmatrix} d_{11} & d_{12} \\ d_{12} & d_{22} \end{pmatrix} = \begin{pmatrix} \operatorname{Var}(b_{1i}) & \operatorname{Cov}(b_{1i}, b_{2i}) \\ \operatorname{Cov}(b_{1i}, b_{2i}) & \operatorname{Var}(b_{2i}) \end{pmatrix},
\end{equation}
}
where $d_{11}$ denotes residual intercept variance, $d_{22}$ denotes residual slope variance, and $d_{12}$ reflects covariance between an individual's initial status and rate of growth. We assume $b_i$ is independent of covariates $x_i$ and independent across subjects ($b_i \perp b_j$ for $i \neq j$). Let $Y_i = (Y_i(t_{i1}), \ldots, Y_i(t_{iT_i}))^\top \in \mathbb{R}^{T_i}$ denote the vector of repeated outcome measurements for subject $i$. We construct the subject-specific design matrix $Z_i \in \mathbb{R}^{T_i \times 2}$ as:
%
%Z_i = \begin{pmatrix} 1 & t_{i1} \\ 1 & t_{i2} \\ \vdots & \vdots \\ 1 & t_{iT_i} 
%\end{equation}
\begin{equation}
	\label{eq:Z_i}
Z_i=\begin{pmatrix}1_{T_i},t_i\end{pmatrix},
\end{equation}
where the first column maps the latent intercept to each time point and the second column maps observed measurement times $t_i$ to the latent slope. The observed outcome vector $Y_i$ is expressed in matrix notation as:
\begin{equation}
Y_i = Z_i \eta_i + \epsilon_i = Z_i m(x_i) + Z_i b_i + \epsilon_i,
\end{equation}
where $\epsilon_i = (\epsilon_i(t_{i1}), \ldots, \epsilon_i(t_{iT_i}))^\top \in \mathbb{R}^{T_i}$ represents measurement errors for subject $i$, assuming $\epsilon_i \sim N(0, R_i)$ independent of $b_i$. 
The matrix $R_i \in \mathbb{R}^{T_i \times T_i}$ models the within-person error covariance structure. 
%Under conditionally independent measurement errors, $R_i = \sigma^2 I_{T_i}$, where $I_{T_i}$ denotes the $T_i \times T_i$ identity matrix, assuming error variances are constant ($\sigma^2$) across time and uncorrelated within individuals. Under continuous-time autocorrelation structures, $(R_i)_{jk} = \sigma^2 \exp(-\rho |t_{ij} - t_{ik}|)$, where $(R_i)_{jk}$ represents the entry in the $j$-th row and $k$-th column of $R_i$. 
For ease of exposition, we assume $R_i = \sigma^2 I_{T_i}$ hereafter. Marginalizing over random trajectory deviations $b_i$, the conditional distribution of $Y_i$ given covariates $x_i$ and measurement times $t_i$ is multivariate normal:
 \begin{equation}
\label{eq:Y_i}
Y_i \mid x_i, t_i \sim N \Big( \mu_i, \; \Sigma_i \Big),
\end{equation} where the conditional mean $\mu_i=Z_im(x_i)$, and the conditional variance-covariance matrix $\Sigma_i \in \mathbb{R}^{T_i \times T_i}$ decomposes into between-person structural variation and within-person error variance:
\begin{equation}
\label{eq:Sigma_i}
\Sigma_i = Z_i D Z_i^\top + R_i.
\end{equation}
% where the total conditional variance-covariance matrix $\Sigma_i \in \mathbb{R}^{T_i \times T_i}$ decomposes into between-person structural variation and within-person error variance:
% \begin{equation}
% \Sigma_i = Z_i D Z_i^\top + R_i.
% \end{equation}
While all subjects share global variance parameters $D$ and $\sigma^2$, subject-specific differences in measurement schedules $t_i$ and observation counts $T_i$ yield distinct matrix dimensions and structures $\Sigma_i$.  Let $n=\sum_{i=1}^N T_i$ be the total number of observations.
Stacking across $N$ participants, let
\[
Y=(Y_1^\top,\ldots,Y_N^\top)^\top,\qquad
\mu=(\mu_1^\top,\ldots,\mu_N^\top)^\top,
\]
and
\[
\Sigma=\operatorname{blockdiag}(\Sigma_1,\ldots,\Sigma_N),
\]
where $\operatorname{blockdiag}(\Sigma_1,\ldots,\Sigma_N)$ denotes the
block-diagonal matrix with subject-specific covariance matrices
$\Sigma_1,\ldots,\Sigma_N$ on the diagonal blocks and zero on the off-diagonal blocks.
%Stacking all $N$ participants yields an overall sample response vector $Y = (Y_1^\top, Y_2^\top, \ldots, Y_N^\top)^\top \in \mathbb{R}^{n}$, where $n = \sum_{i=1}^N T_i$ is the total number of observations in the study. Similarly, we define the stacked sample mean vector $\mu \in \mathbb{R}^n$:
%% \begin{equation}
%% \mu = \Big( (Z_1 m(x_1))^\top, (Z_2 m(x_2))^\top, \ldots, (Z_N m(x_N))^\top \Big)^\top,
%% \end{equation}
%{{blue}
%\begin{equation}
%\mu = \Big( (Z_1 m(x_1))^\top, (Z_2 m(x_2))^\top, \ldots, (Z_N m(x_N))^\top \Big)^\top=\Big(\mu_1^\top,\ldots, \mu_N^\top \Big)^\top,
%\end{equation}
%}
%and the block-diagonal overall covariance matrix $\Sigma \in \mathbb{R}^{n \times n}$:
%\begin{equation}
%\Sigma = \operatorname{blockdiag}(\Sigma_1, \Sigma_2, \ldots, \Sigma_N),
%\end{equation}
%{{blue}
%where $\operatorname{blockdiag}(\Sigma_1,\ldots,\Sigma_N)$ denotes the
%block-diagonal matrix with subject-specific covariance matrices
%$\Sigma_1,\ldots,\Sigma_N$ on the diagonal blocks and zero on the off-diagonal blocks.
%}

Because subjects are assumed independent, off-diagonal block structures equal zero matrices. The joint conditional distribution across the entire sample is:
\begin{equation}
Y \mid \{x_i\}_{i=1}^N, \{t_i\}_{i=1}^N \sim N(\mu, \Sigma).
\end{equation}
We parameterize the overall covariance structure as $\Sigma = \Sigma(\theta_0)$, where:
\begin{equation}
	\label{eq:theta0}
	\theta_0 = (\sigma^2, D_{11},D_{12},D_{21},D_{22})\in \mathbb{R}^5
\end{equation}
collects variance-covariance components governing measurement noise ($\sigma^2$) and latent growth heterogeneity ($D$).

\begin{remark}
	The Gaussian assumption is strictly imposed on the marginal error distribution $N(0, \Sigma)$. This parametric noise structure is mathematically necessary to construct the independent data fission perturbations \citep{tian2018selective,leiner2025data} that guarantee valid post-selection inference. However, we do not assume that the true data-generating process follows a sparse linear model, and leave the mean functions $m_1$ and $m_2$ completely unspecified. 
\end{remark}

%This nonparametric mean formulation naturally raises two sequential methodological challenges that motivate the remainder of this section. First, we must mathematically formalize a generic data-driven selection algorithm used to identify a parsimonious set of important covariates from a high-dimensional pool (Section 2.2).  Second, we must rigorously define the parameter of interest when fitting a linear mixed-effects working model to this selected subset, given that the underlying reality $m(x)$ is nonlinear (Section 2.3).

\subsection{Data-Driven Variable Selection}
\label{sec:data_driven_vs}
In high-dimensional longitudinal studies, exploratory analyses often perform data-driven variable selection to identify subsets of predictors associated with the developmental trajectory. Each candidate predictor $x_j (j=1, \ldots, p)$ may influence the intercept function $m_1(x)$, the slope function $m_2(x)$, or both. We index the candidate feature pool by $\{1, \ldots, p\}$ for the intercept and $\{p+1, \ldots, 2 p\}$ for the slope.

%Let $\mathcal {S} : R ^ \rightarrow 2^{\{1, \ldots, 2 p\}}$ denote a generic selection algorithm that maps the stacked response vector $Y$ to an active subset $M \subseteq\{1, \ldots, 2 p\}$. We can decompose $M$ into $M=$ $M_I \cup\left\{p+j: j \in M_S\right\}$, where $M_I\subseteq \{1, \ldots, p\}$ and $ M_S \subseteq\{1, \ldots, p\}$ represent the selected covariates for the intercept and slope, respectively.

Let $\mathcal{S} : \mathbb{R}^n \to 2^{\{1, \ldots, 2p\}}$ denote a generic selection algorithm that maps the overall observed response vector $Y \in \mathbb{R}^n$ to an active subset of selected features $M \subseteq \{1, \ldots, 2p\}$. We can decompose any selected subset into $M = M_I \cup \{p + j : j \in M_S\}$, where $M_I \subseteq \{1, \ldots, p\}$ and $M_S \subseteq \{1, \ldots, p\}$ denote the specific covariates selected for the intercept and slope, respectively.

A critical advantage of our subsequent inferential framework is its agnostic nature regarding the specific algorithmic form of $\mathcal{S}(\cdot)$. We do not impose restrictive structural assumptions—such as linearity or polyhedral selection constraints—on the selection procedure. We only require that for any candidate subset $M$, the selection event is measurable:
\begin{equation}
\label{eq:A_M}
\mathcal{A}_M \triangleq\left\{y \in R ^n: \mathcal {S} (y)=M\right\} .
\end{equation}

The collection $\left\{ \mathcal{A}_M: M \subseteq\{1, \ldots, 2 p\}\right\}$ forms a disjoint, measurable partition of the response space.  This flexible formulation seamlessly accommodates a wide array of modern feature screening tools utilized in longitudinal data analysis, including sparsity-inducing penalization methods (e.g., Group Lasso, \cite{yuan2006model}), hierarchical neural selectors (e.g., LassoNet \cite{lemhadri2021lassonet}), and tree-based mixed-effects algorithms (e.g., LMERtree \cite{fokkema2018detecting}). We fix internal algorithmic randomness (e.g., the random seed) when the algorithms involve stochastic components.

%For selection algorithms that involve stochastic components, such as random weight initialization in neural networks or random feature splitting in tree construction, we assume that any internal algorithmic randomness (e.g., the random seed) is fixed a priori. Conditioning on this fixed algorithmic state ensures that $\mathcal{S}(\cdot)$ remains a strictly deterministic and measurable mapping from the data space to the model space.

\subsection{Targets of Post-Selection Inference}
\label{sec:target}
Having defined the true model and the selection mechanism, we now rigorously specify the inferential target. Because the trajectory functions \(m_1(\cdot)\) and \(m_2(\cdot)\) are left unrestricted, the data-generating mechanism need not contain a “true” sparse linear coefficient vector.  This raises a fundamental question: after a model has been selected, what parameter should subsequent inference target?

In general, we use a linear mixed-effects model as a working approximation to the potentially nonlinear mean trajectory, with the resulting coefficients interpreted as linear projection parameters. Depending on how the covariates and observation times are treated, two related projection targets arise. First, we may condition on the observed covariates and observation times and treat the realized design as fixed. This leads to a finite-sample, design-conditional projection target that describes the best linear approximation to the conditional mean trajectory for the observed sample. Alternatively, we may regard the covariates and observation schedule as random and define the projection by averaging over their population distribution, leading to a population-level projection target. We consider both targets because they provide complementary interpretations of the selected working model.

\subsubsection{Design-Conditional Projection Target}
	Let
	\begin{equation}
		\label{eq:D_N}
		\mathcal{D}_N = \{x_i,t_i : i=1,\ldots,N\}
	\end{equation}
	denote the realized design, consisting of the baseline covariates and
	observation times for the $N$ participants. Consider a fixed candidate model
	\(
	M\subseteq\{1,\ldots,2p\},
	\)
	with corresponding intercept and slope index sets $M_I$ and $M_S$ defined
	in Section \ref{sec:data_driven_vs}. Throughout this subsection, $M$ is treated as fixed. The
	randomness induced by data-driven model selection is introduced subsequently.
	
	For subject $i$, let $x_{i,M_I}$ and $x_{i,M_S}$ denote the subvectors of
	baseline covariates indexed by $M_I$ and $M_S$, respectively. Define
	{
		\renewcommand{\arraystretch}{0.8}
	\begin{equation}
		\label{eq:C_M}
		C_M(x_i)
		=
		\begin{pmatrix}
			1 & x_{i,M_I}^\top & 0 & 0^\top\\
			0 & 0^\top & 1 & x_{i,M_S}^\top
		\end{pmatrix}.
	\end{equation}
}
	The corresponding longitudinal working design matrix is
	\[
	X_{i,M}=Z_iC_M(x_i)
	\in
	\mathbb{R}^{T_i\times(|M_I|+|M_S|+2)}.
	\]
%For subject $i$, let $x_{i, M_I} \in \mathbb{R}^{|M_I|}$ and $x_{i, M_S} \in \mathbb{R}^{|M_S|}$ denote the subvectors of the selected baseline predictors. The corresponding longitudinal design matrix (including global intercept and slope) for subject $i$ is defined as:
%\[ X_{i, M} = Z_i \begin{pmatrix} 1 & x_{i, M_I}^\top& 0  & 0^\top \\ 0 & 0^\top & 1 & x_{i, M_S}^\top \end{pmatrix} \in \mathbb{R}^{T_i \times (|M_I| + |M_S| + 2)}. \]
The columns correspond, in order, to the global intercept, the selected intercept predictors, the global time slope, and the selected covariate-by-time interaction terms.
Stacking the subject-specific matrices yields the full working design matrix:
\[ X_M = \begin{pmatrix} X_{1, M}^\top & \cdots & X_{N, M}^\top \end{pmatrix}^\top. \]
We restrict attention to candidate models for which $X_M$ has full
column rank.

Accounting for the true within-subject covariance structure $\Sigma$, we define the finite-sample post-selection inferential target $\beta_{M}^D = (\alpha_I^D, (\beta_{M_I}^D)^\top,\alpha_S^D, (\beta_{M_S}^D)^\top)$ as the best linear approximation of the true conditional mean $\mu$ within the selected model:
\begin{equation}
\label{eq:beta_D}
     \beta_M^D \triangleq \arg\min_{\beta \in \mathbb{R}^{|M_I|+|M_S|+2}} \mathbb{E}\left[ (Y - X_M \beta)^\top \Sigma^{-1} (Y - X_M \beta) \mid \mathcal{D}_N \right] = (X_M^\top \Sigma^{-1} X_M)^{-1} X_M^\top \Sigma^{-1} \mu.
\end{equation}
%where $\alpha_I^D$ denotes the overall trajectory level at $t=0$, $\alpha_S^D$ denotes the overall time slope, and $\beta_{M_I}^D$ and $\beta_{M_S}^D$ are the coefficients associated with the selected intercept and slope predictors.

For a fixed candidate model \(M\) and realized design \(D_N\), \(\beta_M^D\) is fixed. Once the model is selected data-adaptively, however, the corresponding post-selection target is random because its identity and dimension depend on the selected model. The following examples illustrate the relationship between the true underlying data-generating process and the selected working model.

\begin{example}[Correct Variable Selection] 
\label{examp:true}
To explicitly illustrate the construction of the finite-sample inferential target and the stacked design matrix, we consider a simplified longitudinal design with four participants ($N=4$). Each participant is measured at two identical time points, $t = (0, 1)^\top$. For simplicity, we assume that the marginal covariance matrix is $\Sigma = \sigma^2 I_8$.
Assume that the baseline covariates $x_i = (x_{i1}, x_{i2})^\top$ for the four participants are strictly mean-centered:
$x_1 = (-1, -2)^\top$, $x_2 = (-1, 0)^\top$, $x_3 = (0, 1)^\top$, and $x_4 = (2, 1)^\top$.
Suppose the true data-generating mechanism is governed by $m_1(x_1, x_2) = 1 + x_1$ and $m_2(x_1, x_2) = 2 + x_2$, yielding the outcome $Y_i(t) = (1+x_{i1}) + (2+x_{i2})t+ \epsilon(t)$. Evaluating this true mean trajectory for all participants across their respective time points yields the $8 \times 1$ stacked mean vector $\mu$:
\[ \mu_1 = (y_{10}, y_{11}, y_{20}, y_{21}, y_{30}, y_{31}, y_{40}, y_{41})^\top = (0, 0, 0, 2, 1, 4, 3, 6)^\top. \]

We first illustrate the case where the selection procedure correctly identifies $x_1$ for the intercept and $x_2$ for the slope. The selected model $M$ has $M_I = \{1\}$ and $M_S = \{2\}$. The working design matrix $X_M$ is constructed by stacking the subject-specific matrices. For instance, the sub-matrix for the first subject is:
{
	\renewcommand{\arraystretch}{0.8}
\[ X_{M1} = \begin{pmatrix} 1 & 0 \\ 1 & 1 \end{pmatrix} \begin{pmatrix} 1 & -1&0  & 0 \\ 0 & 0 & 1 & -2 \end{pmatrix} = \begin{pmatrix} 1 & -1 & 0 & 0 \\ 1 & -1 & 1 & -2 \end{pmatrix}. \]
}
Deriving $X_{M2}$, $X_{M3}$, and $X_{M4}$ analogously yields the full design matrix:
{
	\renewcommand{\arraystretch}{0.8}
	\[ X_M = \begin{pmatrix} 1 & -1 & 0 & 0 \\ 1 & -1 & 1 & -2 \\ 1 & -1 & 0 & 0 \\ 1 & -1 & 1 & 0 \\ 1 & 0 & 0 & 0 \\ 1 & 0 & 1 & 1 \\ 1 & 2 & 0 & 0 \\ 1 & 2 & 1 & 1 \end{pmatrix}. \]
}
When $\Sigma = \sigma^2 I_8$, the covariance terms cancel out, and the finite-sample inferential target is exactly computed as:
\[ \beta_{M1}^D = (X_M^\top X_M)^{-1} X_M^\top \mu_1 = (1, 1, 2, 1)^\top. \]
Because the selected working model correctly and fully spans the true mean function, the projection parameter $\beta_M^D$ exactly recovers the true coefficients: a global intercept of 1, a global slope of 2, an $x_1$ effect of 1, and a time-varying $x_2$ effect of 1. 
\end{example}

\begin{example}[Incorrect Variable Selection]
Now we consider a realistic scenario where the selection algorithm can be erroneous. Suppose it incorrectly selects $x_2$ for the intercept and $x_1$ for the slope. The new selected model $M'$ has  $M_I' = \{2\}$ and $M_S' = \{1\}$. 
%For the first subject, the design matrix is:
%\[ X_{M'1} = \begin{pmatrix} 1 & 0 \\ 1 & 1 \end{pmatrix} \begin{pmatrix} 1 & -2 & 0 & 0 \\ 0 & 0 & 1 & -1 \end{pmatrix} = \begin{pmatrix} 1 & -2 & 0 & 0 \\ 1 & -2 & 1 & -1 \end{pmatrix}. \]
%The resulting stacked design matrix is: $X_{M'}=(X_{M'1}^\top,X_{M'2}^\top,X_{M'3}^\top,X_{M'4}^\top)^\top$
%\[ X_{M'} = \begin{pmatrix} 1 & -2 & 0 & 0 \\ 1 & -2 & 1 & -1 \\ 1 & 0 & 0 & 0 \\ 1 & 0 & 1 & -1 \\ 1 & 1 & 0 & 0 \\ 1 & 1 & 1 & 0 \\ 1 & 1 & 0 & 0 \\ 1 & 1 & 1 & 2 \end{pmatrix}. \]
The corresponding design-conditional projection is therefore:
\[ \beta_{M'1}^D = (X_{M'}^\top X_{M'})^{-1} X_{M'}^\top \mu = \begin{pmatrix} 1  & 11/14 & 2 & 8/7 \end{pmatrix}^\top \approx \begin{pmatrix} 1.00  & 0.79 & 2.00 & 1.14 \end{pmatrix}^\top. \]
These coefficients should not be interpreted as biased estimates of the original true parameters. Rather, they answer a different question: given the incorrectly selected working model, what linear combination of the available columns best approximates the true mean trajectory? 
For example, the true intercept function depends on $x_1$, but the selected intercept model contains only $x_2$. Because $x_1$ and $x_2$ are not orthogonal in the realized four-subject design ($\sum_{i=1}^4 x_{i1} x_{i2} \neq 0$), the selected $x_2$ column can partially approximate the non-selected $x_1$ contribution. 

This highlights an essential feature of the proposed framework. Valid post-selection inference does not require the selection procedure to identify the correct model. Conditional on the selected model, the inferential target is simply redefined as the best linear projection onto the realized subspace. Even when the selection algorithm can be erroneous, the data fission procedure guarantees that the constructed $95\%$ confidence intervals will maintain exact nominal coverage for these specific projection coefficients $(1, 11/14,2, 8/7)^\top$.
\end{example}

\begin{example}[Nonlinear Mean Function]
The projection framework is particularly indispensable when the true mean trajectory is nonlinear and no exact linear coefficient vector exists. Using the setting as in example \ref{examp:true} with four subjects and true model $M_I = \{1\}$ and $M_S = \{2\}$.
However, suppose the true generative functions are nonlinear: $m_1(x_1, x_2) = x_1^2$ and $m_2(x_1, x_2) = x_2$. The true mean trajectory is therefore $Y_i(t) = x_{i1}^2 + t x_{i2} + \epsilon_i(t)$. Evaluating this yields the corresponding stacked conditional mean:
\[ \mu_2 = (1, -1, 1, 1, 0, 1, 4, 5)^\top. \]
Although the selected covariates correspond to those that truly enter the two trajectory components, the intercept function $x_1^2$ is nonlinear and cannot be represented exactly by the linear working term $x_1$. We obtain the new target:
\[ \beta^D_{M2} = (X_M^\top X_M)^{-1} X_M^\top \mu_2 = \begin{pmatrix} 3/2 & 8/7 & 0  & 4/7 \end{pmatrix}^\top. \]
This example illustrates two profound points. First, the projection remains perfectly well-defined even though the true intercept function is nonlinear. The coefficient $8/7$ is simply the linear coefficient that, together with the remaining working-model terms, best approximates the structural impact of $x_1^2$ over the four observed covariate values. Second, it demonstrates that in the finite-sample setting, structural effects can leak across terms due to non-orthogonality. Even though the true slope function is exactly $m_2(x) = x_2$, the projected slope coefficient for $x_2$ is $4/7$ rather than $1$, as the linear working model jointly optimizes all parameters to best fit the entire misspecified trajectory.
\end{example}

\subsubsection{Population-Level Projection Target}
For the population-level analysis, we regard the baseline covariates as random and use \(X_i\) to denote the random covariate vector whose observed realization is \(x_i\).
To facilitate the interpretation of the population projection, we assume that
the observation process is non-informative with respect to the baseline
covariates. 

%Recall the semiparametric growth-curve model
%\[
%Y_i
%=
%Z_i m(X_i) + u_i,
%\qquad
%m(X_i)
%=
%\begin{pmatrix}
%m_1(X_i)\\
%m_2(X_i)
%\end{pmatrix},
%\]
%where $u_i$ is random with
%\[
%\mathbb{E}(u_i\mid X_i,Z_i)=0,
%\qquad
%\operatorname{Var}(u_i\mid X_i,Z_i)=\Sigma_i.
%\]
For a fixed candidate model
$M\subseteq\{1,\ldots,2p\}$, with corresponding index sets
$M_I$ and $M_S$,
%define
%\begin{equation}
%\label{eq:C_M}
%C_M(X_i)
%=
%\begin{pmatrix}
%1 & X_{i,M_I}^{\top}  & 0& 0^\top\\
%0 & 0^\top & 1 & X_{i,M_S}^{\top}
%\end{pmatrix},
%\end{equation}
%and 
let
\(
\beta_M^P
=
\begin{pmatrix}
\alpha_I^P&
\beta_I^P&
\alpha_S^P&
\beta_S^P
\end{pmatrix}^\top,
\)
where
\(
\beta_I^P\in\mathbb{R}^{|M_I|},
\) and \(
\beta_S^P\in\mathbb{R}^{|M_S|}.
\)
%The corresponding working approximation to the two trajectory %components is
%is
%\[
%C_M(X_i)\beta_M^P
%=
%\begin{pmatrix}
%\alpha_I^P + X_{i,M_I}^{\top}\beta_I^P\\
%\alpha_S^P + X_{i,M_S}^{\top}\beta_S^P
%\end{pmatrix}.
%\]
%When model is selected, w
We define the population projection target as
\begin{equation}
\label{eq:beta_pop}
\beta_M^P
=
\arg\min_{\beta}
\mathbb{E}
\left[
\{Y_i-Z_iC_M(X_i)\beta\}^{\top}
\Sigma_i^{-1}
\{Y_i-Z_iC_M(X_i)\beta\}
\right],
\end{equation}
where $C_M$ is defined in \eqref{eq:C_M}, and $Z_i$ is defined in \eqref{eq:Z_i}.

\begin{theorem}
\label{thm:beta_est}Let \(X\) denote a generic random covariate vector having the same distribution as \(X_i\) with 
\(
\mathbb{E}(X_i)=0.
\)
Consider a fixed candidate model
\(M\subseteq\{1,\ldots,2p\}\), with corresponding
intercept and slope index sets \(M_I\) and \(M_S\).
Let
\(
\Gamma_{II}
=
\operatorname{Var}(X_{M_I})\), 
\(
\Gamma_{SS}
=
\operatorname{Var}(X_{M_S}),\)
\(\Gamma_{IS}
=
\operatorname{Cov}(X_{M_I},X_{M_S}),\) and
\(
\Gamma_{SI}
=
\Gamma_{IS}^{\top}.
\)
Let
\(
A_i
=
Z_i^\top \Sigma_i^{-1} Z_i,
\)
and suppose that
{
	\renewcommand{\arraystretch}{0.8}
\[
\mathbb{E}(A_i\mid X_i)
=
A_0
=
\begin{pmatrix}
a & b\\
b & c
\end{pmatrix}
\]
}
is positive definite.
The $\beta_M^P$ defined in (\ref{eq:beta_pop}) can be calculated as
\[
\alpha_I^P
=
\mathbb{E}\{m_1(X)\},
\qquad
\alpha_S^P
=
\mathbb{E}\{m_2(X)\},
\]
and
{
	\renewcommand{\arraystretch}{0.8}
\[
\begin{pmatrix}
\beta_I^P\\
\beta_S^P
\end{pmatrix}
=
\begin{pmatrix}
a\Gamma_{II} & b\Gamma_{IS}\\
b\Gamma_{SI} & c\Gamma_{SS}
\end{pmatrix}^{-1}
\begin{pmatrix}
a\,\operatorname{Cov}\{X_{M_I},m_1(X)\}
+
b\,\operatorname{Cov}\{X_{M_I},m_2(X)\}
\\[2mm]
b\,\operatorname{Cov}\{X_{M_S},m_1(X)\}
+
c\,\operatorname{Cov}\{X_{M_S},m_2(X)\}
\end{pmatrix}.
\]
}
Furthermore, if we select the same covariates $J\subset\{1,\ldots,p\}$ in both intercept term and slope term,
\(
M_I=M_S=J,
\)
then the joint longitudinal projection separates into the ordinary population
linear projections
\(
\beta_I^P
=
\Gamma_{JJ}^{-1}
\operatorname{Cov}\{X_J,m_1(X)\},
\text{ and }
\beta_S^P
=
\Gamma_{JJ}^{-1}
\operatorname{Cov}\{X_J,m_2(X)\}.
\)
\end{theorem}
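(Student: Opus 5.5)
The plan is to write down the normal equations of the population least-squares problem \eqref{eq:beta_pop} and solve them in block form. The objective is a convex quadratic in $\beta$. Setting its gradient to zero gives
\[
\mathbb{E}\{C_M(X_i)^\top Z_i^\top\Sigma_i^{-1}Z_iC_M(X_i)\}\,\beta
=\mathbb{E}\{C_M(X_i)^\top Z_i^\top\Sigma_i^{-1}Y_i\}.
\]
On the right-hand side I first condition on $(X_i,t_i)$. Using $\mathbb{E}(Y_i\mid X_i,t_i)=Z_im(X_i)$, the right-hand side becomes $\mathbb{E}\{C_M(X_i)^\top A_i m(X_i)\}$. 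Since $C_M(X_i)$ and $m(X_i)$ are functions of $X_i$ only, iterated expectation together with $\mathbb{E}(A_i\mid X_i)=A_0$ reduces both sides to expectations in which $A_i$ is replaced by the constant $A_0$:
\[
\mathbb{E}\{C_M(X)^\top A_0 C_M(X)\}\beta=\mathbb{E}\{C_M(X)^\top A_0 m(X)\}.
\]
This step is exactly where the non-informative observation schedule assumption is used.

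Next I expand the blocks explicitly. Write $u=(1,X_{M_I}^\top)^\top$ and $v=(1,X_{M_S}^\top)^\top$, so that $C_M(X)=\operatorname{blockdiag}(u^\top,v^\top)$. Then
\[
C_M^\top A_0C_M=\begin{pmatrix}a\,uu^\top & b\,uv^\top\\ b\,vu^\top & c\,vv^\top\end{pmatrix},
\qquad
C_M^\top A_0 m=\begin{pmatrix}u(a m_1+b m_2)\\ v(b m_1+c m_2)\end{pmatrix}.
\]
Because $\mathbb{E}X=0$, we have $\mathbb{E}(uu^\top)=\operatorname{blockdiag}(1,\Gamma_{II})$, and similarly for the other blocks. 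After reordering the unknowns as $(\alpha_I,\alpha_S)$ and $(\beta_I,\beta_S)$, the system decouples into two parts.

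The first part is the $2\times 2$ system
\[
A_0(\alpha_I,\alpha_S)^\top=A_0(\mathbb{E}m_1,\mathbb{E}m_2)^\top.
\]
Since $A_0$ is positive definite, this gives $\alpha_I^P=\mathbb{E}m_1(X)$ and $\alpha_S^P=\mathbb{E}m_2(X)$.

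The second part is the slope-coefficient system, whose matrix is the stated block matrix with entries $a\Gamma_{II}$, $b\Gamma_{IS}$, $b\Gamma_{SI}$, $c\Gamma_{SS}$. Its right-hand side is $\mathbb{E}\{X_{M_I}(am_1+bm_2)\}$ in the first block and the analogous expression in the second. Because $\mathbb{E}X=0$, each of these equals the corresponding covariance, which yields the displayed formula.

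The main obstacle is justifying that this block matrix is invertible; it is also the reason the original normal matrix is nonsingular. I would write it as $\mathbb{E}\{W^\top A_0 W\}$ with $W=\operatorname{blockdiag}(X_{M_I}^\top,X_{M_S}^\top)$. For any nonzero $\gamma$, the quadratic form $\gamma^\top\mathbb{E}(W^\top A_0W)\gamma$ is nonnegative, and it vanishes only if $W\gamma=0$ almost surely. That in turn requires $\Gamma_{II}$ and $\Gamma_{SS}$ to be singular. So I would state nondegeneracy of $\Gamma_{II}$ and $\Gamma_{SS}$ as the implicit full-rank condition paralleling the full-column-rank assumption on $X_M$, and under it positive definiteness follows.

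For the special case $M_I=M_S=J$, all four blocks of $\Gamma$ equal $\Gamma_{JJ}$. The system matrix is then $A_0\otimes\Gamma_{JJ}$, whose inverse is $A_0^{-1}\otimes\Gamma_{JJ}^{-1}$. The right-hand side is $(A_0\otimes I)\,(\operatorname{Cov}(X_J,m_1)^\top,\operatorname{Cov}(X_J,m_2)^\top)^\top$. Applying the inverse and cancelling $A_0$ gives the two separate ordinary projections $\beta_I^P=\Gamma_{JJ}^{-1}\operatorname{Cov}\{X_J,m_1(X)\}$ and $\beta_S^P=\Gamma_{JJ}^{-1}\operatorname{Cov}\{X_J,m_2(X)\}$.
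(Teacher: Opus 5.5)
Your proposal is correct and follows essentially the same route as the paper: reduce the normal equations via $\mathbb{E}(A_i\mid X_i)=A_0$, use $\mathbb{E}X=0$ to decouple the $(\alpha_I,\alpha_S)$ system from the $(\beta_I,\beta_S)$ system, and handle $M_I=M_S=J$ through the Kronecker structure. The paper writes $(A_0\otimes I_{|J|})(\Gamma_{JJ}\beta_I^P-q_1,\,\Gamma_{JJ}\beta_S^P-q_2)^\top=0$ rather than inverting $A_0\otimes\Gamma_{JJ}$, which is equivalent. Your explicit invertibility argument is a useful addition that the paper omits. One correction to its wording: $W\gamma=0$ almost surely with $\gamma\neq0$ only requires \emph{at least one} of $\Gamma_{II},\Gamma_{SS}$ to be singular, not both, so the right sufficient condition is the one you ultimately adopt, namely that both are nonsingular.
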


Theorem \ref{thm:beta_est} highlights an important distinction between \(M_I=M_S\) and \(M_I\neq M_S\). When \(M_I=M_S=J\), the joint longitudinal projection separates into two ordinary population linear projections. %: \(\beta_I^P\) depends only on \(\operatorname{Cov}\{X_J,m_1(X)\}\), while \(\beta_S^P\) depends only on \(\operatorname{Cov}\{X_J,m_2(X)\}\). 
In contrast, when \(M_I\neq M_S\), this separation generally no longer holds, and each coefficient may depend on covariance terms involving both \(m_1(X)\) and \(m_2(X)\).

	\begin{remark}[Average derivative interpretation]
		Suppose that the same covariate set is used for the intercept and slope
		components, $M_I=M_S=J$. Then, for $k=1,2$,
		\[
		\beta_{k,J}^P
		=
		\Gamma_{JJ}^{-1}
		\operatorname{Cov}\{X_J,m_k(X)\}.
		\]
		Define the reduced conditional mean function
		\(
		m_{k,J}(x_J)
		=
		\mathbb{E}\{m_k(X)\mid X_J=x_J\}.
		\)
		By the law of iterated expectations,
		\(
		\operatorname{Cov}\{X_J,m_k(X)\}
		=
		\operatorname{Cov}\{X_J,m_{k,J}(X_J)\}.
		\)
		If $X_J$ follows a multivariate Gaussian distribution and $m_{k,J}$ is
		sufficiently smooth, Stein's identity \citep{stein1981estimation} gives
		\[
		\beta_{k,J}^P
		=
		\mathbb{E}\{\nabla m_{k,J}(X_J)\}.
		\]
		Thus, $\beta_{k,J}^P$ can be interpreted as the average local change in the
		$k$th growth parameter with respect to the selected covariates, after
		averaging over the non-selected covariates conditional on $X_J$
		\citep{chernozhukov2021automatic}.
		
%		When all covariates are selected, $m_{k,J}=m_k$, and the result reduces to
%		\(
%		\beta_k^P
%		=
%		\mathbb{E}\{\nabla m_k(X)\}.
%		\)
%		If the selected and non-selected covariates are independent, the same
%		average-derivative interpretation applies to the selected coordinates after
%		averaging over the non-selected covariates. When they are correlated,
%		however, changes in $X_J$ also change the conditional distribution of the
%		non-selected covariates, so the projection coefficient reflects both direct
%		and dependence-induced changes.
	\end{remark}

\begin{example}[Correct Variable Selection]
Suppose
\(
m_1(x_1,x_2)=1+x_1,
\) and \(
m_2(x_1,x_2)=2+x_2.
\)
Assume that $(X_1,X_2)^\top$ follows a standard bivariate normal distribution
with correlation $\rho$.
%\[
%\begin{pmatrix}
%X_1\\
%X_2
%\end{pmatrix}
%\sim
%N\left(
%\begin{pmatrix}
%0\\
%0
%\end{pmatrix},
%\begin{pmatrix}
%1 & \rho\\
%\rho & 1
%\end{pmatrix}
%\right).
%\]
Suppose that the selected model correctly assigns
\(
M_I=\{1\},
\) and \(M_S=\{2\}.\)
By Theorem \ref{thm:beta_est},
\(
\alpha_{I1}^P=1,\) and \(\alpha_{S1}^P=2.\) Moreover,
{
	\renewcommand{\arraystretch}{0.8}
\[
\begin{pmatrix}
\beta_{I1}^P\\
\beta_{S1}^P
\end{pmatrix}
=
\begin{pmatrix}
a & b\rho\\
b\rho & c
\end{pmatrix}^{-1}
\begin{pmatrix}
a+b\rho\\
b\rho+c
\end{pmatrix}=\begin{pmatrix}
1\\
1
\end{pmatrix}.
\]
}
Therefore,
\(
\beta_{M1}^P
=
(1,1,2,1)^\top.
\)
In this example, the selected working model contains the true conditional mean
exactly. Hence the population projection recovers the structural coefficients,
regardless of the correlation $\rho$ and particular values of $a$, $b$, and $c$.
\end{example}

\begin{example}[Incorrect Variable Selection]
Consider the same data-generating mechanism, but suppose that the selected
variables are assigned to the wrong trajectory components:
\(
M_I'=\{2\},
\)
and
\(
M_S'=\{1\}.
\)
Because the covariates are centered,
\(
\alpha_{I'1}^P=1,
\) and \(
\alpha_{S'1}^P=2.
\)
The remaining coefficients 
%satisfy
%\[
%\begin{pmatrix}
%\beta_{I'1}^P\\
%\beta_{S'1}^P
%\end{pmatrix}
%=
%\begin{pmatrix}
%a & b\rho\\
%b\rho & c
%\end{pmatrix}^{-1}
%\begin{pmatrix}
%a\rho+b\\
%b+c\rho
%\end{pmatrix},
%\]
%which 
are
\[
\beta_{I'1}^P
=
\frac{
\rho(ac-b^2)+bc(1-\rho^2)
}{
ac-b^2\rho^2
},\quad\text{and}\quad
\beta_{S'1}^P
=
\frac{
\rho(ac-b^2)+ab(1-\rho^2)
}{
ac-b^2\rho^2
}.
\]
If \(b=0,\)
we have
\(
\beta_{I'1}^P=\beta_{S'1}^P=\rho.
\)
In this case, each coefficient reduces to the ordinary population projection
induced by the correlation between $X_1$ and $X_2$.
In contrast, if \( \rho=0, \)
\(
\beta_{I'1}^P=\frac{b}{a},\) and \(
\beta_{S'1}^P=\frac{b}{c}.
\)
Thus, even when $X_1$ and $X_2$ are independent, the projection coefficients
are not zero when $b\neq0$. This is because in this example, $M_I'\ne M_{S}'$. $\beta_{I'1}^P$, the coefficient of the intercept, depends on the covariance terms involving $m_2(X)$. $\beta_{S'1}^P$, the coefficient of slope, depends on the covariance terms involving $m_1(X)$. 
This example illustrates that misspecification in a longitudinal working model
can propagate through two distinct mechanisms: correlation among baseline
covariates and coupling between the intercept and slope components.
\end{example}

\begin{example}[Nonlinear Mean Function]

Finally, suppose
\(
m_1(x_1,x_2)=x_1^2,
\) and \(
m_2(x_1,x_2)=x_2,
\)
where $(X_1,X_2)^\top$ follows a standard bivariate normal distribution
with correlation $\rho$.
Suppose
\(
M_I=\{1\},
\)
and
\(
M_S=\{2\}.
\)
The global trajectory components are
\(
\alpha_{I2}^P
=
\mathbb{E}(X_1^2)
=
1,\)
and
\(
\alpha_{S2}^P
=
\mathbb{E}(X_2)
=
0.
\)
As we have
\(
\operatorname{Cov}(X_1,X_1^2)
=
\mathbb{E}(X_1^3)
=
0,
\)
and
\(
\operatorname{Cov}(X_2,X_1^2)
=
\mathbb{E}(X_2X_1^2)
=
0.
\)
%Therefore,
%\begin{align*}
%\begin{pmatrix}
%\beta_{I2}^P\\
%\beta_{S2}^P
%\end{pmatrix}
%&=
%\begin{pmatrix}
%a & b\rho\\
%b\rho & c
%\end{pmatrix}^{-1}
%\begin{pmatrix}
%b\rho\\
%c
%\end{pmatrix}\
%=
%\begin{pmatrix}
%0\\
%1
%\end{pmatrix}.
%\end{align*}
Therefore, Theorem~\ref{thm:beta_est} gives
$\beta_I^P=0$ and $\beta_S^P=1$.
This example emphasizes that variable selection and projection inference answer
different questions. A variable may be important to a nonlinear trajectory and
therefore be correctly selected, while its subsequent linear projection
coefficient is zero.
\end{example}

\section{Selective Inference via Data Fission with A Known Covariance Matrix}
\label{sec:known_cov}
\subsection{The Challenge of Post-Selection Inference}
%In applied research, a commonly adopted workflow is to apply a variable selection method (e.g., the LASSO) to the full dataset $Y$ to identify an active set $M$, and then, without further adjustment, estimate parameters and construct confidence intervals as if the selected model had been specified in advance. This %\text{red}{french word add dotdot}
%\naive approach ignores a fundamental statistical issue: both the selection and inference stages are conducted on the same data and are therefore statistically dependent \citep{berk2013valid}.

Classical inference typically treats the working model as fixed conditional on the realized design, rather than selected using the response \(Y\). For example, the covariates may be selected by domain experts based only on their substantive knowledge, without using the observed response \(Y\). However, when the selected model $M=\mathcal{S}(Y)$ is determined by the
response, conditioning on the selection event
\(
Y\in\mathcal{A}_M
\)
generally changes the sampling distribution of the subsequent estimator.
Therefore, the effect of the data-driven selection event
$\mathcal{A}_M$ cannot be ignored in subsequent inference.

% Because the selected model $M = \mathcal{S}(Y)$ is a random outcome driven by the response, the chosen design matrix $X_M$ and the response $Y$ are intrinsically correlated. \text{red}{not clear} Classical inference relies on the unconditioned response space, which yields a symmetric Gaussian sampling distribution for the generalized least squares (GLS) estimator. \text{red}{do you mean classic inference incorporating seletion or not? where does the generalized least square come from?} However, conditioning on the data-driven selection event $\mathcal{A}_M$ severely restricts the response space. 
For instance, in LASSO-based selection, the event $\{Y \in \mathcal{A}_M\}$ analytically corresponds to $Y$ falling into a specific polyhedral region \citep{lee2016exact}. Consequently, the conditional sampling distribution of the \naive estimator is no longer normal. %For example, under Gaussian errors, conditioning on a LASSO selection event induces a truncated Gaussian distribution for relevant linear contrasts.

Ignoring this truncation and dependence leads to severe inferential bias. As a result, effect sizes are typically overestimated, and nominal confidence intervals tend to be invalid, leading to misleading scientific conclusions \citep{berk2013valid}. This breakdown of classical normality directly highlights why decoupling the selection and inference stages is strictly necessary. %To restore valid inference, one must evaluate the selected model using a \text{red}{why do you need a continuous response vector? can it be discrete?} continuous response vector that is statistically independent of the selection process itself.
To restore valid inference, one can evaluate the selected model using a response vector that is statistically independent of the selection process itself.

\subsection{The Data Fission Construction}

To circumvent the intractable conditional distribution discussed above, one must evaluate the selected model using a response vector that is statistically independent of the selection process. The traditional remedy is data splitting \citep{cox1975note}, where the subjects are randomly partitioned into a selection set and an inference set. While statistically valid, this approach drastically reduces the effective sample size, reducing valuable statistical power that is critical in high-dimensional longitudinal studies.

Data fission \citep{leiner2025data} offers an elegant alternative: rather than splitting the subjects (the rows of the data), we continuously split the information within the response vector itself. % From an information-theoretic perspective, data fission overcomes the fundamental inefficiency of sample splitting by achieving an exact decomposition of Fisher information without partitioning the \(N\) participants into separate selection and inference samples. 
In this section, we assume that the true structural covariance parameter $\theta_0$ is known, meaning the marginal covariance matrix $\Sigma = \Sigma(\theta_0)$ is fully specified. This serves as the theoretical foundation for the empirical procedure with estimated covariance developed in Section 4.

To create independent data splits, we introduce an auxiliary random noise vector $W$, generated independently of the observed data, such that $W \sim N(0,\Sigma)$ \citep{leiner2025data}. Crucially, the external noise is engineered to mirror the exact within-subject correlation structure of the original data. Given a fixed scalar $\tau > 0$ that governs the signal-to-noise allocation for the selection step and the inference step, we construct two perturbed response vectors:\[Y^{sel} = Y + \tau W, \qquad Y^{inf} = Y - \tau^{-1}W.\]

This additive perturbation disentangles the selection and inference stages while preserving the full cohort structure. Because $Y$ and $W$ are independent multivariate normal vectors, any linear combination of them remains jointly Gaussian. A direct calculation of their cross-covariance yields:
%\begin{align*}
%	\operatorname{Cov}(Y^{sel}, Y^{inf}) &= \operatorname{Cov}(Y + \tau W, Y - \tau^{-1}W) \nonumber \\&= \operatorname{Cov}(Y, Y) - \tau^{-1}\operatorname{Cov}(Y, W) + \tau \operatorname{Cov}(W, Y) - \tau \tau^{-1} \operatorname{Cov}(W, W) \nonumber \\&= \Sigma - 0 + 0 - \Sigma = 0.
%\end{align*}
\[
\operatorname{Cov}(Y^{sel}, Y^{inf}) = \operatorname{Cov}(Y + \tau W, Y - \tau^{-1}W)=0
\]
In the multivariate Gaussian family, zero covariance directly implies exact statistical independence. Therefore, $Y^{sel}$ and $Y^{inf}$ provide two independent randomized responses of the longitudinal outcome. The tuning parameter $\tau$ serves as an information-theoretic trade-off mechanism between the two stages: a smaller $\tau$ injects less noise into $Y^{sel}$, thereby allocating more Fisher information to the variable selection procedure $\mathcal{S}(Y^{sel})$, but consequently inflates the variance of $Y^{inf}$, yielding wider confidence intervals during inference. Conversely, a larger $\tau$ sacrifices selection power to achieve tighter post-selection inferential precision. Regardless of the choice of $\tau$, conditional on the realized design, the two randomized responses are %exactly 
independent.

\subsection{Valid Post-Selection Inference}
We utilize the first perturbed response, $Y^{sel}$, exclusively to execute the variable selection procedure. The selected model is thus strictly a function of $Y^{sel}$, given by $M = \mathcal{S}(Y^{sel})$.
For a fixed candidate model $M$, we condition on the event that $M$ is selected, and construct the target design matrix $X_M$ as defined in Section \ref{sec:target}. All subsequent inference regarding the design-conditional projection target \(\beta_M^D\) is conducted using the independent inference response \(Y^{inf}\). We define the corresponding post-selection estimator as \(\beta_M(Y^{inf})\), where
\begin{equation}
	\label{eq:beta_hat}
	{\beta}_M(Y^{inf}) = (X_M^\top \Sigma^{-1} X_M)^{-1} X_M^\top \Sigma^{-1} Y^{inf}.
\end{equation}
Because the estimator ${\beta}_M(Y^{inf})$ depends on the original data only through $Y^{inf}$, and the selection event depends on the data only through $Y^{sel}$, the estimator and the selection event are independent. Consequently, conditioning on the selection event $\mathcal{A}_M$ does not alter the sampling distribution of the estimator:
\begin{equation*}
	{\beta}_M(Y^{inf}) \mid \{Y^{sel} \in \mathcal{A}_M\} \stackrel{d}{=} {\beta}_M(Y^{inf}).
\end{equation*}
Since $Y^{inf} \sim N(\mu, (1+\tau^{-2})\Sigma)$, the exact distribution of our estimator is straightforward to establish. For any dynamically selected model $M$, the conditional distribution of the estimator centers exactly on the design-conditional projection target \(\beta_M^D\):
\begin{equation}
    \label{eq:inf_known}
	{\beta}_M(Y^{inf}) \mid \{Y^{sel} \in \mathcal{A}_M\} \sim N\left(\beta_M^D, (1+\tau^{-2})(X_M^\top \Sigma^{-1} X_M)^{-1}\right).
\end{equation}
This tractability enables the immediate construction of %exact 
hypothesis tests and confidence intervals. For the $j$-th coefficient in the selected working model $M$, let $\nu_j^\top$ be the $j$-th row of $(X_M^\top \Sigma^{-1} X_M)^{-1} X_M^\top \Sigma^{-1}$. The post-selection $(1-\alpha)$ confidence interval for the design-conditional projection coefficient $\beta_{M, j}^D$ is naturally given by:
\begin{equation*}
	{\beta}_{M, j}(Y^{inf}) \pm z_{1-\alpha/2} \sqrt{(1+\tau^{-2}) \nu_j^\top \Sigma \nu_j},
\end{equation*}
where $z_{1-\alpha/2}$ is the $(1-\alpha/2)$ quantile of the standard normal distribution. This guarantees exact nominal coverage conditionally on the chosen model, effectively eliminating the selection bias.
\section{Selective Inference with Estimated Covariance}

In practical applications, the true variance components $\theta_0$ are rarely known and must be estimated from the observed data. In this section, we extend the selective inference framework developed in Section 3 to accommodate an unknown covariance structure and establish the asymptotic validity of replacing the true covariance matrix
\(
\Sigma_0=\Sigma(\theta_0)
\)
with its empirical estimate
\(
\hat\Sigma=\Sigma(\hat\theta).
\)
For notational simplicity, we set $\tau=1$ throughout this section and the corresponding proofs. The same argument can be extended to any fixed $\tau>0$.

\subsection{Empirical Data Fission Procedure}

Let $\hat\theta$ be a consistent estimator of the variance components $\theta_0$, yielding the estimated marginal covariance matrix $\hat\Sigma$. Because $\hat\Sigma$ is estimated from the observed response $Y$, we write $\hat\Sigma(Y)$ when it is useful to emphasize this dependence.
Mirroring the oracle procedure, conditional on $Y$, we generate an auxiliary
Gaussian noise vector
\(
\widehat W \mid Y
\sim
N\{0,\widehat\Sigma(Y)\}.
\)
The empirical perturbed responses are then constructed as
\begin{equation}
\label{eq:sel_inf}
\widehat Y^{\mathrm{sel}}
=
Y+\widehat W,
\qquad
\widehat Y^{\mathrm{inf}}
=
Y-\widehat W.
\end{equation}
%We use $\widehat Y^{\mathrm{sel}}$ to perform variable selection. To formulate
%the subsequent selective inference model-wise, fix a candidate model
%\(M\subseteq\{1,\ldots,2p\}\) and define its selection region as
%\[
%\mathcal A_M
%=
%\left\{
%y\in\mathbb R^n:
%\mathcal S(y)=M
%\right\}.
%\]
%We study inference conditional on the event that the empirical selection
%procedure selects this fixed model:
%\[
%\left\{
%\mathcal S(\widehat Y^{\mathrm{sel}})=M
%\right\}
%=
%\left\{
%\widehat Y^{\mathrm{sel}}\in\mathcal A_M
%\right\}.
%\]
%On this event, subsequent inference is conducted using
%$\widehat Y^{\mathrm{inf}}$. The corresponding feasible post-selection
%estimator is
%\[
%\widehat\beta_M
%=
%\left\{
%X_M^\top\widehat\Sigma(Y)^{-1}X_M
%\right\}^{-1}
%X_M^\top\widehat\Sigma(Y)^{-1}
%\widehat Y^{\mathrm{inf}}.
%\]
%When there is no ambiguity, we suppress the dependence of
%$\widehat\Sigma(Y)$ on $Y$ and write it simply as $\widehat\Sigma$.
%% Throughout this section, $M$ denotes the fixed candidate model on which we
%% condition, rather than the random output of the selection procedure. \text{red}{This is confusing. What do you try to communicate here?}
Variable selection is performed using $\widehat Y^{\mathrm{sel}}$, yielding
$M=\mathcal S(\widehat Y^{\mathrm{sel}})$, and subsequent inference is based
on $\widehat Y^{\mathrm{inf}}$. For a fixed candidate model $M$, the feasible
post-selection estimator is
\[
\widehat\beta_M
=
\left(
X_M^\top\widehat\Sigma^{-1}X_M
\right)^{-1}
X_M^\top\widehat\Sigma^{-1}
\widehat Y^{\mathrm{inf}}.
\]
When there is no ambiguity, we suppress the dependence of
$\widehat\Sigma(Y)$ on $Y$.
We allow the number of candidate covariates $p=p_N$ to increase with the
sample size, while requiring the selection procedure to select only a
bounded number of covariates. Specifically, there exists a constant
$\bar d<\infty$, independent of $N$, such that every model that can be
selected satisfies
\begin{equation}
\label{eq:d_M}
d_M=|M_I|+|M_S|+2\leq \bar d.
\end{equation}
Throughout the asymptotic analysis below, we condition on one particular
selected model $M\subseteq\{1,\ldots,2p\}$ and treat $M$, $M_I$, and $M_S$
as fixed.
% Throughout the asymptotic analysis below, we condition on a particular
% candidate model $M\in\mathcal M_N$ and treat $M$, together with its index
% sets $M_I$ and $M_S$, as fixed. %and bounded 
%as $N\to\infty$.
Specifically, we study the conditional
distribution given the selection event
\(
\widehat Y^{\mathrm{sel}}\in\mathcal A_M,
\)
that is, given that the data-driven selection procedure selects $M$.
Thus, although the selected model is random before conditioning, it is
treated as fixed in the subsequent asymptotic analysis.

Unlike the oracle construction in Section \ref{sec:known_cov}, the empirical selection and inference responses are not exactly independent in finite samples. This distinction motivates the asymptotic analysis below.

\subsection{Asymptotic Equivalence and Validity}
%A fundamental statistical challenge in post-selection inference with an estimated covariance matrix is the loss of exact independence between $\hat Y^{\mathrm{sel}}$ and $\hat Y^{\mathrm{inf}}$. In the oracle procedure, the auxiliary noise $W$ is generated using the true covariance matrix $\Sigma_0$, which guarantees that  $Y^{\mathrm{sel}}$ and $Y^{\mathrm{inf}}$ are exactly independent. In the empirical procedure, however, $\hat W$ is generated using $\hat\Sigma(Y)$, which is itself estimated from the full observed response $Y$. Consequently, $\hat Y^{\mathrm{sel}}$ and $\hat Y^{\mathrm{inf}}$ share information through the estimated covariance matrix.

To provide theoretical results, we only use part of the data to estimate the covariance. Specifically,
let $\mathcal P\subset\{1,\ldots,N\}$ be a subject-level pilot set,
selected independently of the observed outcomes, and let
\(
N_s=|\mathcal P|,
\) and \(
\mathcal P^c=\{1,\ldots,N\}\setminus\mathcal P.
\)
Only the subjects in
$\mathcal P$ are used to estimate the covariance parameters. However, all
$N$ subjects, including the pilot subjects, are retained in the subsequent
selection and inference stages.
We impose the following assumptions. 
\begin{assumption}
	The subject-level tuples \((X_i,T_i,Z_i,Y_i),i=1,\ldots, N\) are independent and identically drawn from a population distribution $\mathcal O$,
    with the number of observational time $E(T_i)<\infty.$ Consequently, $n=\sum_{i=1}^N T_i\asymp N$.
    We further assume that  
    \[
    Y_i\mid X_i,T_i,Z_i\sim N(\mu_i,\Sigma_i(\theta_0)),
   \qquad
	\Sigma_i(\theta)
	=Z_iD(\theta)Z_i^\top+\sigma^2(\theta)I_{T_i}.
	\]
   At the true parameter value $\theta_0$, the random-effects covariance $D_0:=D(\theta_0)$ is positive definite and
$\sigma_0^2:=\sigma^2(\theta_0)>0$.
\end{assumption}

\begin{assumption}
	The number of pilot participants $N_s$ that are used to estimate $\theta_0$ satisfies:
	\(
	N_s\to\infty\) and \(N_s/N\to 0.\)
	The pilot estimator $\widehat \theta$ satisfies
	\[
	\|\widehat\theta-\theta_0\|_2
	=
	O_p(N_s^{-1/2}).
	\]
\end{assumption}

\begin{assumption}
	Every candidate model under consideration has common support,
	\(
	M_I=M_S.
	\)
	Let $\Theta_0$ denote a fixed neighborhood of $\theta_0$ contained in the covariance-parameter space.
    Moreover, for every $\theta\in\Theta_0$,
	\[
	B_0(\theta):=\mathbb E\!\left\{
	Z_i^\top\Sigma_i(\theta)^{-1}Z_i
	\mid X_i
	\right\}
	\]
	is a deterministic positive-definite $2\times2$
	matrix, regardless of the value of $X_i$, $i=1,\ldots,N$.
\end{assumption}
% \begin{assumption}[Design, dimension, and moment regularity]
% 	The selected-model dimension is uniformly bounded,
% 	$d_M\le \bar d<\infty$, where $d_M=|M_I|+|M_S|+2$. There exist constants
% 	$0<c_1\le c_2<\infty$ such that
% 	\[
% 	\mathbb P\left(
% 	c_1I_{d_M}
% 	\preceq
% 	\frac{1}{N}
% 	X_M^\top\Sigma_0^{-1}X_M
% 	\preceq
% 	c_2I_{d_M}
% 	\right)
% 	\longrightarrow1.
% 	\]
%     Here $A\preceq B$ means that $(B-A)$ is semidefinite.
    
% 	Moreover, for the population projection residual
% 	\[
% 	r_{Mi}^{P}
% 	=
% 	\mu_i-X_{i,M}\beta_M^P,
% 	\]
% 	there exists a constant $C_h<\infty$ such that
% 	\[
% 	\mathbb E\left[
% 	\left\|
% 	X_{i,M}^{\top}\Sigma_{0i}^{-1/2}
% 	\right\|_s^2
% 	\left\|
% 	\Sigma_{0i}^{-1/2}r_{Mi}^{P}
% 	\right\|_2^2
% 	\right]
% 	\le C_h.
% 	\]
% \end{assumption}
\begin{assumption}
	Let $X_{i,\mathrm{full}},i=1,\ldots,N$ denote the working design matrix containing
	all candidate intercept and slope terms. There exist constants
	$0<c_1\le c_2<\infty$ such that
	\[
	c_1 I
	\preceq
	\mathbb E\left[
	X_{i,\mathrm{full}}^\top
	\Sigma_i(\theta_0)^{-1}
	X_{i,\mathrm{full}}
	\right]
	\preceq
	c_2 I.
	\]
	For each candidate model $M$ under consideration, define %the population projection residual
	\(
	r_{Mi}^{P}
	=
	\mu_i-X_{i,M}\beta_M^P.
	\)
	There exists a constant $C_h<\infty$ such that
	\[
	\sup_M
	\mathbb E\left[
	\left\|
	X_{i,M}^{\top}
	\Sigma_i(\theta_0)^{-1}
	r_{Mi}^{P}
	\right\|_2^2
	\right]
	\le C_h.\]
\end{assumption}

\begin{assumption}
	For the fixed candidate model $M$ with $M_I=M_S$, there exists a
	constant $c_4>0$ such that,
	\[
	\mathbb P\left(
	\widehat Y^{\mathrm{sel}}\in\mathcal A_M
	\mid\mathcal D_N
	\right)
	\ge c_4.
	\]
\end{assumption}

Assumption 1 is standard in longitudinal studies. Assumption 2 requires the parameters to be estimated sufficiently
accurately from the pilot sample. Assumption 3
requires common support for the intercept and slope components and a
non-informative observation process. Assumptions 4 and 5 impose standard regularity conditions required for the
asymptotic analysis. Further discussion of these assumptions is
provided in Appendix \ref{sec:assum_discussion}.

\begin{theorem}[Conditional validity of pilot-assisted empirical data fission]
	\label{thm:pilot_assisted_validity}
    Let $\hat V_M$ be the estimated covariance matrix of $\widehat \beta_M$ with 
    \begin{equation}
    \label{eq:hat_vm}
       \hat V_M=2(X_M^T\widehat \Sigma^{-1}X_M)^{-1} 
    \end{equation}
	Under Assumptions 1--5, 
	\begin{equation}
    \label{eq: inference}
	\sup_{z\in\mathbb R^{d_M}}
	\left|
	\mathbb P\left\{
	\widehat V_M^{-1/2}
	(\widehat\beta_M-\beta_{M}^{D})
	\le z
	\,\middle|\,
	\widehat Y^{\mathrm{sel}}\in\mathcal A_M,
	\mathcal D_N
	\right\}
	-
	\Phi_{d_M}(z)
	\right|
	\xrightarrow{p}0,
	\end{equation}
	where vector inequalities are interpreted componentwise and $\Phi_{d_M}$ is
	the distribution function of $N_{d_M}(0,I_{d_M})$.
\end{theorem}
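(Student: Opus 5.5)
The plan is to condition on the pilot responses $Y_{\mathcal P}=\{Y_i:i\in\mathcal P\}$ together with $\mathcal D_N$. Given these, $\widehat\Sigma=\Sigma(\widehat\theta)$ is a fixed matrix, and for every non-pilot subject $i\in\mathcal P^c$ the pair $(\widehat Y^{\mathrm{sel}}_i,\widehat Y^{\mathrm{inf}}_i)$ is exactly Gaussian with
\[
\Var(\widehat Y^{\mathrm{sel}}_i)=\Var(\widehat Y^{\mathrm{inf}}_i)=\Sigma_{0i}+\widehat\Sigma_i,\qquad
\Cov(\widehat Y^{\mathrm{sel}}_i,\widehat Y^{\mathrm{inf}}_i)=\Sigma_{0i}-\widehat\Sigma_i .
\]
By Assumption 2 the cross-covariance is $O_p(N_s^{-1/2})$. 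I will then write
\[
\sqrt N(\widehat\beta_M-\beta_M^D)=G_N+B_N+P_N+L_N,
\]
where the four terms are as follows.
\begin{itemize}
\item $G_N$ is a Gaussian term that is exactly independent of $\widehat Y^{\mathrm{sel}}$.
\item $B_N$ is a deterministic bias term, given $(Y_{\mathcal P},\mathcal D_N)$.
\item $P_N$ is the contribution of the pilot subjects.
\item $L_N$ is a "leakage" term that carries the residual dependence on $\widehat Y^{\mathrm{sel}}$.
\end{itemize}
To build $G_N$ and $L_N$ I regress $\widehat Y^{\mathrm{inf}}_i-\mu_i$ on $\widehat Y^{\mathrm{sel}}_i-\mu_i$ for $i\in\mathcal P^c$. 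The regression coefficient is $K_i=(\Sigma_{0i}-\widehat\Sigma_i)(\Sigma_{0i}+\widehat\Sigma_i)^{-1}$, and the regression residual $R_i$ is independent of $\widehat Y^{\mathrm{sel}}$ given $(Y_{\mathcal P},\mathcal D_N)$. This gives
\[
G_N=\sqrt N\,(X_M^\top\widehat\Sigma^{-1}X_M)^{-1}\sum_{i\in\mathcal P^c}X_{i,M}^\top\widehat\Sigma_i^{-1}R_i,
\]
and $L_N$ is the same expression with $R_i$ replaced by $K_i(\widehat Y^{\mathrm{sel}}_i-\mu_i)$.

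The first key device converts unconditional negligibility into conditional negligibility. For any remainder $\Delta_N$ and any $\varepsilon>0$,
\[
\mathbb P(\|\Delta_N\|>\varepsilon\mid \widehat Y^{\mathrm{sel}}\in\mathcal A_M,\mathcal D_N)\le c_4^{-1}\,\mathbb P(\|\Delta_N\|>\varepsilon\mid\mathcal D_N),
\]
using Assumption 5. So it suffices to show that $L_N$ and $P_N$ are $o_p(1)$ without any conditioning, and this is a second-moment computation.
\begin{itemize}
\item For $L_N$: given $(Y_{\mathcal P},\mathcal D_N)$ it has mean zero and covariance of order $\|\widehat\Sigma-\Sigma_0\|^2=O_p(N_s^{-1})$. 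Here I use Assumption 4 to keep $N^{-1}X_M^\top\widehat\Sigma^{-1}X_M$ bounded away from zero and infinity, which holds because $\widehat\theta\in\Theta_0$ with probability tending to one and the law of large numbers applies.
\item For $P_N$: it is a sum of $N_s$ independent $O_p(1)$ subject-level terms scaled by $N^{-1/2}$. Hence it is $O_p(\sqrt{N_s/N})=o_p(1)$, and this is the step that uses $N_s/N\to0$. I will need to check that the pilot subjects' dependence between $Y_i$ and $\widehat\theta$ does not spoil this. I expect it not to, because the bound uses only second moments of $Y_i$ and $\widehat W_i$.
\end{itemize}

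For $B_N$, note that $\beta_M^D$ is defined with $\Sigma_0$ while the estimator's conditional mean uses $\widehat\Sigma$. The difference is
\[
\sqrt N\{(X_M^\top\widehat\Sigma^{-1}X_M)^{-1}X_M^\top\widehat\Sigma^{-1}-(X_M^\top\Sigma_0^{-1}X_M)^{-1}X_M^\top\Sigma_0^{-1}\}\, r_M^P,
\]
using $\mu=X_M\beta_M^P+r_M^P$. This is where Assumption 3 enters. By the last part of Theorem~\ref{thm:beta_est}, when $M_I=M_S$ the population projection $\beta_M^P$ does not depend on $A_0$. Hence it is the same for every $\theta\in\Theta_0$, and $\mathbb E[X_{i,M}^\top\Sigma_i(\theta)^{-1}r_{Mi}^P]=0$ for all such $\theta$. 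Therefore the empirical process $N^{-1/2}\sum_i X_{i,M}^\top\Sigma_i(\theta)^{-1}r^P_{Mi}$ is centered for every $\theta$ and is stochastically equicontinuous in $\theta$, with the moment bound $C_h$ of Assumption 4 plus smoothness of $\theta\mapsto\Sigma_i(\theta)^{-1}$. Evaluated at $\widehat\theta$, which is independent of the non-pilot subjects, versus at $\theta_0$, it differs by $O_p(\|\widehat\theta-\theta_0\|)=o_p(1)$. So $B_N=o_p(1)$. I expect this step to be the main obstacle: without common support the projection would shift with $\theta$ and an $O(\sqrt N\,N_s^{-1/2})$ bias would survive.

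Finally, given $(Y_{\mathcal P},\mathcal D_N)$ the term $G_N$ is exactly Gaussian, with covariance
\[
N(X^\top\widehat\Sigma^{-1}X)^{-1}X^\top\widehat\Sigma^{-1}\{(\Sigma_0+\widehat\Sigma)-(\Sigma_0-\widehat\Sigma)(\Sigma_0+\widehat\Sigma)^{-1}(\Sigma_0-\widehat\Sigma)\}\widehat\Sigma^{-1}X(X^\top\widehat\Sigma^{-1}X)^{-1},
\]
restricted to the non-pilot subjects. Because $\widehat\Sigma-\Sigma_0=o_p(1)$ and the pilot share of $X^\top\widehat\Sigma^{-1}X$ is $O_p(N_s/N)$, this covariance equals $N\widehat V_M\{I+o_p(1)\}$. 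Independence of $G_N$ from the selection event removes the conditioning for this term. I then combine the pieces by Slutsky's lemma applied conditionally: the remainders are conditionally $o_p(1)$ by the $c_4$ bound, and $\widehat V_M^{-1/2}$ is rescaled accordingly. Pólya's theorem, using continuity of $\Phi_{d_M}$, then upgrades pointwise convergence to the uniform statement \eqref{eq: inference}.
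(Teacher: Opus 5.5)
Your plan is sound and follows the same skeleton as the paper's proof:
\begin{itemize}
\item freeze $\widehat\Sigma$ by conditioning on the pilot data and use exact Gaussianity of the non-pilot block;
\item split the error into a non-pilot part, a pilot part and the centering bias $(\widehat A_M-A_M)\mu$;
\item dispose of the pilot part with $N_s/N\to0$ and of the bias through the $\theta$-invariance of $\beta_M^P$ under $M_I=M_S$;
\item divide by $c_4$ at the end.
\end{itemize}

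The difference is in the coupling step. The paper conditions on $\mathcal G_{\mathcal P}$, which includes $\widehat W_{\mathcal P}$, so its pilot term $T^{(2)}_{M,N}$ is a fixed mean shift. It writes the joint Gaussian law $N(\mu_K,\Gamma_K)$ of $T_{M,N}$ and the standardized $\widehat Y^{\mathrm{sel}}_{\mathcal P^c}$. It then bounds the total-variation distance of this law to $N(0,I)$ via Pinsker, the Gaussian KL formula and a Schur complement. The cross-covariance $\Gamma_{K,12}$, the variance mismatch $\Gamma_{K,11}-I$ and both mean shifts all enter one scalar bound, and TV contraction then handles the selection coordinate.

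Your regression of $\widehat Y^{\mathrm{inf}}_i$ on $\widehat Y^{\mathrm{sel}}_i$ is that Schur decomposition made explicit. Your standardized $G_N$ has covariance $\Gamma_{K,11}-\Gamma_{K,12}\Gamma_{K,12}^\top$ and your $L_N$ has covariance $\Gamma_{K,12}\Gamma_{K,12}^\top$. You replace the KL machinery by the bound $\mathbb P(\cdot\mid \widehat Y^{\mathrm{sel}}\in\mathcal A_M,\mathcal D_N)\le c_4^{-1}\mathbb P(\cdot\mid\mathcal D_N)$ for remainders, plus a conditional Slutsky sandwich. This is more elementary and shows exactly which piece is independent of selection; the paper's route gives a quantitative TV bound for the whole joint law. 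In your version, $G_N$ is independent of selection only given $Y_{\mathcal P}$. You must therefore average the conditional CDF error over $Y_{\mathcal P}$, weighted by $\mathbb P(\widehat Y^{\mathrm{sel}}\in\mathcal A_M\mid Y_{\mathcal P},\mathcal D_N)$, before dividing by $c_4$, as in the paper's Step~3.

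Three steps need more than you indicate.

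\emph{The pilot term.} The summands $X_{i,M}^\top\widehat\Sigma_i^{-1}(Y_i-\mu_i)$, $i\in\mathcal P$, are neither independent nor centered, because $\widehat\theta$ is fitted from the same $Y_i$. Without using the rate of $\widehat\theta$, the natural bound on their sum is only $O_p(N_s)$, so $P_N=O_p(N_s/\sqrt N)$, which Assumption~2 does not make $o_p(1)$. Write $\widehat\Sigma_{\mathcal P}^{-1}=\Sigma_{0,\mathcal P}^{-1}+\Sigma_{0,\mathcal P}^{-1/2}\{(I+\widehat E_{\mathcal P})^{-1}-I\}\Sigma_{0,\mathcal P}^{-1/2}$ and use $\|\widehat E\|_s=O_p(N_s^{-1/2})$. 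The correction is then $O_p(\sqrt{N_s})\,O_p(N_s^{-1/2})\,O_p(\sqrt{N_s})=O_p(\sqrt{N_s})$, which gives $P_N=O_p(\sqrt{N_s/N})$.

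\emph{The second bias piece.} With $S(\theta)=N^{-1/2}\sum_iX_{i,M}^\top\Sigma_i(\theta)^{-1}r^P_{Mi}$, your bias is $B_N=N\widehat H_M^{-1}\{S(\widehat\theta)-S(\theta_0)\}+N(\widehat H_M^{-1}-H_M^{-1})S(\theta_0)$. You only treat the first piece. The second is the paper's $H_2$, and it must be handled separately: $S(\theta_0)=O_p(1)$ by the CLT, and $\|N(\widehat H_M^{-1}-H_M^{-1})\|_s=O_p(N_s^{-1/2})$ because $\|\widehat H_M-H_M\|_s=O_p(NN_s^{-1/2})$.

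\emph{Stochastic equicontinuity of $S$ (the most substantive gap).} This does not follow from Assumption~4 plus generic smoothness of $\theta\mapsto\Sigma_i(\theta)^{-1}$. That route gives the Lipschitz envelope $\|\Sigma_{0i}^{-1/2}X_{i,M}\|_s\|\Sigma_{0i}^{-1/2}r^P_{Mi}\|_2$. Square-integrability of this envelope is a stronger condition than Assumption~4, as the remark following Lemma~\ref{lemma:h_bound} notes. The paper instead obtains the envelope $\|X_{i,M}^\top\Sigma_{0i}^{-1}r^P_{Mi}\|_2$, which Assumption~4 does control, from the growth-curve structure (Lemma~\ref{lemma:h_bound}):
\begin{itemize}
\item $\Sigma_i(\theta)^{-1}Z_i=Z_iL_i(\theta)^{-1}$ with $L_i(\theta)=D(\theta)Z_i^\top Z_i+\sigma^2(\theta)I_2$;
\item $r^P_{Mi}=Z_iu_{Mi}$;
\item $C_M(X_i)C_M(X_i)^\top=(1+\|X_{i,M_I}\|_2^2)I_2$ when $M_I=M_S$.
\end{itemize}
Lemma~\ref{lem:localized_bracketing_bound} is then applied uniformly over a shrinking ball around $\theta_0$. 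This also makes your appeal to independence between $\widehat\theta$ and the summands unnecessary, which matters because the sum includes the pilot subjects. So common support is used twice, once for centering and once for this envelope.
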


Theorem \ref{thm:pilot_assisted_validity} establishes that the additional dependence introduced by estimating the covariance matrix has an asymptotically negligible effect on the conditional distribution of the post-selection estimator. Thus, the inference response $\hat Y^{\mathrm{inf}}$, together with $\hat\Sigma$, can be used to approximate the oracle inference procedure developed in Section \ref{sec:known_cov}.

Since the oracle estimator is centered at the design-conditional projection target $\beta_M^D$, Theorem \ref{thm:pilot_assisted_validity} provides the theoretical basis for constructing feasible post-selection confidence intervals using the estimated covariance matrix. The formal proof, based on controlling the total variation distance between the oracle and empirical constructions, is provided in Appendix \ref{sec:proof}.

\section{Working Variable-Selection Procedures}
\label{sec:working_methods}

The proposed data fission framework places minimal restrictions on the variable-selection procedure. %In particular, the selection algorithm \(\mathcal{S}(\cdot)\) may be viewed as a deterministic black box that maps the selection response, together with the observed design variables, to a subset of baseline covariates. This flexibility is useful in longitudinal applications, where important predictors may affect developmental trajectories through nonlinearities, interactions, or subgroup-specific patterns that are difficult to specify in advance.
To illustrate this generality, we consider two qualitatively different working selectors based on established methods: a mixed-effects model tree and a LassoNet-style neural network. % We do not regard either procedure as universally optimal. Rather, they are chosen to represent two common forms of heterogeneity in longitudinal behavioral data. 
The tree-based procedure is particularly suited to threshold-defined and subgroup-specific trajectory differences, whereas the neural-network procedure is intended to capture smoother nonlinear effects and interactions among multiple baseline predictors.
%These procedures are used only to identify the baseline variables that enter the subsequent inferential working model. The internal parameters of the tree or neural network are not themselves targets of inference. %Consequently, the validity of the oracle data fission procedure does not require either working selector to correctly specify the true longitudinal mean function. Instead, 
Once a subset of variables has been selected, inference is conducted for the corresponding linear projection parameters defined in Section~\ref{sec:target}.

\subsection{Mixed-Effects Model Tree}
\label{sec:tree_selector}

%Tree-based methods provide a flexible approach for identifying heterogeneous subgroups by recursively partitioning the predictor space \citep{breiman1984classification}. Unlike conventional regression models, which require the functional form of predictor effects to be specified in advance, regression trees can adaptively identify threshold effects and interactions among predictors.
%
%Model-based recursive partitioning extends this idea by fitting a parametric model within each subgroup and using parameter instability to determine whether further partitioning is needed \citep{zeileis2008model}. Starting from the full sample, the procedure fits the specified model and evaluates whether its parameters remain stable across candidate partitioning variables. If evidence of parameter instability is detected, the sample is split according to the variable showing the strongest instability, and the procedure is recursively repeated within the resulting subgroups.

%For longitudinal and clustered data, 
We use the mixed-effects model tree of \citet{fokkema2018detecting}, %which incorporates random effects into model-based recursive partitioning.
 which extends model-based recursive partitioning \citep{zeileis2008model} to
clustered and longitudinal data.
In our setting, the longitudinal
selection response $\widehat Y^{\mathrm{sel}}$ is the outcome, while baseline
covariates are used as candidate partitioning variables. Within each node,
the model contains a fixed intercept and linear time slope together with
subject-specific random intercepts and slopes.
%so that splits capture
% heterogeneity in the intercept, slope, or both. 
Estimation proceeds
iteratively by alternating between tree fitting and updating the mixed-effects
 model until convergence.
% the working model within each node contains a fixed intercept and linear time slope, together with subject-specific random intercepts and slopes. Baseline covariates are used as candidate partitioning variables, and a split may reflect heterogeneity in the trajectory intercept, slope, or both.

We use the fitted tree as a variable-screening device and define the selected covariate set as
\[
\widehat J_{\mathrm{tree}}
=
\left\{
j:
X_j \text{ appears in at least one internal split of the fitted tree}
\right\}.
\]
A variable appearing in a split is therefore identified as useful for describing heterogeneity in the longitudinal trajectory, but the tree does not uniquely assign that variable to either the intercept or slope component. We therefore use a common variable-level support in the subsequent inferential model:
\(
M_I=M_S=\widehat J_{\mathrm{tree}}.
\)
Thus, once $X_j$ is selected by the tree, both $X_j$ and $tX_j$ are included in the linear growth-curve working model. The corresponding intercept and slope projection coefficients are estimated separately at the inference stage, and neither coefficient is assumed to be nonzero.

\subsection{LassoNet-Style Neural-Network Selection}
\label{sec:lassonet_selector}

%Neural networks provide a flexible approach for modeling nonlinear relationships and interactions among predictors \citep{goodfellow2016deep}. A standard feed-forward neural network repeatedly applies linear transformations and nonlinear activation functions. This flexibility is useful in longitudinal settings where baseline characteristics may influence developmental trajectories through nonlinear or interacting mechanisms.
%
%However, conventional feed-forward neural networks are not naturally suited to variable selection. Simply adding an $\ell_1$ penalty to individual network weights does not generally produce a sparse set of input variables, because a single predictor may influence the fitted outcome through multiple network paths. Thus, sparsity in network parameters does not directly imply sparsity in the original predictors.
%
%In contrast, the Lasso performs variable selection directly by placing an $\ell_1$ penalty on a single coefficient associated with each predictor \citep{tibshirani1996regression}. If the coefficient of a predictor is shrunk to zero, that predictor is removed from the fitted linear model. This provides a clear variable-level notion of sparsity, but the resulting model is limited to linear effects unless additional nonlinear terms and interactions are specified in advance.

	Neural networks provide a flexible approach for modeling nonlinear
	relationships and interactions among predictors \citep{goodfellow2016deep}.
	However, conventional feed-forward networks are not naturally suited to
	variable selection: sparsity in individual network weights does not generally
	translate into sparsity at the input-variable level because a predictor may
	contribute through multiple network paths. %Simply adding an $\ell_1$ penalty to individual network weights does not generally produce a sparse set of input variables, because a single predictor may influence the fitted outcome through multiple network paths. %Thus, sparsity in network parameters does not directly imply sparsity in the original predictors.
	
	In contrast, the Lasso provides direct variable-level sparsity by penalizing
	one coefficient per predictor \citep{tibshirani1996regression}, but is limited
	to linear effects unless nonlinear terms and interactions are specified in
	advance. LassoNet \citep{lemhadri2021lassonet} combines these two features by
	coupling a sparse linear component with a nonlinear neural network,
%LassoNet \citep{lemhadri2021lassonet} combines this idea with a nonlinear neural network by adding a sparse linear {\{blue} component},
\[
g(X)=\gamma^\top X+h(X),
\]
where $\gamma$ contains the linear coefficients and $h(X)$ denotes the nonlinear component. LassoNet imposes the hierarchical constraint
\[
\|\Gamma_j^{(1)}\|_\infty
\le C_{\mathrm{Net}}|\gamma_j|,
\]
where $\Gamma_j^{(1)}$ denotes the first-layer weights associated with
predictor $X_j$. Hence, if $\gamma_j=0$, then
$\Gamma_j^{(1)}=0$, excluding $X_j$ from both the linear and nonlinear
components.
%where $\gamma$ contains the linear coefficients and $h(X)$ denotes the nonlinear component. Let $\Gamma_j^{(1)}$ denote the vector of first-layer weights associated with predictor $X_j$. LassoNet imposes the hierarchical constraint
%\(
%\|\Gamma_j^{(1)}\|_\infty
%\le C_{Net}|\gamma_j|,
%\)
%where $ C_{Net}>0$ controls the strength of the hierarchy. Consequently, if $\gamma_j=0$, then $W_j^{(1)}=0$, so predictor $X_j$ cannot contribute through either the linear or nonlinear part of the model. This provides an explicit variable-selection rule while retaining the ability to capture nonlinear effects and interactions.

We adapt this construction to the growth-curve setting by introducing
separate LassoNet branches for the intercept and slope functions:
\[
g_I(X)=\gamma_I^\top X+h_I(X),\qquad
g_S(X)=\gamma_S^\top X+h_S(X),
\]
where $\gamma_I$ and $\gamma_S$ are the corresponding linear coefficients,
and $h_I(\cdot)$ and $h_S(\cdot)$ denote the nonlinear components.
The network takes the baseline covariates $X_i$ as input and is fitted to the
repeated selection response $\widehat Y_i^{\mathrm{sel}}$. Within-participant
dependence is incorporated through the working covariance matrix
\[
\Sigma_i^{\mathrm{sel}}(\theta)
=
(1+\tau^2)
\left\{
Z_iD(\theta)Z_i^\top+\sigma^2(\theta)I_{T_i}
\right\},
\]
with the covariance parameter $\theta$ estimated jointly during network
fitting. We set $M_I=M_S$, and covariate $j$ is selected if at least one of
$\gamma_{I,j}$ and $\gamma_{S,j}$ is nonzero.

\section{Simulation Study}
\label{sec:simulation}

\subsection{Objectives and Overview}
\label{sec:obj_over}

We conducted simulation studies to evaluate the finite-sample performance of
the proposed data fission framework after complex, data-adaptive variable
selection. Our primary objective was not to identify an optimal
variable-selection algorithm. Rather, we examined whether valid inference for
the selected projection targets could be recovered after applying selection
procedures whose selection events are difficult to characterize analytically.

We considered the two working selectors described in Section~5: a
mixed-effects model tree and a LassoNet-style neural network. For each
selector, we evaluated post-selection inference under both known and unknown
covariance structures.

	When the covariance matrix $\Sigma_0$ was treated as known, we compared two
	oracle procedures. \textit{Oracle data fission} generated
	$W\mid X,t\sim N(0,\Sigma_0)$ and constructed
	$Y^{\mathrm{sel}}=Y+\tau W$ for variable selection and
	$Y^{\mathrm{inf}}=Y-\tau^{-1}W$ for GLS inference based on true covariance. \textit{Oracle data splitting} instead randomly divided
	subjects equally into selection and inference samples, with selection
	performed on the former and GLS inference using the true covariance on the
	latter.
	
	When the covariance structure was unknown, we compared four procedures.
	\textit{Empirical data fission with plug-in GLS} used
	$\widehat{\Sigma}$ to construct $\widehat Y^{\mathrm{sel}}$ and
	$\widehat Y^{\mathrm{inf}}$ as in (\ref{eq:sel_inf}), with selection based on
	$\widehat Y^{\mathrm{sel}}$ and plug-in GLS inference based on
	$\widehat Y^{\mathrm{inf}}$. \textit{Empirical data fission with REML} used
	the same selection response but refitted the selected working model by REML
	using $\widehat Y^{\mathrm{inf}}$. \textit{Data splitting with REML} divided
	subjects equally into selection and inference samples and refitted the
	selected working model by REML using only the inference sample.
	\textit{\Naive REML inference} used the same observed outcome for both
	selection and subsequent REML inference, without adjustment for selection.

In the simulation study, the covariance parameters used to construct \(\widehat\Sigma\) were estimated from the full sample. We selected the same covariates for both intercept and slope. 
%The two empirical data fission procedures used the same selection response and
%therefore produced the same selected covariates, and they differed only in the
%subsequent inference procedure. 
All data fission procedures used the balanced
allocation $\tau=1$. Each simulation condition was evaluated using
$R=500$ Monte Carlo replications.

\subsection{Data Generating Process}
Data were generated according to the semiparametric growth-curve model
introduced in \eqref{eq:eta},\eqref{eq:D}. Specifically, the subject-specific
latent intercept and slope were
\begin{equation}
	\eta_{1i}=m_1(X_i)+b_{1i},
	\qquad
	\eta_{2i}=m_2(X_i)+b_{2i}.
	\label{eq:simulation_latent}
\end{equation}
The random effects follow
{
	\renewcommand{\arraystretch}{0.8}
\[
b_i=(b_{1i},b_{2i})^\top
\sim
N\left\{
0,
\sigma_u^2
\begin{pmatrix}
	1 & 0.3\\
	0.3 & 1
\end{pmatrix}
\right\},
\]
}
and the independent measurement errors satisfied
$e_{ij}\sim N(0,9)$.

Baseline covariates followed an equicorrelated Gaussian distribution,
$X_i\sim N_p(0,\Sigma_\rho)$, where
$\Sigma_\rho=(1-\rho)I_p+\rho\mathbf{1}_p\mathbf{1}_p^\top$.
To allow unbalanced longitudinal designs, $T_i$, the number of measurements
for subject $i$, ranged from 3 to 8, with each count equally likely.
Each participant had a baseline measurement at $t_{i1}=0$, while the
remaining $T_i-1$ measurement times were sampled independently from
$\operatorname{Uniform}(0.1,1)$. 

\subsection{Mean Structure and Signal Calibration}
\label{subsec:mean_stru}

We considered both linear and nonlinear specifications of the mean
trajectory functions. For each case, we first define unscaled structural
functions $f_1$ and $f_2$ for the intercept and slope components,
respectively, and set
\(
m_\ell(X)
=
\beta_v f_\ell(X),
\ell=1,2,
\)
where $\beta_v>0$ is a common scaling parameter controlling the magnitude
of the covariate effects.

We first consider the linear mean trajectory model. The structural functions were defined as
\begin{equation}
f_1(X)
=
\sum_{k=1}^{10}X_k,
\qquad
f_2(X)
=
\sum_{k=6}^{15}X_k.
\label{eq:linear_dgp}
\end{equation}
Thus, variables $X_1,\ldots,X_{10}$ were structurally active for the
intercept component, whereas variables $X_6,\ldots,X_{15}$ were structurally
active for the slope component.
For the nonlinear mean trajectory model, %to evaluate performance under nonlinear mean structures
 we introduced cubic
transformations and pairwise interactions:
\begin{align}
f_1(X)
&=
\sum_{k=1}^{4}\frac{X_k^3}{\sqrt{15}}
+
\sum_{k=5}^{8}X_k
+
X_5X_9
+
X_6X_{10},
\label{eq:nonlinear_intercept}
\\
f_2(X)
&=
\sum_{k=6}^{9}\frac{X_k^3}{\sqrt{15}}
+
\sum_{k=10}^{13}X_k
+
X_{10}X_{14}
+
X_{11}X_{15}.
\label{eq:nonlinear_slope}
\end{align}
The factor $1/\sqrt{15}$ standardizes each cubic component to have unit
marginal variance. The linear components also have
unit marginal variance.
% For both mechanisms, the union of variables that were structurally active
% in either the intercept or slope component was
% \[
% S_0=\{1,\ldots,15\}.
% \]
Here, 
$f_1$ and $f_2$ have the same distribution up to a relabeling of the
covariates, and hence
\(
\Var\{f_1(X)\}
=
\Var\{f_2(X)\},
\)
and we define \(
V_m(\rho)
=
\Var\{f_1(X)\}
=
\Var\{f_2(X)\}
\). The exact expressions for $V_m(\rho)$ under the linear and nonlinear
trajectories are derived in Appendix~\ref{sec:signal_calibration}.

Following \citet{zhang2026variable}, we define a component-level
signal-to-noise ratio (SNR) for the Level-2 model as
\begin{equation}
s
=
\frac{\beta_v^2}{\sigma_u^2},
\label{eq:simulation_snr}
\end{equation}
where $\beta_v^2$ represents the common scale of an individual structural
component and $\sigma_u^2$ represents the unexplained Level-2 variation in
the latent growth parameters. Thus, $s$ controls the strength of an
individual structural component relative to the subject-specific
random-effect variation.

To make the overall variability of the latent intercept and slope comparable
across simulation settings, we fixed
\(
\Var(\eta_{1i})
=
\Var(\eta_{2i})
=
10.
\)
% Define
% \[
% V_m(\rho)
% =
% \Var\{f_1(X)\}
% =
% \Var\{f_2(X)\}.
% \]
Because $m_\ell(X)=\beta_v f_\ell(X)$ and
$\Var(b_{\ell i})=\sigma_u^2, \ell=1,2$, we have
\(
\Var(\eta_{\ell i})
=
\beta_v^2 V_m(\rho)
+
\sigma_u^2
\), and 
%Combining (\ref{eq:simulation_snr}) and (\ref{eq:var_dec}) gives
\begin{equation}
\sigma_u^2
=
\frac{10}{1+sV_m(\rho)},
\qquad
\beta_v
=
\sqrt{s\sigma_u^2}
=
\left\{
\frac{10s}{1+sV_m(\rho)}
\right\}^{1/2}.
\label{eq:simulation_calibration}
\end{equation}
Thus, for a fixed value of $s$, the scaling parameter $\beta_v$ generally
varies with $\rho$ because predictor correlation changes
$V_m(\rho)$. This calibration keeps the marginal variances of the latent
intercept and slope fixed across simulation conditions. %The exact expressions for $V_m(\rho)$ under the linear and nonlinear mechanisms are derived in Appendix~\ref{sec:signal_calibration}.

%\subsection{Experimental Design}
%
%The same simulation grid was used for the mixed-effects tree and
%LassoNet-style selectors. We varied the number of subjects over
%\[
%N\in\{200,500,1000\},
%\]
%while fixing the number of candidate baseline covariates at
%\[
%p\in\{100,200,500\}.
%\]
%The component-level signal-to-random-effect ratio was varied over
%\[
%s\in\{0.01,0.03,0.07,0.10\},
%\]
%and the equicorrelation among the baseline covariates was varied over
%\[
%\rho\in\{0,0.1,0.3,0.5\}.
%\]
%Both the linear and nonlinear mean structures were considered under every
%combination of $N$, $s$, and $\rho$, with
%\(
%\tau=1
%\text{, and }
%\sigma^2=9.
%\)
%
%The range of $N$ and $s$ allowed us to examine performance from small-sample,
%weak-signal settings to settings in which the active variables could be
%identified more reliably. Varying $\rho$ allowed us to assess the effect of
%predictor dependence on both variable selection and inference for the selected
%projection target. 

\subsection{Experimental Design}

The same simulation settings were used for the mixed-effects-tree and
LassoNet selectors. We considered the full factorial grid over
sample sizes $N\in\{200,500,1000\}$, numbers of candidate baseline covariates
$p\in\{100,200,500\}$, component-level signal-to-random-effect ratios
$s\in\{0.01,0.03,0.07,0.10\}$, and equicorrelations
$\rho\in\{0,0.1,0.3,0.5\}$. Both linear and nonlinear mean structures were
included in every combination, with $\tau=1$, $\sigma^2=9$, and 500 Monte
Carlo replications per setting.

This design allowed us to assess performance across varying sample sizes,
dimensions, signal strengths, and degrees of predictor dependence.

\subsection{Inferential Targets and Evaluation Criteria}

For each replication and each selected model, inference was evaluated against
the corresponding realized-design projection target
\begin{equation}
\beta_{\widehat M}^{D}
=
\left(
X_{\widehat M}^\top
\Sigma_0^{-1}
X_{\widehat M}
\right)^{-1}
X_{\widehat M}^\top
\Sigma_0^{-1}\mu.
\label{eq:simulation_conditional_target}
\end{equation}
For the \naive and data fission procedures, the target was calculated using
the full cohort. For data splitting, it was calculated using the realized
design of the inference subsample. Thus, each procedure was evaluated against
the design-conditional projection target corresponding to the observations
actually used for inference.

Throughout this subsection, the superscript $(r)$ denotes the quantity
corresponding to the $r$th Monte Carlo replication. Let
$\widehat S^{(r)}$ denote the variable-level selected set and
$\widehat M^{(r)}$ the corresponding selected coefficient-level model.
The realized-design projection target in replication $r$ is denoted by
$\beta_{\widehat M^{(r)}}^{D,(r)}$. We define $S_0$ as the active set.

Selection accuracy is evaluated at the variable level using the F1 score,
%For replication $r$, define true positive selection (TP), false positive selection (FP) and false negative selection (FN) by
%\[
%\operatorname{TP}^{(r)}
%=
%|\widehat S^{(r)}\cap S_0|,
%\qquad
%\operatorname{FP}^{(r)}
%=
%|\widehat S^{(r)}\setminus S_0|,
%\qquad
%\operatorname{FN}^{(r)}
%=
%|S_0\setminus\widehat S^{(r)}|.
%\]
which is the harmonic mean of the precision and recall, and is defined as
\[
F_1^{(r)}
=
\frac{
	2|\widehat S^{(r)}\cap S_0|
}{
	|\widehat S^{(r)}|+|S_0|
}.
\]
%\[
%F_1^{(r)}
%=
%\frac{
%2\operatorname{TP}^{(r)}
%}{
%2\operatorname{TP}^{(r)}
%+
%\operatorname{FP}^{(r)}
%+
%\operatorname{FN}^{(r)}
%}
%=
%\frac{
%2|\widehat S^{(r)}\cap S_0|
%}{
%|\widehat S^{(r)}|+|S_0|
%}.
%\]
We report the average F1 score across Monte Carlo replications.
%\begin{equation}
%\operatorname{F1}
%=
%\frac{1}{R}
%\sum_{r=1}^{R}
%F_1^{(r)}.
%\label{eq:selection_f1}
%\end{equation}
A larger F1 score indicates better recovery of the structurally active
variable set, with $F_1=1$ corresponding to exact recovery.

For inferential evaluation, we distinguish variable-level activity from
component-specific structural activity. Under both data-generating
processes, the structurally active coefficient-level set is
\(
M_0
=
\{1,\ldots,10\}
\cup
\{p+6,\ldots,p+15\},
\)
where the first set corresponds to covariates active in the intercept
component and the second to covariates active in the slope component.
Among selected structurally active coefficient terms, we report the
overestimation rate and the bias
\begin{align}
\operatorname{Overestimation}
&=
\frac{
\sum_{r=1}^{R}
\sum_{j\in\widehat M^{(r)}\cap M_0}
\mathbf 1
\left\{
\widehat\beta_j^{(r)}
>
\beta_{\widehat M^{(r)},j}^{D,(r)}
\right\}
}{
\sum_{r=1}^{R}
|\widehat M^{(r)}\cap M_0|
},
\label{eq:overestimation}\\
\operatorname{Bias}
&=
\frac{
\sum_{r=1}^{R}
\sum_{j\in\widehat M^{(r)}\cap M_0}
\left(
\widehat\beta_j^{(r)}
-
\beta_{\widehat M^{(r)},j}^{D,(r)}
\right)
}{
\sum_{r=1}^{R}
|\widehat M^{(r)}\cap M_0|
}.
\label{eq:simulation_bias}
\end{align}
Coverage and average confidence-interval length were evaluated over all
selected coefficient terms:
\begin{align}
\operatorname{Coverage}
&=
\frac{
\sum_{r=1}^{R}
\sum_{j\in\widehat M^{(r)}}
\mathbf 1
\left\{
\beta_{\widehat M^{(r)},j}^{D,(r)}
\in \mathrm{CI}_j^{(r)}
\right\}
}{
\sum_{r=1}^{R}
|\widehat M^{(r)}|
},
\label{eq:simulation_coverage}\\
\operatorname{Length}
&=
\frac{
\sum_{r=1}^{R}
\sum_{j\in\widehat M^{(r)}}
|\mathrm{CI}_j^{(r)}|
}{
\sum_{r=1}^{R}
|\widehat M^{(r)}|
},
\label{eq:simulation_length}
\end{align}
where $\mathrm{CI}^{r}_j$ represents the confidence interval for coefficient $j$ in replication $r$.
The nominal confidence level was $95\%$.

\subsection{Results}
\label{sec:simulation-results}

We present results for the neural-network selector under the nonlinear mean
trajectory model. The linear mean trajectory model and the mixed-effects-tree
selector showed similar qualitative patterns, and their results are reported
in the Appendix. We first examine the setting with $N=200$ and
$p=500$, including both the oracle and practical procedures described in
Section~\ref{sec:obj_over}. We then examine the practical procedures by
varying $p$ at $N=500$ and varying $N$ at $p=500$, with $\rho=0.3$ fixed
in both comparisons.

Throughout the figures, coverage and overestimation rates are reported as
deviations from their reference values of $0.95$ and $0.50$, respectively,
while bias is reported on its original scale. Values near zero therefore
indicate agreement with the corresponding reference values. Error bars
represent $95\%$ bootstrap Monte Carlo confidence intervals based on the
500 simulation replications.

\subsubsection{Oracle Post-Selection Inference}
Figure~\ref{fig:nn-inference-validity_oracle} summarizes the oracle results
following neural-network selection. Both oracle data fission and oracle data
splitting were well calibrated, with coverage close to $0.95$, overestimation
rates close to $0.50$, and negligible bias. Oracle data fission generally
achieved higher F1 scores and shorter confidence intervals than data splitting,
indicating greater efficiency while maintaining valid post-selection inference.
\begin{figure}[ht!]
	\centering
	\includegraphics[width=\textwidth]{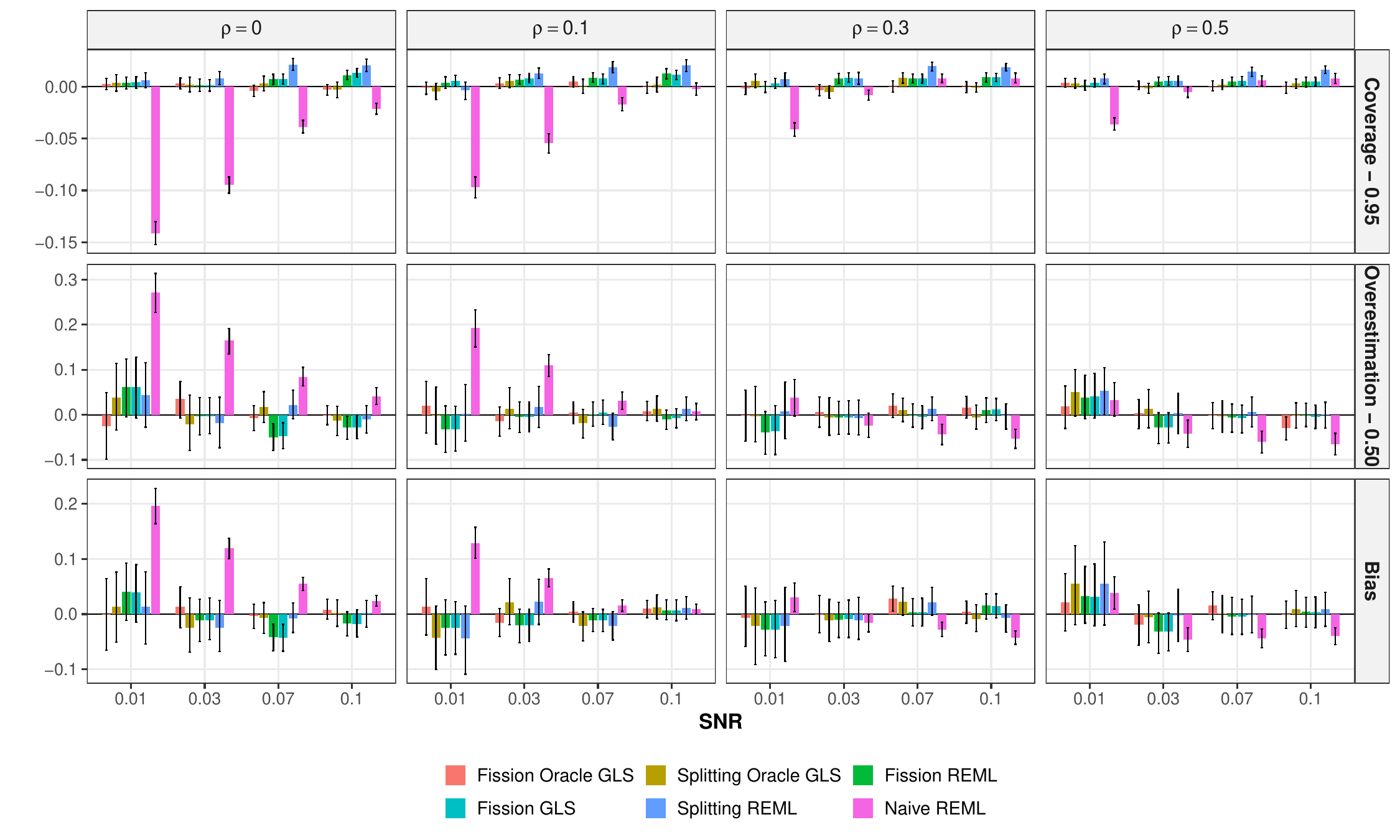}
    \vspace{-0.5em}
	\includegraphics[width=\textwidth]{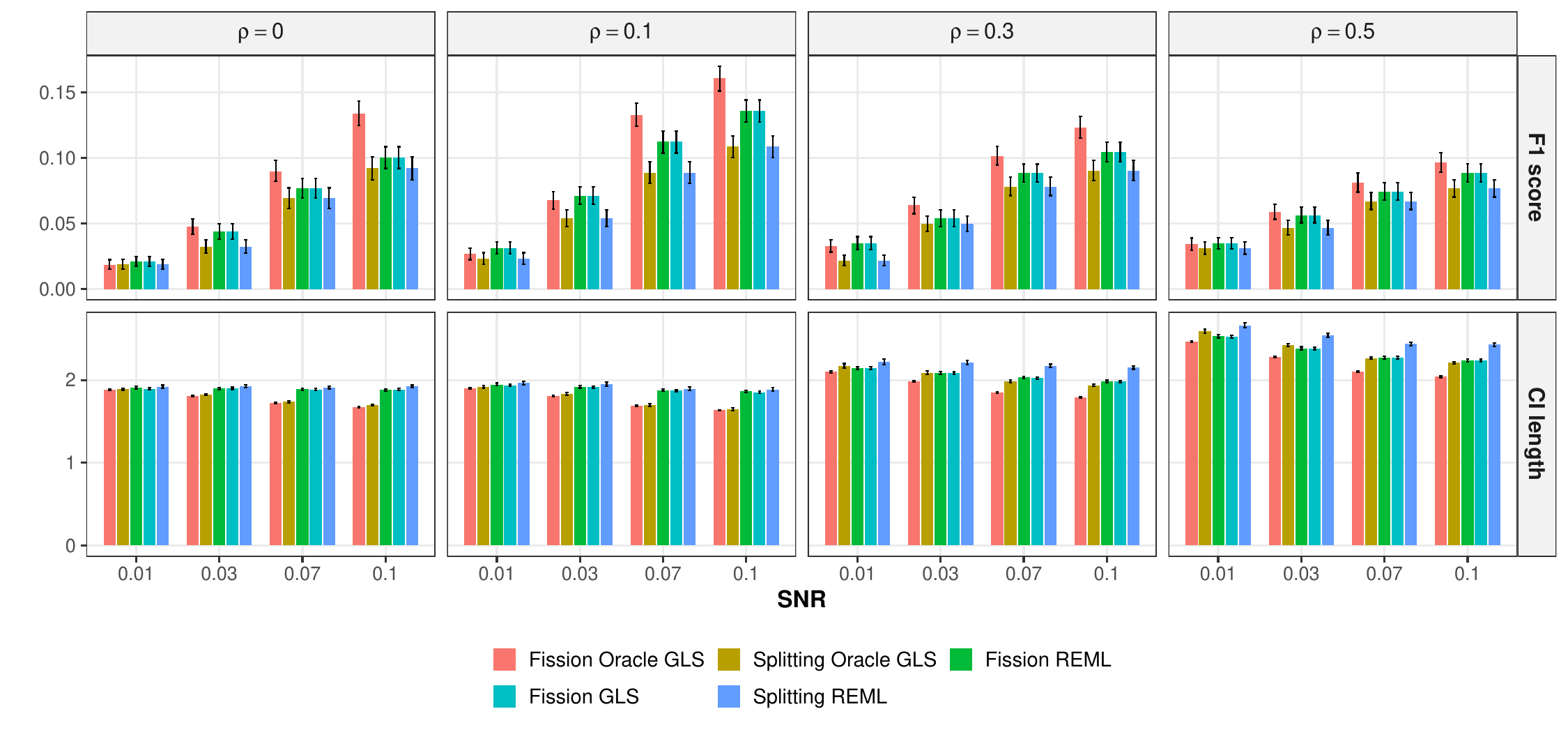}
	\caption{Inferential performance and selection accuracy of neural network selector with nonlinear mean trajectory with $N=200$ and $p=500$.
	}
	\label{fig:nn-inference-validity_oracle}
\end{figure}

\subsubsection{Post-Selection Inference with Estimated Covariance}
\label{sec:simulation-fixed}
%\begin{figure}[ht!]
%	\centering
%	\includegraphics[width=0.8\textwidth]{./graph/nn_inference_deviation_p500_nonlinear_N200.pdf}
%	\vspace{-0.5em}
%    \includegraphics[width=0.8\textwidth]{./graph/nn_f1_ci_length_p500_nonlinear_N200.pdf}
%	\caption{Validity and effect of neural network selector with nonlinear mean trajectory with $N=200$ and $p=500$.
%	}
%	\label{fig:nn-inference-validity}
%\end{figure}
The upper panel of Figure~\ref{fig:nn-inference-validity_oracle} summarizes the
inferential performance under the estimated covariance setting with
$N=200$ and $p=500$. Empirical fission with plug-in GLS and empirical
fission with REML both maintained coverage close to the nominal level $0.95$,
overestimation rates close to $0.50$, and small signed bias across most
settings. The two fission implementations produced very similar results.
Subject-level data splitting with REML was also generally well calibrated,
although it was slightly conservative in some settings.

In contrast, \naive REML showed departures from the reference values,
particularly when the covariate correlation was small. Under weak signals,
its coverage was substantially below $0.95$, accompanied by positive
deviations in the overestimation rate and signed bias. These results
illustrate the inferential distortion caused by reusing the same response
for both variable selection and subsequent inference without accounting
for the selection step.

The difference between \naive and adjusted inference became smaller in
several settings as $\rho$ increased. One possible explanation is that
strongly correlated predictors provide partially interchangeable
representations of the same signal. For illustration, suppose that
standardized predictors $X_1$ and $X_2$ have correlation $\rho$ and
that the true signal is $\beta X_1$. The best linear projection of this
signal onto $X_2$ is $\beta\rho X_2$, with mean squared approximation
error
\(
\beta^2(1-\rho^2).
\)
The loss from selecting $X_2$ instead of $X_1$ therefore decreases as
the correlation increases. In addition, stronger covariate correlation
leads to longer confidence intervals, which can further reduce the
observed difference in coverage. This pattern should not be interpreted
as showing that \naive post-selection inference becomes valid under
strong correlation. The \naive procedure still lacks an adjustment for
selection and continued to exhibit bias and undercoverage in a number of
settings.
% \subsubsection{Selection Accuracy and Interval Length}
% \label{sec:simulation-efficiency}
% The lower panel of Figure~\ref{fig:nn-inference-validity_oracle} compares
% the F1 score and length of confidence interval under the estimated covariance setting with
% $N=200$ and $p=500$. Because fission plug-in GLS
% and fission REML use the same selection response and therefore produce the
% same selected model, their F1 scores are identical by construction.

The F1 score generally increased with the signal-to-noise ratio. The two
data fission procedures, using REML and plug-in GLS for inference,
produced nearly identical F1 scores because they shared the same selection
response and therefore selected the same models. Empirical fission generally
achieved higher F1 scores than subject-level data splitting, particularly at
moderate and strong signal levels. F1 scores tended to decrease as the
covariate correlation increased, reflecting the greater difficulty of
identifying the structurally active variables when multiple predictors
contained similar information.

Mean confidence-interval length generally increased with the covariate
correlation, consistent with greater collinearity among the selected
predictors. The two data fission procedures also produced very similar
interval lengths across settings, and both yielded shorter intervals than
subject-level data splitting.

\Naive REML is not included in the interval-length comparison because its
inference was not well calibrated. Consequently, shorter intervals from the
\naive\ procedure would not represent a valid gain in inferential precision.

\subsubsection{Varying the Number of Subjects at $p=500$}
\label{sec:simulation-vary-N}
\begin{figure}[ht]
	\centering
	\includegraphics[width=\textwidth]{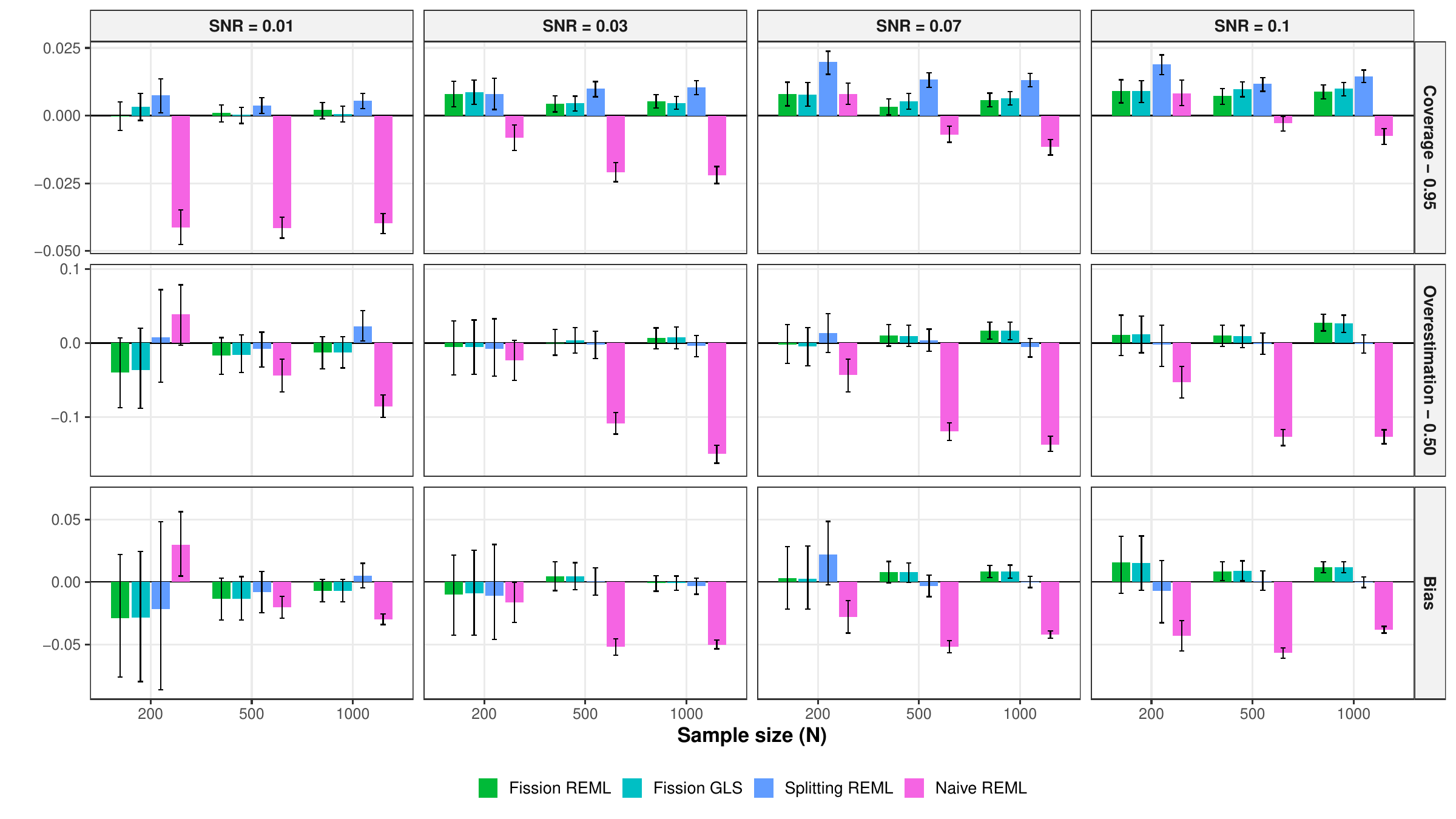}
    \vspace{-0.5em}
	\includegraphics[width=\textwidth]{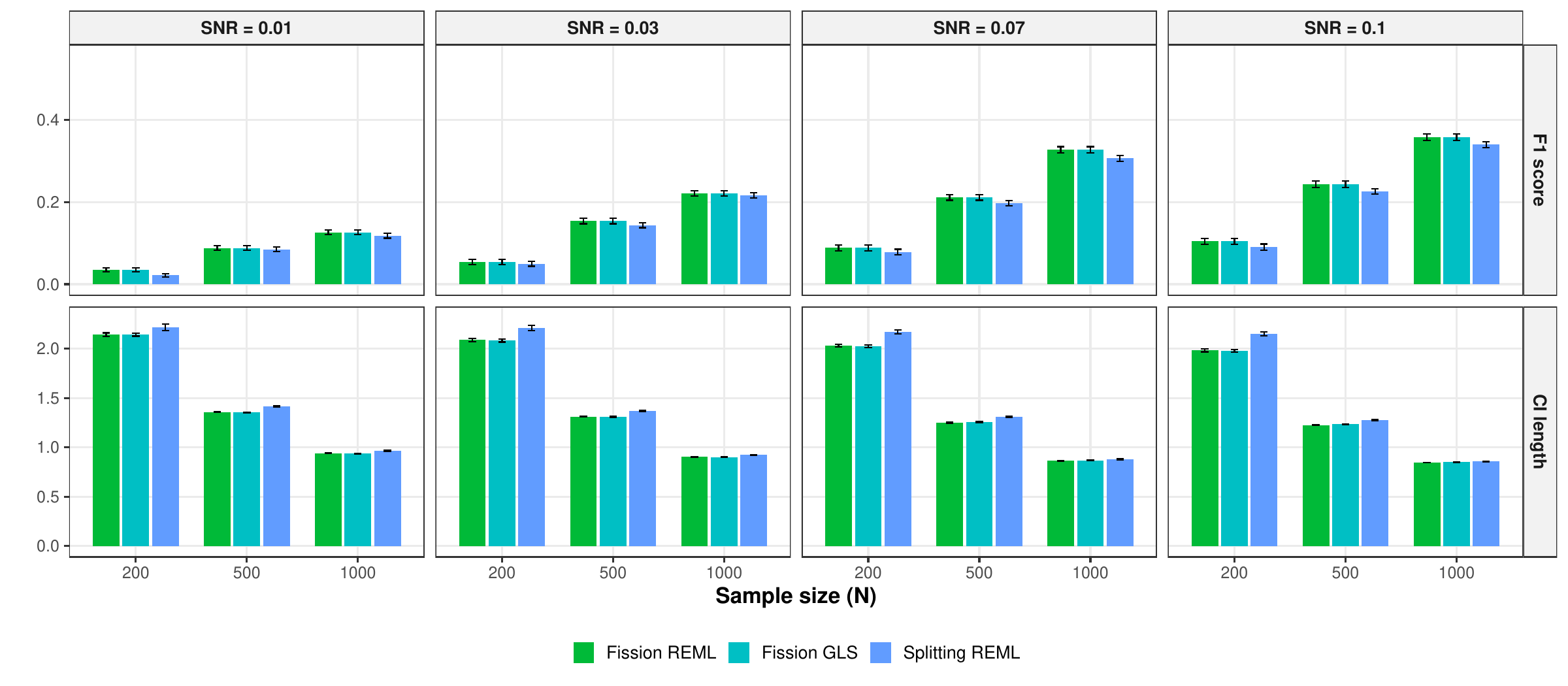}
	\caption{Inferential performance and selection accuracy of the neural network selector with nonlinear mean trajectory with $p=500$ and $\rho=0.3$.
	}
	\label{fig:nn-N}
\end{figure}
Figure~\ref{fig:nn-N} compares the data splitting and fission for
$N\in\{200,500,1000\}$ at $p=500$ and $\rho=0.3$. Increasing the
sample size generally improved F1 scores and substantially reduced
confidence-interval lengths. Increasing $N$, however, did not eliminate the inferential distortions
of \naive REML. Departures from the reference values persisted, although the variable-selection accuracy improved. Thus, increasing the sample
size and improving variable recovery alone were insufficient to ensure
valid post-selection inference when the same response was reused for
both selection and inference.

\subsubsection{Varying the Number of Covariates at $N=500$}
\label{sec:simulation-vary-P}
\begin{figure}[ht]
	\centering
	\includegraphics[width=\textwidth]{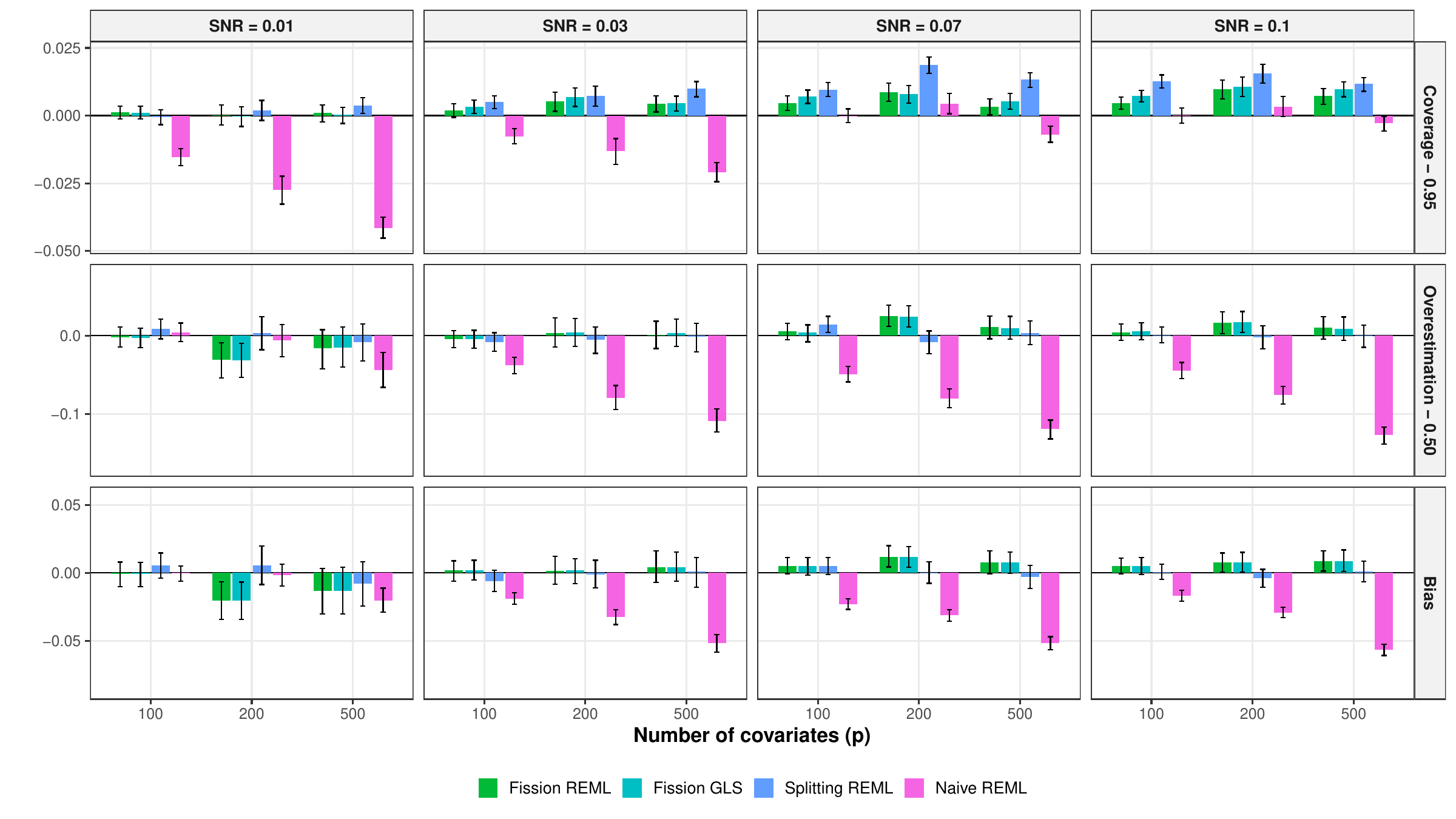}
    \vspace{-0.5em}
	\includegraphics[width=\textwidth]{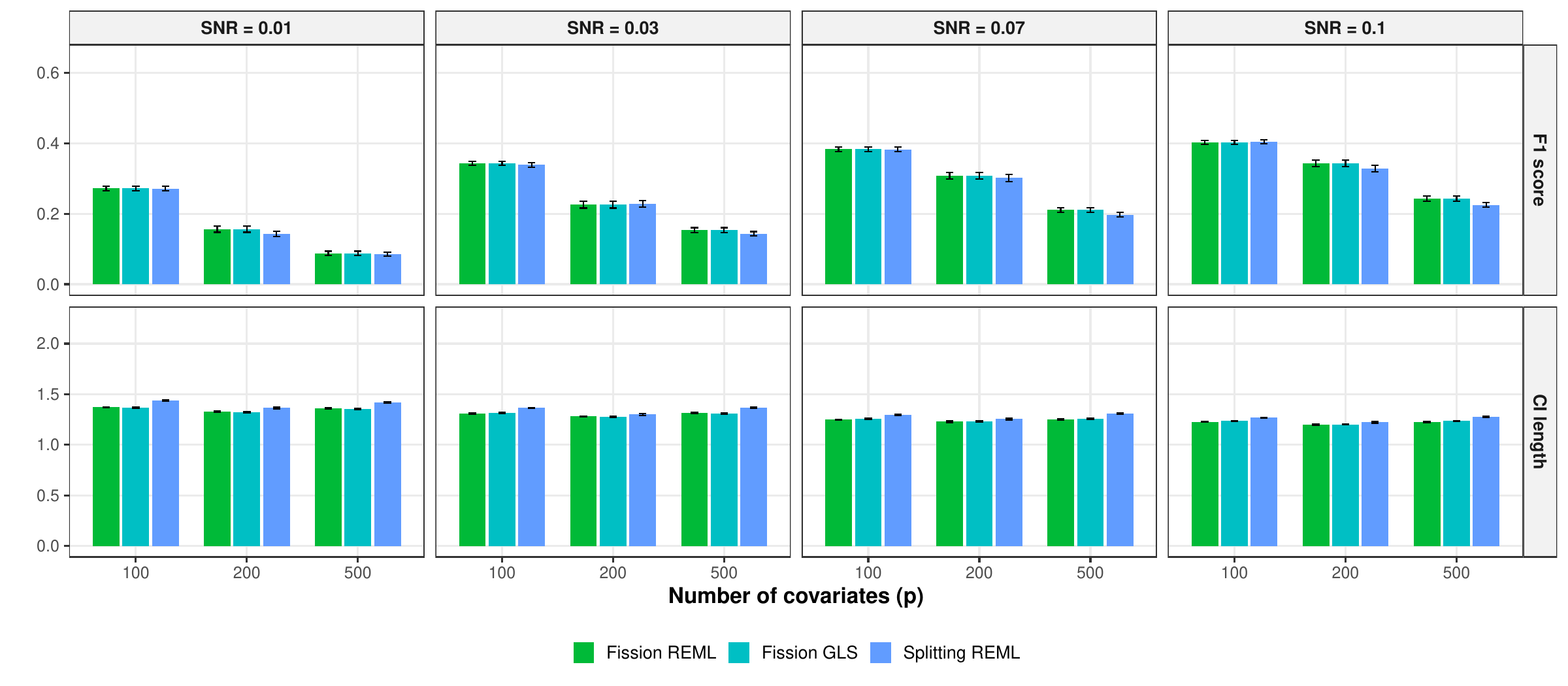}
	\caption{Inferential performance and selection accuracy of the neural network selector with nonlinear mean trajectory with $N=500$ and $\rho=0.3$.
	}
	\label{fig:nn-p}
\end{figure}
Figure~\ref{fig:nn-p} compares the data splitting and fission for $p\in\{100,200,500\}$ at $N=500$ and $\rho=0.3$ .
As $p$ increased, the F1 score generally decreased, reflecting the greater
difficulty of identifying the active variables from a larger
candidate set, while confidence interval length changed relatively little across different numbers of covariates $p$. The inferential distortions of \naive REML became
more pronounced as the number of candidate covariates
increased, particularly under weak signals, while the other procedures remained valid.
\section{Real Data Analysis}
\label{sec:real-data}

We illustrate the proposed selective-inference framework using data from
the Longitudinal Study of American Youth (LSAY)
\citep{millerLSAY2021}. 
%LSAY is a longitudinal study supported by the
%National Science Foundation that was designed to investigate the
%development of students' mathematics and science achievement and
%attitudes, together with student, family, school, teacher, and
%educational characteristics.
We focus on the younger cohort, which was followed annually from the
seventh through the twelfth grade. The outcome of interest is the
longitudinal mathematics achievement score. Measurement time was coded
according to study year as $t=0,\ldots,5$, with the first measurement
occasion coded as $t=0$.

We revisit the LSAY data analyzed by
\citet{zhang2026variable}, using the same general
covariate-prescreening strategy but focusing here on post-selection
inference. Variables collected after high school were excluded, and
baseline measurements were retained for covariates measured repeatedly
over time. When multiple categorical variables represented similar
constructs, the variable with fewer response categories was retained.
Nominal variables were represented by dummy indicators, ordinal variables
retained their original ordering, and continuous covariates were
standardized before variable selection.

We first removed observations with missing values in either the
mathematics outcome or any of the retained covariates. We then retained
participants with a baseline observation and at least one follow-up
observation. The resulting analytic sample contained 1,309 students,
6,087 student-wave observations, and 51 candidate baseline covariates.

We applied the mixed-effects model tree and LassoNet-style neural-network
selectors described in Section~\ref{sec:working_methods}. For each
selector, we considered four inferential procedures: \naive REML,
subject-level data splitting with REML, empirical fission with plug-in
generalized least squares (GLS), and empirical fission with REML. For
both fission procedures, selection was performed on the same selection
response, so that they produced the same selected working model and differed
only in the inference procedure.

Figure~\ref{fig:real-selection} summarizes the variables selected by the
two working selectors. For the neural-network selector, empirical fission,
data splitting, and \naive selection selected four, two, and four variables,
respectively. For the mixed-effects tree, seven variables were selected
under each procedure, although the selected sets were not identical
across the three selection schemes.

For comparison, in our previous analysis of the LSAY data
\citep{zhang2026variable}, BMLEV3 (math track level) was among the most
consistently identified covariates, with evidence of association with both the
baseline level and the growth rate of mathematics achievement. MTHV9\_7
(math course content level at Grade 7) was also identified as a potentially
important predictor of the baseline level, while BMSCHTRK (ability grouping
in math classes) was consistently identified for the baseline component.
In the present analysis, BMLEV3 was again selected under empirical fission
by both working selectors, while MTHV9\_7 was selected by the mixed-effects
tree. BMSCHTRK was not selected by either empirical-fission selector.

\begin{figure}[!htbp]
	\centering
	\includegraphics[width=\textwidth]{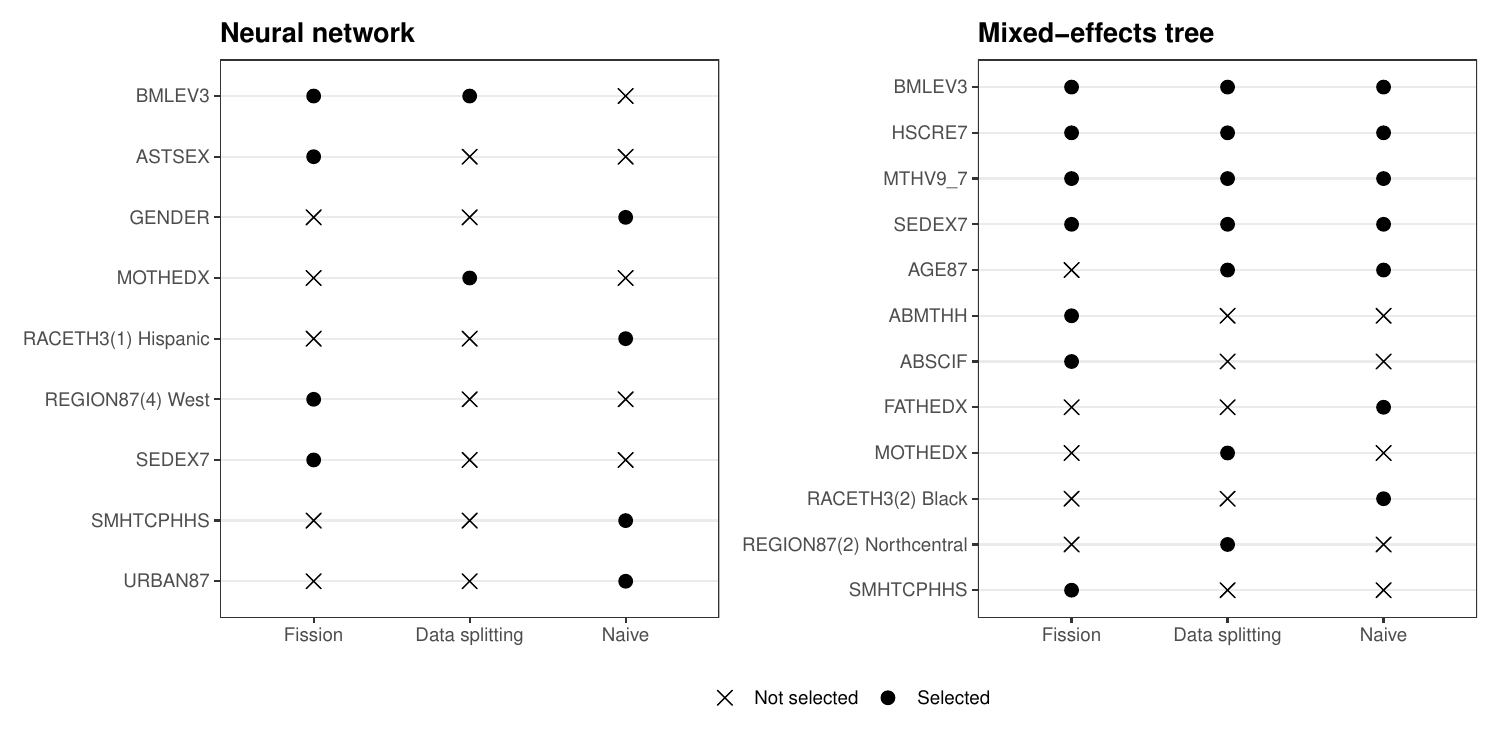}
	\caption{Covariates selected by the neural-network and mixed-effects-tree
		working selectors under empirical fission, subject-level data splitting,
		and \naive selection.}
	\label{fig:real-selection}
\end{figure}

Figure~\ref{fig:real-fission} compares plug-in GLS and REML inference
for the fission-selected models. For both the neural-network and
mixed-effects-tree selectors, the two procedures produced very similar
point estimates and confidence intervals across the selected covariates.
This empirical agreement suggests that, in this application, refitting
the selected model by REML leads to conclusions similar to those from GLS.

\begin{figure}[!htbp]
	\centering
	\includegraphics[width=\textwidth]{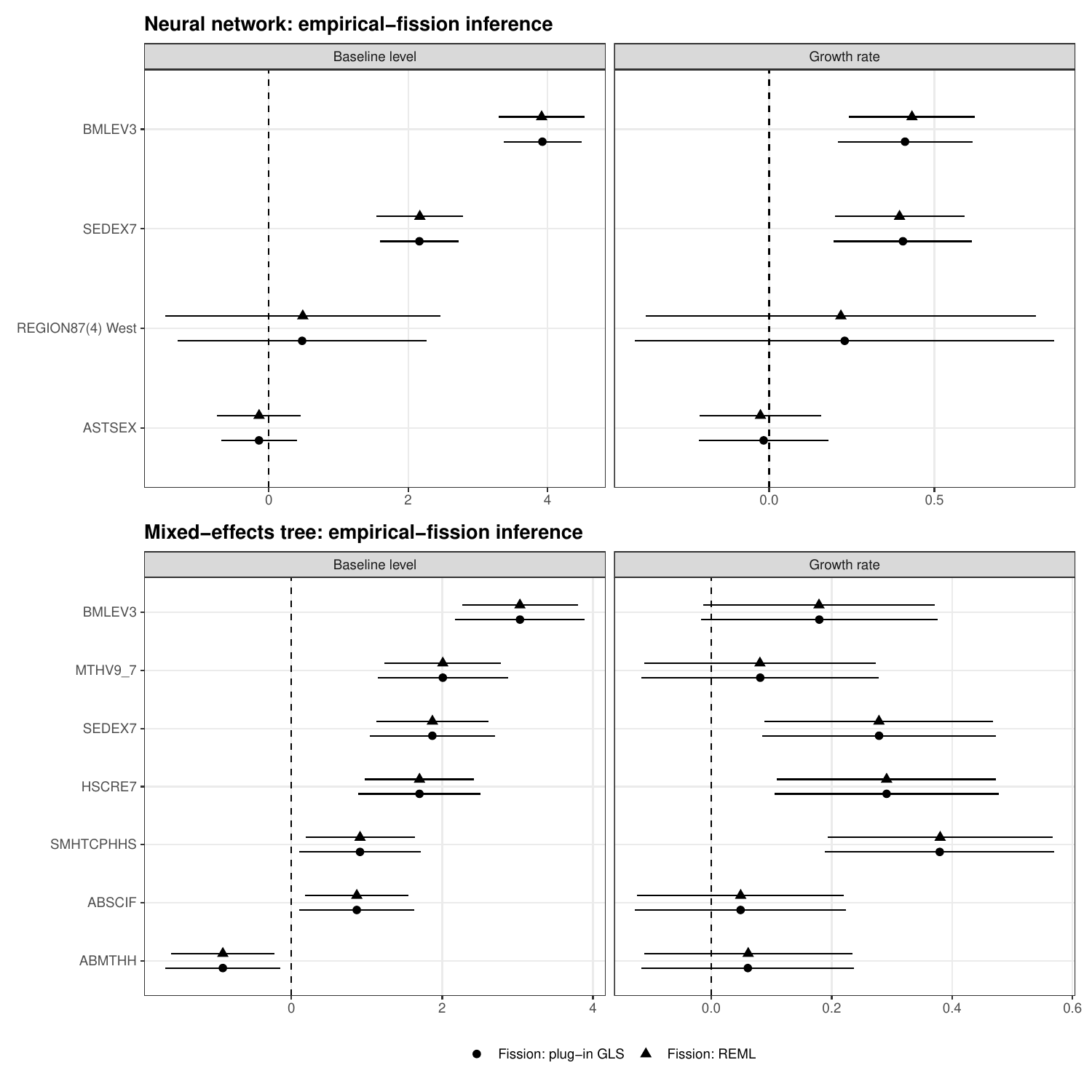}
	\caption{Post-selection projection-coefficient estimates and pointwise
		95\% confidence intervals for the empirical-fission-selected models.}
	\label{fig:real-fission}
\end{figure}
\section{Discussion}
\label{sec:discussion}

In this paper, we develop a post-selection inference framework for
longitudinal growth-curve models based on data fission. The proposed procedure provides exact
inference when the covariance structure is known and asymptotically valid
inference when it is estimated.
%A key challenge in longitudinal data is that repeated measurements from the same participant are correlated, so any valid inferential procedure must account for the within-subject covariance structure. 
%When the outcome covariance structure is known, we construct separate response vectors for selection and inference by adding
%and subtracting Gaussian noise in a way that ensures the two randomized
%vectors of responses are independent conditional on the observed design. When the
%covariance structure is unknown, we extend the construction by replacing the
%true covariance matrix with an estimated one. We establish asymptotic validity
%for the resulting plug-in procedure under suitable conditions, and our
%simulations show that the practical empirical-fission implementation perform well in finite samples. 
%When the
%covariance matrix is estimated sufficiently accurately, the residual
%dependence between the two responses becomes small. After selection, either plug-in GLS or REML can be
%used for inference on the selected working model. Plug-in GLS is directly
%supported by our theoretical analysis, whereas REML follows the more familiar
%mixed-effects modeling workflow. In our simulations and real-data analysis,
%the two approaches produce very similar estimates and confidence intervals.
An important advantage of the proposed framework is its flexibility with
respect to the choice of variable-selection procedure. The mixed-effects
tree and neural-network-based procedures considered in
Section~\ref{sec:working_methods} are only illustrative. Other
machine-learning or variable-selection methods, as well as different tuning
or hyperparameter choices, can be incorporated into the same framework
without changing the subsequent inferential procedure.
%This flexibility is particularly useful for modern selection methods with
%complicated selection events, allowing the selection procedure to be treated
%largely as a black box for subsequent inference.
%This flexibility is particularly useful because many modern
%variable-selection methods induce complicated selection events that are
%difficult or impractical to characterize analytically. Data fission sidesteps this difficulty by
%separating the information used for selection from that used for inference.
%In practice, once the model has been selected, standard estimation and
%inference can be carried out using the inference response for the selected
%working model. 
Thus, the selection procedure can be treated essentially as a
black box, without explicitly characterizing its selection event or deriving
a selector-specific conditional distribution.

A natural alternative to data fission is subject-level data splitting.
Both approaches separate the information used for variable selection from
that used for subsequent inference, but they allocate the available
information differently. Data splitting assigns different participants to
the selection and inference stages, so only a subset of the sample contributes
to each stage. Data fission retains all participants
	at both stages, but relies on stronger assumptions about Gaussianity and the
	covariance structure.
%In contrast, data fission retains all participants in both
%stages and separates the information used for selection and inference by
%adding appropriately structured Gaussian random noise, which has the same covariance structure as the outcome. When
%the covariance structure is reasonably well specified, this construction
%can use the available information more efficiently. Consistent with this
%intuition, our simulations show that data fission generally achieved better
%variable-selection accuracy and shorter confidence intervals than
%subject-level data splitting while maintaining similar inferential
%calibration.
%These efficiency gains, however, come with stronger modeling assumptions.
%In particular, the construction studied here relies on the Gaussian
%assumption and the covariance structure induced by the growth-curve model
%considered in this paper. Different longitudinal models may imply different
%forms of the marginal covariance matrix and require different procedures for
%estimating its variance components. Consequently, the proposed fission
%construction may need to be adapted before being applied to other
%longitudinal models. 
When these 
assumptions are questionable, or when the
covariance structure cannot be reliably specified or estimated, subject-level
data splitting may provide a simpler and more robust alternative. Although the
general data fission framework of \citet{leiner2025data} also provides
constructions beyond the Gaussian setting, these constructions are not
directly tailored to the correlated longitudinal mixed-effects setting
considered here. %Extending data fission to longitudinal outcomes under weaker distributional assumptions is therefore an important direction for future work.

In this paper, we focus on inference for linear projection parameters in the
selected working model. This makes the definition and interpretation of the
inferential target particularly important. When the relationships between the covariates and the growth parameters are
nonlinear,
there may be no uniquely defined linear coefficient that directly represents
the association of a particular covariate with the trajectory. Even under a
linear mean structure, when covariates are correlated, the coefficient of a
given covariate can depend on which other covariates are included in the
working model. We therefore define the inferential target as the
covariance-weighted linear projection under the selected working model. This
target summarizes how the selected covariates are linearly associated with
the underlying mean trajectory within the selected model. It is also important
to distinguish whether a variable is useful for describing the trajectory
from whether its corresponding projection coefficient is nonzero. A variable
may be important for capturing a nonlinear effect or interaction while its
linear projection coefficient is small or even zero. Thus, variable selection
and subsequent projection inference address related but distinct questions:
the former identifies variables useful for describing trajectory
heterogeneity, whereas the latter quantifies their best linear representation
within the selected working model.

Several extensions are of interest for future work. First, extending the
framework to non-Gaussian longitudinal outcomes would broaden its range of
applications. Second, more general random-effects and within-participant
dependence structures could be accommodated beyond the growth-curve
covariance model considered here. Third, extending the framework to
time-varying covariates would allow selection and inference to incorporate
predictors that evolve over time.
%Several extensions are of interest for future work. First, the current
%construction relies on the Gaussian assumption for the longitudinal outcome.
%Extending data fission to discrete outcomes and non-Gaussian continuous outcomes would
%broaden the range of potential applications. Second, the covariance structure
%considered in this paper is induced by the specific growth-curve model used
%throughout the analysis. Other longitudinal models may involve different
%random-effects or within-subject dependence structures, and developing a more
%general framework that accommodates such structures would be worthwhile. A
%further extension is to allow time-varying covariates. Many longitudinal
%studies include covariates that evolve over time. Extending both variable
%selection and subsequent projection inference to this setting would broaden
%the practical scope of the proposed framework. %Finally, developing theoretical guarantees for full-sample covariance estimation, which is more commonly used in practice than the pilot-assisted construction considered in our theory, is another important direction for future work.\textcolor{red}{not quite understanding the last sentence}
\section*{Acknowledgment}
The research of Cao is partially supported by the University of Wisconsin-Madison Office of the Vice Chancellor for Research (OVCR), NSF 2553817, and NSF 2311249. The research of Zhang is partially supported by the Institute of Education Sciences R305B240029. 
\newpage
\appendix
\begin{center}
	\Large\bfseries Appendix
\end{center}
\vspace{1em}
\begingroup
\pretocmd{\section}{\clearpage}{}{}

\section{Discussion of assumptions}
\label{sec:assum_discussion}
Assumption~1 adopts the usual i.i.d. participant-level sampling framework.
The positivity conditions are standard in growth-curve models \citep{bollen2006latent}. 

Assumption~2 is inspired by \citet{rasines2023splitting}, who estimate the
variance using only a subset of the observations in order to limit the
dependence between variance estimation and subsequent inference. In their
theoretical analysis, the variance estimator is based on the first
$\lfloor n/2\rfloor$ observations and is assumed to satisfy root-$n$ rate. Their result
is developed for a linear-model setting and imposes additional structure on
the selection event. In contrast, our framework allows a general measurable
selection procedure and requires the pilot sample to satisfy
$N_s/N\to 0$. Consequently, the covariance estimator is only required to
converge at the corresponding root-pilot-sample-size rate,
$O_p(N_s^{-1/2})$.

 The root-$N_s$ covariance-estimation rate in Assumption~2 can also be
motivated by a standard residual-based covariance decomposition.  For illustration, we consider a simplified setting in which each participant
has the same fixed number of repeated observations. Suppose that a conditional mean function $\mu(\cdot)$ is first
estimated by $\hat \mu(\cdot)$ and that the covariance is subsequently estimated from the
resulting residuals $Y_i-\hat \mu(X_i)$. Let
\[
\Delta_i=\widehat{\mu}(X_i)-\mu(X_i),
\qquad
\varepsilon_i=Y_i-\mu(X_i).
\]
For a residual-based covariance estimator, its difference from the true
covariance can be decomposed schematically as
\[
\widehat{\Sigma}-\Sigma_0
=
\frac{1}{N_s}\sum_{i=1}^{N_s}
\{\varepsilon_i\varepsilon_i^\top-\Sigma_0\}
-
\frac{1}{N_s}\sum_{i=1}^{N_s}
\{\varepsilon_i\Delta_i^\top+\Delta_i\varepsilon_i^\top\}
+
\frac{1}{N_s}\sum_{i=1}^{N_s}
\Delta_i\Delta_i^\top.
\]
The
first term has the usual order
$O_p(N_s^{-1/2})$. If
\(
r_{N_s}
=
\|\widehat{\mu}-\mu\|^2_{L_2(P)}
\)
, the cross term is typically of
order $O_p(r_{N_s}^{1/2}N_s^{-1/2})$, while the quadratic remainder is of order
$O_p(r_{N_s})$. Consequently,
$r_{N_s}=O_p(N_s^{-1/2})$ is sufficient for the residual-based covariance
estimator to retain the root-$N_s$ rate. 

Rates of the required order are attainable under specific regularity
conditions for both tree-based and neural-network estimators. For example,
\citet{mourtada2020minimax} establish minimax prediction rates for Mondrian
trees and forests under sufficiently smooth, low-dimensional settings. These rates
make $r_{N_s}$ of order $N_s^{-1/2}$ or smaller. Analogous sufficiently
fast rates are available for deep ReLU networks under smoothness and
structural assumptions \citep{schmidthieber2020nonparametric}. These results
concern mean-function estimation rather than the specific mixed-effects
covariance estimator used here, but they illustrate settings in which the
root-$N_s$ rate assumed in Assumption~2 is attainable.

Assumption~3 is motivated by Theorem~\ref{thm:beta_est}. When
$M_I=M_S$, the population projection target is invariant to the covariance
parameter $\theta$. This invariance eliminates changes in the projection
target caused solely by replacing the true covariance parameter with an
estimated one. The condition defining $B_0(\theta)$ ensures that the
covariance weighting does not depend on the baseline covariates at the
population level. A sufficient condition is that the number and timing of
repeated measurements are independent of the baseline covariates and that
the covariance structure depends on the observation schedule only through
the common parameter $\theta$.

Assumption~4 imposes standard population-level design and moment regularity
conditions. The eigenvalue condition is imposed on the full working design
matrix and ensures that the population design is nondegenerate. Similar
conditions are commonly used in selective-inference theory
\citep[e.g.,][]{zhao2022selective}. Since $X_{i,M}$ is obtained by selecting
columns from the full design matrix $X_{i,\mathrm{full}}$, the same lower and upper
eigenvalue bounds automatically carry over to every candidate submodel $M$;
see Lemma~\ref{lem:full_to_submodel} in appendix. For the residual term, a sufficient set
of primitive conditions is a finite fourth moment of the covariance-weighted
full design,
\[
\mathbb E\left[
\left\|
\Sigma_i(\theta_0)^{-1/2}X_{i,\mathrm{full}}
\right\|_s^4
\right]<\infty,
\]
together with
\[
\mathbb E\left[
\left\|
X_{i,\mathrm{full}}^\top\Sigma_i(\theta_0)^{-1}\mu_i
\right\|_2^2
\right]<\infty.
\]
Related fourth-moment conditions on the design and moment conditions on
design-response products are used in regression with submodel
\citep{kuchibhotla2023uniform}.

Assumption~5 requires the probability of the conditioning event to be
bounded away from zero. Such non-vanishing selection probability
conditions are common in asymptotic selective-inference theory
\citep[e.g.,][]{rasines2023splitting,tian2017asymptotics}.

\section{Proofs}
\label{sec:proof}
\subsection{Notations}
\label{sec:estimated_covariance_notation}

We briefly collect the notations used repeatedly below. There are $N$
independent participants, with total number of observations
$n=\sum_{i=1}^N T_i\asymp N$. Conditional on the realized design
$\mathcal D_N$ in \eqref{eq:D_N}, from \eqref{eq:Y_i}, we have
\[
Y_i\sim N_{T_i}(\mu_i,\Sigma_{0i}),
\quad\mu_i=Z_im(X_i),\quad 
\Sigma_{0i}=Z_iD_0Z_i^\top+\sigma_0^2I_{T_i},\quad i=1,\ldots, N.
\]
Define
\[
\Sigma_0=\operatorname{blockdiag}
(\Sigma_{01},\ldots,\Sigma_{0N}).
\]

For a fixed candidate model $M$, let $X_M$ denote its working design. The
realized-design and population projection targets are denoted by
\begin{equation}
\label{eq:beta_MN}
\beta_{M,N}^{D}
=
(X_M^\top\Sigma_0^{-1}X_M)^{-1}
X_M^\top\Sigma_0^{-1}\mu
\end{equation}
and 
\[\beta_M^P=\{\mathbb E(X_{i,M}^\top\Sigma_{0i}^{-1}X_{i,M})\}^{-1}
\mathbb E\{X_{i,M}^\top\Sigma_{0i}^{-1}\mu_i\}.\]
Here $\beta^D_{M,N}$ has the same meaning as $\beta_{M}^D$ defined in \eqref{eq:beta_D}, and $N$ is used to emphasize the sample size. Their residuals are
\[
r_M=\mu-X_M\beta_{M,N}^{D},
\qquad
r_{Mi}^P=\mu_i-X_{i,M}\beta_M^P.
\]
In particular,
\[
X_M^\top\Sigma_0^{-1}r_M=0.
\]
The covariance is parameterized by
\[
\theta
=
(\sigma^2,d_{11},d_{12},d_{21},d_{22})^\top
\in\Theta\subset\mathbb R^5,
\]
and define, 
\begin{equation}
\label{eq:var_para}
\sigma^2(\theta)=\sigma^2, \quad \text{and}\quad
D(\theta)
=
\begin{pmatrix}
d_{11} & d_{12}\\
d_{21} & d_{22}
\end{pmatrix},
d_{12}=d_{21}.
\end{equation}
The parameter space $\Theta$ is restricted so that
$D(\theta)$ is positive definite and $\sigma^2(\theta)>0,\forall \theta\in \Theta$.

Let $\mathcal P$ be a subject-level pilot set, with index uniformly drawn from $1,\ldots, N$,
and the size is
\(
N_s=|\mathcal P|.
\)
Define its complement \(
\mathcal P^c=\{1,\ldots,N\}\setminus\mathcal P.\)
Let $\widehat\theta$ be the resulting covariance parameter estimated from $\mathcal P$. The estimated variance is
\[
\widehat\Sigma_i=\Sigma_i(\widehat\theta),\text{ and } 
\widehat\Sigma
=
\operatorname{blockdiag}
(\widehat\Sigma_1,\ldots,\widehat\Sigma_N).
\]
The empirical fission constructs the selection and inference responses as follows:
\[
\widehat Y^{\mathrm{sel}}=Y+\widehat W,
\qquad
\widehat Y^{\mathrm{inf}}=Y-\widehat W,
\]
where
\[
\widehat W\mid Y,\mathcal D_N,\mathcal P\sim N_n(0,\widehat\Sigma).
\]
For a fixed model $M$, let 
\[
\mathcal A_M=\{y:\mathcal S(y)=M\},
\]
defined in \eqref{eq:A_M}, consist of all possible response vectors that
would lead the selection procedure $\mathcal S$ to select exactly the
model $M$.

Define
\begin{equation}
\label{eq:AM_hat}
    A_M
=
(X_M^\top\Sigma_0^{-1}X_M)^{-1}X_M^\top\Sigma_0^{-1},
\qquad
\widehat A_M
=
(X_M^\top\widehat\Sigma^{-1}X_M)^{-1}X_M^\top\widehat\Sigma^{-1},
\end{equation}
and
\[
H_M=X_M^\top\Sigma_0^{-1}X_M,
\qquad
\widehat H_M=X_M^\top\widehat\Sigma^{-1}X_M.
\]
Then the target in \eqref{eq:beta_MN} and its estimate can be written as
\[
\beta_{M,N}^{D}=A_M\mu,
\qquad
\widehat\beta_M=\widehat A_M\widehat Y^{\mathrm{inf}},
\]
and the estimated covariance matrix for $\hat \beta_{M}$, introduced in \eqref{eq:hat_vm}, is
\begin{equation}
\label{eq:hat_VM}
\widehat V_M=2(X_M^\top \widehat\Sigma_{}^{-1}X_M)^{-1}=2\widehat H_M^{-1}=2\widehat A_M\widehat \Sigma\widehat A_M^\top .
\end{equation}

For $\theta\in \Theta_0$ where $\Theta_0$ is the fixed neighborhood of $\theta_0$, define
\[
E(\theta)
=
\Sigma_0^{-1/2}\{\Sigma(\theta)-\Sigma_0\}\Sigma_0^{-1/2}.
\]
The relative covariance perturbation is
\[
\widehat E
=
\Sigma_0^{-1/2}(\Sigma(\hat\theta)-\Sigma_0)\Sigma_0^{-1/2}.
\]

Now we define notations that are specifically used for pilot and non-pilot subjects. Without loss of generality, we order the subjects so that the pilot
subjects are indexed by
\[
\mathcal P=\{1,\ldots,N_s\},
\]
and the non-pilot subjects are indexed by
\[
\mathcal P^c=\{N_s+1,\ldots,N\}.
\]
Accordingly, define
\[
n_{\mathcal P}
=
\sum_{i=1}^{N_s}T_i,
\qquad
n_{\mathcal P^c}
=
\sum_{i=N_s+1}^{N}T_i.
\]
We partition the stacked quantities according to the pilot and
non-pilot subjects:
\[
\mu
=
\begin{pmatrix}
	\mu_{\mathcal P}\\
	\mu_{\mathcal P^c}
\end{pmatrix},
\qquad
Y
=
\begin{pmatrix}
	Y_{\mathcal P}\\
	Y_{\mathcal P^c}
\end{pmatrix},
\qquad
X_M
=
\begin{pmatrix}
	X_{M,\mathcal P}\\
	X_{M,\mathcal P^c}
\end{pmatrix},
\]
where
\[
\mu_{\mathcal P}
=
\begin{pmatrix}
	\mu_1\\
	\vdots\\
	\mu_{N_s}
\end{pmatrix},
\qquad
Y_{\mathcal P}
=
\begin{pmatrix}
	Y_1\\
	\vdots\\
	Y_{N_s}
\end{pmatrix},
\qquad
X_{M,\mathcal P}
=
\begin{pmatrix}
	X_{1,M}\\
	\vdots\\
	X_{N_s,M}
\end{pmatrix},
\]
and
\[
\mu_{\mathcal P^c}
=
\begin{pmatrix}
	\mu_{N_s+1}\\
	\vdots\\
	\mu_N
\end{pmatrix},
\qquad
Y_{\mathcal P^c}
=
\begin{pmatrix}
	Y_{N_s+1}\\
	\vdots\\
	Y_N
\end{pmatrix},
\qquad
X_{M,\mathcal P^c}
=
\begin{pmatrix}
	X_{N_s+1,M}\\
	\vdots\\
	X_{N,M}
\end{pmatrix}.
\]

For square matrices $A_1,\ldots,A_K$, we use
\[
\operatorname{blockdiag}(A_1,\ldots,A_K)
\]
to denote the block-diagonal matrix with $A_1,\ldots,A_K$ as its
diagonal blocks and zero matrices as its off-diagonal blocks.
Accordingly, the covariance matrix can be decomposed as
\[
\Sigma(\theta)
=
\operatorname{blockdiag}
\left(
\Sigma_{\mathcal P}(\theta),
\Sigma_{\mathcal P^c}(\theta)
\right),
\]
where
\[
\begin{aligned}
	\Sigma_{\mathcal P}(\theta)
	&=
	\operatorname{blockdiag}
	\left(
	\Sigma_1(\theta),\ldots,\Sigma_{N_s}(\theta)
	\right),\\
	\Sigma_{\mathcal P^c}(\theta)
	&=
	\operatorname{blockdiag}
	\left(
	\Sigma_{N_s+1}(\theta),\ldots,\Sigma_N(\theta)
	\right).
\end{aligned}
\]

As $\Sigma(\theta)$ is a block diagonal matrix, we also have
\begin{equation}
\label{eq:hat_E_dec}
E(\theta)=\mathrm{blockdiag}(E_{\mathcal P}(\theta),E_{\mathcal P^c}(\theta)),
\end{equation}
where
\[
\begin{aligned}
    E_{\mathcal P}(\theta)&=\Sigma^{-1/2}_{\mathcal P}(\theta_0)\{\Sigma^{}_{\mathcal P}(\theta)-\Sigma^{}_{\mathcal P}(\theta_0)\}\Sigma^{-1/2}_{\mathcal P}(\theta_0),\\
    E_{\mathcal P^c}(\theta)&=\Sigma^{-1/2}_{\mathcal P^c}(\theta_0)\{\Sigma^{}_{\mathcal P^c}(\theta)-\Sigma^{}_{\mathcal P^c}(\theta_0)\}\Sigma^{-1/2}_{\mathcal P^c}(\theta_0).
\end{aligned}
\]
We define $\hat \Sigma_{\mathcal P}=\Sigma_{\mathcal P}(\hat\theta), \hat \Sigma_{\mathcal P^c}=\Sigma_{\mathcal P^c}(\hat\theta),$ $\widehat E_{\mathcal P}=E_{\mathcal P}(\hat \theta)$ and $\widehat E_{\mathcal P^c}=E_{\mathcal P^c}(\hat \theta).$
Then the selection and inference responses are partitioned by:
\[
\widehat Y^{\mathrm{sel}}
=
\begin{pmatrix}
	\widehat Y_{\mathcal P}^{\mathrm{sel}}\\
	\widehat Y_{\mathcal P^c}^{\mathrm{sel}}
\end{pmatrix},
\qquad
\widehat Y^{\mathrm{inf}}
=
\begin{pmatrix}
	\widehat Y_{\mathcal P}^{\mathrm{inf}}\\
	\widehat Y_{\mathcal P^c}^{\mathrm{inf}}
\end{pmatrix},
\]
with
\[
\widehat Y_{\mathcal P}^{\mathrm{sel}}
=
Y_{\mathcal P}+\widehat W_{\mathcal P},
\qquad
\widehat Y_{\mathcal P}^{\mathrm{inf}}
=
Y_{\mathcal P}-\widehat W_{\mathcal P},
\]
and
\[
\widehat Y_{\mathcal P^c}^{\mathrm{sel}}
=
Y_{\mathcal P^c}+\widehat W_{\mathcal P^c},
\qquad
\widehat Y_{\mathcal P^c}^{\mathrm{inf}}
=
Y_{\mathcal P^c}-\widehat W_{\mathcal P^c}.
\]

Let $\mathcal G_{\mathcal P}$ 
denote the sigma-field generated by
$\mathcal D_N$, $\mathcal P$, $Y_{\mathcal P}$, and
$\widehat W_{\mathcal P}$. Conditional on $\mathcal G_{\mathcal P}$, $\widehat\theta$ and $\widehat\Sigma$ are fixed, while the
observations in the non-pilot set remain random.

We write $\mathcal L(U\mid\mathcal G)$ for the conditional law of a random
element $U$ given $\mathcal G$.
We use the following norm notation throughout this section.
Let $\|Z\|_s$ be the spectral norm of a real-valued matrix $Z$ which is defined as
\[\|Z\|_s=\sqrt{\lambda_{\max}(Z^TZ)},\]
where $\lambda_{\max}(\cdot)$ is the maximum eigenvalue of a matrix.
For a vector $z$, define the $L_2$ norm as 
$$
\|z\|_2=\sqrt{z^Tz}.
$$
For a function $z(x)$ define the $L_2$ norm as
$$
\|z(x)\|_2=\left\{\int_{-\infty}^\infty z^2(x)dx\right\}^{1/2}.
$$
If $X\sim P$, the $L_{P,k}$ norm of $z(x)$ is 
$$
\|z(X)\|_{P,k}=\{\mathbb{E}_P(|z(X)|^k)\}^{1/k}.
$$

Let $(S_1,\mathcal G_1,Q_1)$ and $(S_2,\mathcal G_2,Q_2)$ be two
probability spaces. Their product sigma-field is
\[
\mathcal G_1\otimes\mathcal G_2
=
\sigma\!\left(
\left\{
B_1\times B_2:
B_1\in\mathcal G_1,\;
B_2\in\mathcal G_2
\right\}
\right),
\]
that is, the smallest sigma-field on $S_1\times S_2$ containing all
measurable rectangles $B_1\times B_2$.

The product measure $Q_1\otimes Q_2$ is the unique probability measure
on $(S_1\times S_2,\mathcal G_1\otimes\mathcal G_2)$ satisfying
\begin{equation}
\label{eq:product_prob}
    (Q_1\otimes Q_2)(B_1\times B_2)
=
Q_1(B_1)Q_2(B_2),
\qquad
B_1\in\mathcal G_1,\quad B_2\in\mathcal G_2.
\end{equation}
Thus, $Q_1\otimes Q_2$ represents the joint distribution of independent
random elements with marginal distributions $Q_1$ and $Q_2$.
Next, let $T:S_1\to S_2$ be a measurable map, meaning that
\[
T^{-1}(B)
=
\{x\in S_1:T(x)\in B\}
\in\mathcal G_1
\qquad
\text{for every }B\in\mathcal G_2.
\]
The pushforward measure $Q_1$ under $T$, denoted by $Q_1\circ T^{-1}$
is the probability measure on $(S_2,\mathcal G_2)$ defined by
\begin{equation}
    Q_1\circ T^{-1}(B)
=
Q_1\{T^{-1}(B)\}
=
Q_1\{x\in S_1:T(x)\in B\},
\qquad B\in\mathcal G_2.
\label{eq:pushforward}
\end{equation}

\subsection{Empirical-Process Preliminaries}
\label{sec:empirical_process_preliminaries}

Let $O_1,\ldots,O_N$ be independent observations from a probability
measure $P$. For a measurable real-valued function $f$, define
\[
\mathbb P_Nf=\frac1N\sum_{i=1}^Nf(O_i),
\text{ and }
\mathbb G_Nf=\sqrt N(\mathbb P_N-P)f.
\]
For a class $\mathcal F$, write
\[
\|\mathbb G_N\|_{\mathcal F}
=
\sup_{f\in\mathcal F}|\mathbb G_Nf|.
\]
A measurable function $F$ is an envelope of $\mathcal F$ if
$|f|\le F$ for every $f\in\mathcal F$.

For a metric space $(\mathcal T,d)$, let
\[
N(\varepsilon,\mathcal T,d)
\]
denote the covering number, that is, the minimum number of
$d$-balls of radius $\varepsilon$ needed to cover $\mathcal T$.
For the Euclidean ball
\[
B_q(0,r)=\{x\in\mathbb R^q:\|x\|_2\le r\},
\]
the covering number is bounded by \citep[Lemma 2.5]{van2000empirical}
\begin{equation}
\label{eq:ball_bracket}
N\left(
\varepsilon r,
B_q(0,r),
\|\cdot\|_2
\right)
\lesssim
\left(1+\frac{4}{\varepsilon}\right)^q.
\end{equation}

For measurable real-valued functions $l$ and $u$ with $l\le u$
pointwise, define the bracket
\[
[l,u]
=
\{f:l\le f\le u\}.
\]
The bracket is called an $\varepsilon$-bracket in $L_2(P)$ if
\(
\|u-l\|_{P,2}\le \varepsilon.
\)
%where $\|g\|_{P,2}=\{\mathbb E_P(|g|^p)\}^{1/2}$
The bracketing number
\[
N_{[]}(\varepsilon,\mathcal F,L_2(P))
\]
is the minimum number of $\varepsilon$-brackets whose union contains
$\mathcal F$.
Given an envelope $F$, define the normalized bracketing entropy integral
\[
J_{[]}(\delta,\mathcal F,L_2(P))
=
\int_0^\delta
\sqrt{
1+
\log N_{[]}\!\left(
\varepsilon\|F\|_{P,2},
\mathcal F,
L_2(P)
\right)
}
\,d\varepsilon.
\]

We need the following bracketing maximal inequality \citep[Theorem~2.14.2]{van1996weak}.
\begin{corollary}
\label{thm:bracketing_maximal}
Let $\mathcal F$ be a class of real-valued measurable
functions with measurable envelope $F$. If
\[
J_{[]}(1,\mathcal F,L_2(P))
\|F\|_{P,2}
<
\infty,
\]
then
\[
\left\|
\|\mathbb G_N\|_{\mathcal F}^{*}
\right\|_{P,1}
\lesssim
J_{[]}(1,\mathcal F,L_2(P))
\|F\|_{P,2}.
\]
Here $\|\mathbb G_N\|_{\mathcal F}^{*}$ denotes the minimal measurable
majorant of $\|\mathbb G_N\|_{\mathcal F}$.
\end{corollary}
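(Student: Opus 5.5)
Since this is Theorem~2.14.2 of \citet{van1996weak}, in the paper we would simply cite it. If we had to reprove it, the plan is the classical bracketing chaining argument with adaptive truncation. Both sides scale linearly in $F$, and $J_{[]}$ is defined with normalized radii $\varepsilon\|F\|_{P,2}$. So without loss of generality $\|F\|_{P,2}=1$, and the goal is $\mathbb E^*\|\mathbb G_N\|_{\mathcal F}\lesssim J_{[]}(1,\mathcal F,L_2(P))$.

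\emph{Setup of the chain.} For each $q\ge 0$, set $N_q=N_{[]}(2^{-q},\mathcal F,L_2(P))$. We pick a partition of $\mathcal F$ into $N_q$ classes of $2^{-q}$-brackets. Replacing it by the common refinement of the partitions at levels $0,\ldots,q$ makes the partitions nested, with at most $N_0\cdots N_q$ classes. In each class fix a representative $\pi_q f$ and a bracket-width function $\Delta_q f\le F$ with $\|\Delta_q f\|_{P,2}\le 2^{-q}$. Define truncation levels
\[
a_q=\frac{2^{-q}}{\sqrt{1+\log N_{q+1}}}.
\]
For each $f$, let $\tau f$ be the first $q$ with $\Delta_q f>\sqrt N a_q$, which is a pointwise random index. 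Then decompose
\[
f-\pi_0 f=\sum_{q}(\pi_{q+1}f-\pi_q f)\mathbf 1\{\tau f>q+1\}+\sum_q (f-\pi_q f)\mathbf 1\{\tau f=q\}+\sum_q(f-\pi_q f)\mathbf 1\{\tau f>q\}\text{-tail pieces},
\]
exactly as in the textbook telescoping.

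\emph{Bounding the pieces.} Each link in the first sum is bounded by $2\sqrt N a_q$ and has $L_2(P)$ norm $\lesssim 2^{-q}$. Each level involves at most $N_0\cdots N_{q+1}$ distinct functions. The finite-class Bernstein maximal inequality
\[
\mathbb E\max_{j\le m}|\mathbb G_N g_j|\lesssim \frac{\log(1+m)\max_j\|g_j\|_\infty}{\sqrt N}+\sqrt{\log(1+m)}\max_j\|g_j\|_{P,2}
\]
therefore bounds level $q$ by $\lesssim 2^{-q}\sqrt{\sum_{r\le q+1}(1+\log N_r)}$. The pieces where truncation is triggered are dominated by $\Delta_q f\,\mathbf 1\{\Delta_q f>\sqrt N a_q\}$. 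For a nonnegative $g$, we have $|\mathbb G_N g|\le \mathbb G_N g+2\sqrt N Pg$, and also $\sqrt N P g\mathbf 1\{g>\sqrt N a\}\le Pg^2/a$. Hence these pieces contribute $\lesssim 2^{-2q}/a_q$, which again equals $2^{-q}\sqrt{1+\log N_{q+1}}$, plus Bernstein terms of the same order. Finally, the starting term $\max_f|\mathbb G_N\pi_0 f|$ involves finitely many functions bounded by $F$. We split each as $F\mathbf 1\{F\le\sqrt N a_0\}+F\mathbf 1\{F>\sqrt N a_0\}$ and use $\mathbb E|\mathbb G_N F|\le\|F\|_{P,2}=1$; this term yields the ``$1+$'' in $J_{[]}$.

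\emph{Summing, and where the difficulty lies.} Summing over $q$ gives $\sum_q 2^{-q}\sqrt{\sum_{r\le q+1}(1+\log N_r)}$. Exchanging the order of summation via $\sum_{q\ge r}2^{-q}\lesssim 2^{-r}$, together with $\sqrt{\sum}\le\sum\sqrt{}$, this is bounded by $\lesssim\sum_r 2^{-r}\sqrt{1+\log N_r}\lesssim J_{[]}(1,\mathcal F,L_2(P))$ by monotonicity of $N_{[]}$ in $\varepsilon$.

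The main obstacle is the truncation bookkeeping. The first-exit index $\tau f$ must be handled so that the nested partitions still give finitely many distinct link functions at each level. The nonnegativity trick must also be applied correctly to the triggered pieces, whose sign is not controlled, and measurability is handled via the outer expectation $\mathbb E^*$. I would follow the argument of \citet[Section 2.14]{van1996weak} line by line for this part rather than reinvent it.
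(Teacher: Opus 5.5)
Your proposal matches the paper: the paper gives no proof of this corollary and simply invokes Theorem~2.14.2 of \citet{van1996weak}, and that citation is all that is needed. Your optional chaining sketch has a few bookkeeping slips. The link $\pi_{q+1}f-\pi_q f$ should carry $\mathbf 1\{\tau f>q\}$ rather than $\mathbf 1\{\tau f>q+1\}$, since otherwise the decomposition does not telescope. For the per-level bound to hold as you state it, $a_q$ should be built from the cumulative count $N_0\cdots N_{q+1}$ rather than from $N_{q+1}$. Neither slip changes your final sum, which still goes through by monotonicity of $N_{[]}$, and since you rely on the citation they do not affect correctness.
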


\subsection{Auxiliary Results}
\label{sec:pilot_auxiliary_results}
We provide some auxiliary results required in the main proof. %Lemma \ref{lem:full_to_submodel} - Lemma \ref{lemma:h_bound} establish regularity and perturbation bounds for the working design, the estimated covariance matrix, and the covariance-weighted projection scores. {red}{This is grammatically odd, consider revising.} These results are used to control the stochastic remainder terms arising from the pilot covariance estimation. We then collect several standard Gaussian and matrix results used to convert the resulting mean and covariancebounds into a total-variation approximation.

\begin{lemma}
\label{lem:full_to_submodel}

Let $X_{i,\mathrm{full}}$ denote the full working design matrix, and suppose
that, for some constants $0<c_1\le c_2<\infty$,
\[
c_1 I
\preceq
\mathbb E\left[
X_{i,\mathrm{full}}^\top
\Sigma_i(\theta_0)^{-1}
X_{i,\mathrm{full}}
\right]
\preceq
c_2 I.
\]
Then, for every candidate submodel $M$,
\[
c_1 I_{d_M}
\preceq
\mathbb E\left[
X_{i,M}^\top
\Sigma_i(\theta_0)^{-1}
X_{i,M}
\right]
\preceq
c_2 I_{d_M}.
\]
Moreover, suppose that, for some constants $C_X,C_\mu<\infty$,
\[
\mathbb E\left[
\left\|
\Sigma_i(\theta_0)^{-1/2}
X_{i,\mathrm{full}}
\right\|_s^4
\right]
\le C_X
\]
and
\[
\mathbb E\left[
\left\|
X_{i,\mathrm{full}}^\top
\Sigma_i(\theta_0)^{-1}
\mu_i
\right\|_2^2
\right]
\le C_\mu.
\]
For
\[
r_{Mi}^P
=
\mu_i-X_{i,M}\beta_M^P,
\]
there exists a constant $C_h<\infty$, independent of $M$, such that
\[
\sup_M
\mathbb E\left[
\left\|
X_{i,M}^\top
\Sigma_i(\theta_0)^{-1}
r_{Mi}^P
\right\|_2^2
\right]
\le C_h,
\]
where the supremum is taken over the candidate submodels under consideration.
\end{lemma}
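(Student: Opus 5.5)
The first claim reduces to the observation that $X_{i,M}$ is a column submatrix of the full design. Write $X_{i,M}=X_{i,\mathrm{full}}P_M$, where $P_M$ is the $0/1$ selection matrix that picks the columns indexed by $M$ together with the two global intercept and slope columns. It satisfies $P_M^\top P_M=I_{d_M}$. Set $G=\mathbb E[X_{i,\mathrm{full}}^\top\Sigma_i(\theta_0)^{-1}X_{i,\mathrm{full}}]$. Then
\[
\mathbb E\left[X_{i,M}^\top\Sigma_i(\theta_0)^{-1}X_{i,M}\right]=P_M^\top G P_M .
\]
This is a principal submatrix of $G$. For any $v\in\mathbb R^{d_M}$ we have $v^\top P_M^\top GP_Mv=(P_Mv)^\top G(P_Mv)$ and $\|P_Mv\|_2=\|v\|_2$. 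The bounds $c_1\|v\|_2^2\le v^\top P_M^\top GP_M v\le c_2\|v\|_2^2$ therefore follow immediately from the assumed bounds on $G$.

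For the residual bound, write $G_M=P_M^\top GP_M$ and $g_M=\mathbb E[X_{i,M}^\top\Sigma_i(\theta_0)^{-1}\mu_i]=P_M^\top g$, where $g=\mathbb E[X_{i,\mathrm{full}}^\top\Sigma_i(\theta_0)^{-1}\mu_i]$. Then $\beta_M^P=G_M^{-1}g_M$. By Jensen and the second moment condition, $\|g\|_2\le C_\mu^{1/2}$. Since $\lambda_{\min}(G_M)\ge c_1$ by the first part, $\|\beta_M^P\|_2\le c_1^{-1}C_\mu^{1/2}$ uniformly in $M$.

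Next I decompose
\[
X_{i,M}^\top\Sigma_i^{-1}r_{Mi}^P=P_M^\top X_{i,\mathrm{full}}^\top\Sigma_i^{-1}\mu_i-P_M^\top\bigl(X_{i,\mathrm{full}}^\top\Sigma_i^{-1}X_{i,\mathrm{full}}\bigr)P_M\beta_M^P ,
\]
with $\Sigma_i=\Sigma_i(\theta_0)$. Using $(a+b)^2\le 2a^2+2b^2$ and $\|P_M\|_s=1$, the expected squared norm is at most $2C_\mu+2\,\mathbb E[\|X_{i,\mathrm{full}}^\top\Sigma_i^{-1}X_{i,\mathrm{full}}\|_s^2]\,\|\beta_M^P\|_2^2$. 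Now
\[
\|X_{i,\mathrm{full}}^\top\Sigma_i^{-1}X_{i,\mathrm{full}}\|_s=\|\Sigma_i^{-1/2}X_{i,\mathrm{full}}\|_s^2 ,
\]
so the middle expectation equals $\mathbb E\|\Sigma_i^{-1/2}X_{i,\mathrm{full}}\|_s^4\le C_X$. This gives the uniform constant
\[
C_h=2C_\mu+2C_Xc_1^{-2}C_\mu ,
\]
which does not depend on $M$.

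The only delicate point is uniformity over $M$ when $p=p_N$ grows. The argument never uses the dimension of $M$ or $p$, only the $M$-free constants $c_1,C_X,C_\mu$ and the fact that $\|P_M\|_s=1$. So the supremum is harmless, provided the constants in the hypotheses are themselves uniform in $N$, which I take as given.
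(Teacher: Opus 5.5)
Your proof is correct and follows essentially the same route as the paper: a column-selection matrix with orthonormal columns transfers the eigenvalue bounds, Jensen together with $\lambda_{\min}\ge c_1$ bounds $\|\beta_M^P\|_2\le c_1^{-1}C_\mu^{1/2}$ uniformly in $M$, and the $(a+b)^2\le 2a^2+2b^2$ decomposition combined with $\|\Sigma_i^{-1/2}X_{i,M}\|_s\le\|\Sigma_i^{-1/2}X_{i,\mathrm{full}}\|_s$ (equivalently, your $\|P_M\|_s=1$) gives the same constant $C_h=2C_\mu(1+C_Xc_1^{-2})$. Your caveat that $c_1$, $C_X$ and $C_\mu$ must not depend on $N$ when $p=p_N$ grows is a fair point that the paper leaves implicit.
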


\begin{proof}
For each candidate model $M$, let $R_M$ denote the column-selection
matrix such that
\[
X_{i,M}=X_{i,\mathrm{full}}R_M,
\qquad
R_M^\top R_M=I_{d_M}.
\]
For any $v\in\mathbb R^{d_M}$,
\[
\begin{aligned}
&v^\top
\mathbb E\left[
X_{i,M}^\top
\Sigma_i(\theta_0)^{-1}
X_{i,M}
\right]v=
(R_Mv)^\top
\mathbb E\left[
X_{i,\mathrm{full}}^\top
\Sigma_i(\theta_0)^{-1}
X_{i,\mathrm{full}}
\right]
(R_Mv).
\end{aligned}
\]
Since $\|R_Mv\|_2=\|v\|_2$, the assumed full-design bounds imply
\[
c_1\|v\|_2^2
\le
v^\top
\mathbb E\left[
X_{i,M}^\top
\Sigma_i(\theta_0)^{-1}
X_{i,M}
\right]v
\le
c_2\|v\|_2^2.
\]
This proves the first claim.

We next establish the moment bound. Write
\[
H_M^P
=
\mathbb E\left[
X_{i,M}^\top
\Sigma_i(\theta_0)^{-1}
X_{i,M}
\right].
\]
By the population normal equations,
\[
\beta_M^P
=
(H_M^P)^{-1}
\mathbb E\left[
X_{i,M}^\top
\Sigma_i(\theta_0)^{-1}
\mu_i
\right].
\]
The first part of the lemma gives
\[
\|(H_M^P)^{-1}\|_s\le c_1^{-1}.
\]
Moreover, since $X_{i,M}=X_{i,\mathrm{full}}R_M$,
\[
\left\|
X_{i,M}^\top
\Sigma_i(\theta_0)^{-1}
\mu_i
\right\|_2
\le
\left\|
X_{i,\mathrm{full}}^\top
\Sigma_i(\theta_0)^{-1}
\mu_i
\right\|_2.
\]
Therefore, by Jensen's inequality,
\[
\begin{aligned}
\|\beta_M^P\|_2
&\le
c_1^{-1}
\left\|
\mathbb E\left[
X_{i,M}^\top
\Sigma_i(\theta_0)^{-1}
\mu_i
\right]
\right\|_2\\
&\le
c_1^{-1}
\left\{
\mathbb E\left[
\left\|
X_{i,\mathrm{full}}^\top
\Sigma_i(\theta_0)^{-1}
\mu_i
\right\|_2^2
\right]
\right\}^{1/2}\\
&\le
c_1^{-1}C_\mu^{1/2}.
\end{aligned}
\]

Now,
\[
\begin{aligned}
X_{i,M}^\top
\Sigma_i(\theta_0)^{-1}
r_{Mi}^P
&=
X_{i,M}^\top
\Sigma_i(\theta_0)^{-1}
\mu_i-
X_{i,M}^\top
\Sigma_i(\theta_0)^{-1}
X_{i,M}\beta_M^P.
\end{aligned}
\]
Thus,
\[
\begin{aligned}
\left\|
X_{i,M}^\top
\Sigma_i(\theta_0)^{-1}
r_{Mi}^P
\right\|_2^2
&\le
2
\left\|
X_{i,M}^\top
\Sigma_i(\theta_0)^{-1}
\mu_i
\right\|_2^2+
2
\left\|
X_{i,M}^\top
\Sigma_i(\theta_0)^{-1}
X_{i,M}
\right\|_s^2
\|\beta_M^P\|_2^2.
\end{aligned}
\]
Because $X_{i,M}$ is obtained by selecting columns of
$X_{i,\mathrm{full}}$,
\[
\begin{aligned}
\left\|
X_{i,M}^\top
\Sigma_i(\theta_0)^{-1}
X_{i,M}
\right\|_s
&=
\left\|
\Sigma_i(\theta_0)^{-1/2}X_{i,M}
\right\|_s^2\\
&\le
\left\|
\Sigma_i(\theta_0)^{-1/2}X_{i,\mathrm{full}}
\right\|_s^2.
\end{aligned}
\]
Therefore, taking expectations gives
\[
\begin{aligned}
&\mathbb E\left[
\left\|
X_{i,M}^\top
\Sigma_i(\theta_0)^{-1}
r_{Mi}^P
\right\|_2^2
\right]\le
2C_\mu
+
2C_X c_1^{-2}C_\mu.
\end{aligned}
\]
The right-hand side does not depend on $M$. Therefore, the result follows
with,
\[
C_h
=
2C_\mu
\left(
1+C_Xc_1^{-2}
\right).
\]
\end{proof}

\begin{lemma}
\label{lem:localized_bracketing_bound}
Let $P$ be a probability measure on a measurable space $\mathcal O$, and let
$O_1,\ldots,O_N$ be i.i.d.\ observations from $P$. Let
$O\sim P$ denote a generic observation. Consider a family of measurable
functions
\[
h_\theta:\mathcal O\to\mathbb R^d,
\qquad
\theta\in\Theta_0\subset\mathbb R^q,
\]
where $q$ and $d$ are fixed.

Suppose that, for some $\theta_0\in\Theta_0$,
\[
\mathbb E_P\{h_\theta(O)\}=0
\qquad
\text{for all }\theta\in\Theta_0,
\]
and
\[
h_{\theta_0}(o)=0
\qquad
\text{for every }o\in\mathcal O.
\]
Assume further that there exists a nonnegative measurable function
$Q:\mathcal O\to[0,\infty)$ satisfying
\[
\mathbb E_P\{Q(O)^2\}<\infty
\]
such that, for all $\theta,\theta'\in\Theta_0$ and $o\in\mathcal O$,
\begin{equation}
\label{eq:lip}
    \|h_\theta(o)-h_{\theta'}(o)\|_2
\le
Q(o)\|\theta-\theta'\|_2.
\end{equation}
For
\[
\Theta(\delta)
=
\{\theta\in\Theta_0:\|\theta-\theta_0\|_2\le\delta\},
\]
every deterministic sequence $\delta_N\downarrow0$ satisfies
\[
\sup_{\theta\in\Theta(\delta_N)}
\left\|
\sum_{i=1}^N h_\theta(O_i)
\right\|_2
=
O_p(\sqrt N\,\delta_N).
\]
\end{lemma}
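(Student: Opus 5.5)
The plan is to reduce the supremum over the shrinking ball $\Theta(\delta_N)$ to a standard bracketing maximal inequality applied to a rescaled class. Define, for $\theta\in\Theta(\delta_N)$, the normalized functions
\[
f_{\theta,k}(o)=\delta_N^{-1}\,e_k^\top h_\theta(o),\qquad k=1,\ldots,d,
\]
and let $\mathcal F_N=\{f_{\theta,k}:\theta\in\Theta(\delta_N),\,k\le d\}$.

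Since $\mathbb E_P h_\theta(O)=0$, we have $\sum_i h_\theta(O_i)=\sqrt N\,\mathbb G_N h_\theta$. Hence
\[
\sup_{\theta\in\Theta(\delta_N)}\Bigl\|\sum_i h_\theta(O_i)\Bigr\|_2\le \sqrt{d}\,\sqrt N\,\delta_N\,\|\mathbb G_N\|_{\mathcal F_N}.
\]
It therefore suffices to show that $\mathbb E^*\|\mathbb G_N\|_{\mathcal F_N}$ is bounded uniformly in $N$; Markov's inequality then gives the $O_p$ claim.

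\emph{Envelope.} Because $h_{\theta_0}\equiv 0$, the Lipschitz condition \eqref{eq:lip} with $\theta'=\theta_0$ gives
\[
|f_{\theta,k}(o)|\le \delta_N^{-1}Q(o)\,\|\theta-\theta_0\|_2\le Q(o).
\]
So $F=Q$ is an envelope with $\|F\|_{P,2}<\infty$, and it does not depend on $N$. This is the key point: the rescaling by $\delta_N^{-1}$ is exactly compensated by the localization.

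\emph{Brackets.} Write $u=(\theta-\theta_0)/\delta_N$, so $u$ ranges over a subset of $B_q(0,1)$. Take an $\eta$-net $\{u_j\}$ of $B_q(0,1)$; by \eqref{eq:ball_bracket} its cardinality is $\lesssim(1+4/\eta)^q$. Set $\theta_j=\theta_0+\delta_N u_j$, projecting onto $\Theta(\delta_N)$ if needed, which at most doubles the radius. For any $\theta$ within $\delta_N\eta$ of $\theta_j$, \eqref{eq:lip} gives
\[
|f_{\theta,k}-f_{\theta_j,k}|\le \eta\,Q .
\]
Thus the brackets $[f_{\theta_j,k}-\eta Q,\,f_{\theta_j,k}+\eta Q]$ have $L_2(P)$-size $2\eta\|Q\|_{P,2}$. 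Choosing $\eta=\varepsilon/2$ yields
\[
N_{[]}(\varepsilon\|F\|_{P,2},\mathcal F_N,L_2(P))\lesssim d\,(1+8/\varepsilon)^q,
\]
uniformly in $N$. Consequently $J_{[]}(1,\mathcal F_N,L_2(P))\le\int_0^1\sqrt{1+\log d+q\log(1+8/\varepsilon)}\,d\varepsilon<\infty$, again independent of $N$.

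\emph{Conclusion.} Corollary~\ref{thm:bracketing_maximal} then gives $\mathbb E^*\|\mathbb G_N\|_{\mathcal F_N}\lesssim J_{[]}\|Q\|_{P,2}=O(1)$ uniformly in $N$. Markov's inequality yields $\|\mathbb G_N\|_{\mathcal F_N}=O_p(1)$, and the reduction above finishes the proof.

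The main point requiring care is the bracketing step. The brackets must have width proportional to $\varepsilon\|F\|_{P,2}$ relative to the same envelope $Q$. The centers must lie in $\Theta_0$, which is why the projection or radius-doubling is needed. The constants must not depend on $N$. Measurability is handled by using outer expectations, as in the corollary.
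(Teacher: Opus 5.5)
Your proof is correct and follows essentially the same route as the paper: Lipschitz-in-$\theta$ brackets around a net of the localized ball, an envelope proportional to $\delta_N Q$ (you rescale it to $Q$), a $J_{[]}$ bounded uniformly in $N$, and then Corollary~\ref{thm:bracketing_maximal} followed by Markov's inequality. The differences are cosmetic. You use the $d$ coordinate functionals instead of a net over $\mathbb S^{d-1}$, which costs only a harmless $\sqrt d$ factor. You also explicitly ensure that the net centers lie in $\Theta(\delta_N)$, a point the paper glosses over; after your radius doubling the bracket half-width becomes $2\eta Q$, which changes only constants.
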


\begin{proof}
Let
\[
\mathbb S^{d-1}=\{v\in\mathbb R^d:\|v\|_2=1\}
\]
and define  
\[
\mathcal F_\delta
=
\{f_{\theta,v}:f_{\theta,v}(O)=v^\top h_\theta(O),\ 
\theta\in\Theta(\delta),\ v\in\mathbb S^{d-1}\}.
\]
Then
\[
\sup_{\theta\in\Theta(\delta)}
\|\mathbb G_Nh_\theta\|_2
=\sup\limits_{\theta\in \Theta(\delta)}\sup\limits_{v\in \mathbb{S}^{d-1}} |v^T\mathbb G_N h_{\theta}|
=\sup\limits_{\theta\in \Theta(\delta),v\in \mathbb{S}^{d-1}}|\mathbb G_N v^Th_{\theta}|
=\sup_{f\in\mathcal F_\delta}|\mathbb G_Nf|.
\]

For every $o\in \mathcal O$, because $h_{\theta_0}(o)=0$, the Lipschitz continuity in 
\eqref{eq:lip} shows that
\[
\|h_\theta(o)\|_2=\|h_\theta(o)-h_{\theta_0}(o)\|_2\le Q(o)\|\theta-\theta_0\|_2\le  Q(o)\delta.
\]
Therefore, we have
$$
\sup\limits_{f\in \mathcal F_\delta} |f(o)|=\sup\limits_{v\in \mathbb S^{d-1},\theta\in \Theta(\delta)} |v^\top h_{\theta}(o)|=\sup\limits_{\theta\in \Theta(\delta)}\|h_{\theta}(o)\|_2\le Q(o)\delta.
$$
Then, \[
F_\delta(o)=\delta Q(o)
\]
is an envelope for $\mathcal F_\delta$, and
$\|F_\delta\|_{P,2}=\delta\|Q\|_{P,2}$.

Choose an $\varepsilon\delta/4$-net of $\Theta(\delta)$ and an
$\varepsilon/4$-net of $\mathbb S^{d-1}$. The Euclidean-ball bound in \eqref{eq:ball_bracket} gives at
most
\[
N_1:=\left \lceil\left(1+\frac{16}{\varepsilon}\right)^q\right\rceil
\quad\text{and}\quad
N_2:=\left \lceil\left(1+\frac{16}{\varepsilon}\right)^d\right\rceil
\]
net points, respectively. 
Let $\{\theta_1,\theta_2,\ldots, \theta_{N_1}\}$, and $\{v_1,v_2,\ldots, v_{N_2}\}$ be the corresponding net-points for $\Theta(\delta)$, and $\mathbb S^{d-1}$.  For each pair of net points
$(\theta_{\tau_1},v_{\tau_2}),\tau_{1}=1,\ldots, N_1,$ and $\tau_{2}=1,\ldots, N_2$, we form the bracket
\[
\left[
f_{\theta_{\tau_1},v_{\tau_2}}-\frac\varepsilon2F_\delta,
\ f_{\theta_{\tau_1},v_{\tau_2}}+\frac\varepsilon2F_\delta
\right].
\]
For every $(\theta,v)$ assigned to $(\theta_{\tau_1},v_{\tau_2})$, with $\|\theta-\theta_{\tau_{1}}\|\le \frac{\epsilon\delta}{4}$, and $\|v-v_{\tau_2}\|\le \frac{\epsilon}{4}$, we have for every $o\in \mathcal O$
\[
\begin{aligned}
|f_{\theta,v}(o)-f_{\theta_{\tau_1},v_{\tau_2}}(o)|
&=|v^Th_{\theta}(o)-v_{\tau_2}^Th_{\theta_{\tau_1}}(o)|
\\&\le |v^T\{h_{\theta}(o)-h_{\theta_{\tau_1}}(o)\}|+|(v-v_{\tau_2})^Th_{\theta_{\tau_1}}(o)|
\\&\le
\|v\|_2\|h_\theta(o)-h_{\theta_{\tau_1}}(o)\|_2
+
\|v-v_{\tau_2}\|_2\|h_{\theta_{\tau_1}}(o)\|_2
\\
&\le
Q(o)\{\|\theta-\theta_{\tau_1}\|_2+
\delta\|v-v_{\tau_2}\|_2\}
\\
&\le\frac{\epsilon}{2}Q(o)\delta\\&=
\frac\varepsilon2F_\delta(o).
\end{aligned}
\]
Hence 
\[
f_{\theta,v}\in \left[
f_{\theta_{\tau_1},v_{\tau_2}}-\frac\varepsilon2F_\delta,
\ f_{\theta_{\tau_1},v_{\tau_2}}+\frac\varepsilon2F_\delta
\right].
\]
Thus these brackets cover $\mathcal F_\delta$, each with $L_2(P)$ width at
most $\varepsilon\|F_\delta\|_{P,2}$, and
\[
N_{[]}\left(
\varepsilon\|F_\delta\|_{P,2},
\mathcal F_\delta,
L_2(P)
\right)
\le
\left(1+\frac{16}\varepsilon\right)^{q+d}.
\]
Consequently,
\[
\begin{aligned}
J_{[]}(1,\mathcal F_\delta,L_2(P))
&\le
\int_0^1
\sqrt{1+(q+d)\log(1+16/\varepsilon)}\,d\varepsilon
\\&<
\int_0^1
2\sqrt{(q+d)\log(1+16/\varepsilon)}\,d\varepsilon
\\&<
\int_0^1
2\sqrt{(q+d)16/\varepsilon}\,d\varepsilon
\\&=8\sqrt{q+d}\int _0^1\epsilon^{-1/2}d\epsilon\\&=16\sqrt{q+d}\\&
<\infty,
\end{aligned}
\]
uniformly in $\delta$. Corollary~\ref{thm:bracketing_maximal} therefore gives
\[
\left\|
\|\mathbb G_N\|_{\mathcal F_\delta}^{*}
\right\|_{P,1}
\lesssim
\delta.
\]
By Markov's inequality,
\[
\sup_{\theta\in\Theta(\delta_N)}
\|\mathbb G_Nh_\theta\|_2
=
O_p(\delta_N).
\]
Since $\mathbb{E}_Ph_\theta=0$, multiplying by $\sqrt N$ proves the result.
\end{proof}

\begin{lemma}
\label{lem:relative_covariance_lipschitz}
There exists a constant $C<\infty$, independent of $N$, such that
\[
\|E(\theta_1)-E(\theta_2)\|_s
\le
C\|\theta_1-\theta_2\|_2,
\qquad
\theta_1,\theta_2\in\Theta_0.
\]
Consequently,
\[
\|\widehat E\|_s
=
O_p(N_s^{-1/2})
=
o_p(1).
\]
\end{lemma}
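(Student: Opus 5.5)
Because $\Sigma(\theta)$ is block diagonal with blocks $\Sigma_i(\theta)=Z_iD(\theta)Z_i^\top+\sigma^2(\theta)I_{T_i}$, the matrix $E(\theta)$ is block diagonal as well, with blocks
\[
E_i(\theta)=\Sigma_{0i}^{-1/2}\{\Sigma_i(\theta)-\Sigma_{0i}\}\Sigma_{0i}^{-1/2}.
\]
Hence $\|E(\theta_1)-E(\theta_2)\|_s=\max_i\|E_i(\theta_1)-E_i(\theta_2)\|_s$. It therefore suffices to find a constant $C$, not depending on $i$, $T_i$, or $t_i$, such that $\|E_i(\theta_1)-E_i(\theta_2)\|_s\le C\|\theta_1-\theta_2\|_2$. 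Since $\theta\mapsto\Sigma_i(\theta)$ is linear, we have
\[
E_i(\theta_1)-E_i(\theta_2)=\Sigma_{0i}^{-1/2}\bigl\{Z_i\,\Delta D\,Z_i^\top+\Delta\sigma^2\,I_{T_i}\bigr\}\Sigma_{0i}^{-1/2},
\]
where $\Delta D=D(\theta_1)-D(\theta_2)$ and $\Delta\sigma^2=\sigma^2(\theta_1)-\sigma^2(\theta_2)$. We bound the two pieces separately.

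The $\sigma^2$ piece is easy. Since $\Sigma_{0i}\succeq\sigma_0^2I_{T_i}$, we get $\|\Sigma_{0i}^{-1}\|_s\le\sigma_0^{-2}$, and so this piece contributes at most $\sigma_0^{-2}|\Delta\sigma^2|$.

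The $D$ piece is the main obstacle, because $Z_i$ can have arbitrarily large norm: $T_i$ is unbounded and $t_i$ is unrestricted. The key observation is that $\Sigma_{0i}$ already contains $Z_iD_0Z_i^\top$, which absorbs this growth. Write $\Delta D=D_0^{1/2}GD_0^{1/2}$ with $G=D_0^{-1/2}\Delta D\,D_0^{-1/2}$, so that $\|G\|_s\le\|D_0^{-1}\|_s\|\Delta D\|_s$. Then
\[
\Sigma_{0i}^{-1/2}Z_i\,\Delta D\,Z_i^\top\Sigma_{0i}^{-1/2}=K_iGK_i^\top,\qquad K_i=\Sigma_{0i}^{-1/2}Z_iD_0^{1/2}.
\]
Moreover $K_iK_i^\top=\Sigma_{0i}^{-1/2}Z_iD_0Z_i^\top\Sigma_{0i}^{-1/2}\preceq I$, because $Z_iD_0Z_i^\top\preceq\Sigma_{0i}$. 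Thus $\|K_i\|_s\le1$, and $\|K_iGK_i^\top\|_s\le\|D_0^{-1}\|_s\|\Delta D\|_s$. Finally, $\|\Delta D\|_s\le\|\Delta D\|_F\le\sqrt2\,\|\theta_1-\theta_2\|_2$. Combining the two pieces gives
\[
C=\sigma_0^{-2}+\sqrt2\,\lambda_{\min}(D_0)^{-1}.
\]
This constant is finite by Assumption 1 and is independent of $N$ and of the design. The bound holds for all $\theta_1,\theta_2$, and in particular on $\Theta_0$.

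For the consequence, note that $E(\theta_0)=0$. Taking $\theta_1=\widehat\theta$ and $\theta_2=\theta_0$ gives $\|\widehat E\|_s\le C\|\widehat\theta-\theta_0\|_2$, at least on the event $\widehat\theta\in\Theta_0$. This event has probability tending to one because $\widehat\theta$ is consistent (in fact the Lipschitz bound holds globally, so the restriction is not needed). Assumption 2 then yields $\|\widehat E\|_s=O_p(N_s^{-1/2})$, which is $o_p(1)$ since $N_s\to\infty$.
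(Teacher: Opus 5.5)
Your proof is correct and follows essentially the same route as the paper: the blockwise reduction, the bound $\|\Sigma_{0i}^{-1}\|_s\le\sigma_0^{-2}$, and absorbing $Z_i$ through $Z_iD_0Z_i^\top\preceq\Sigma_{0i}$ (your $\|K_i\|_s\le 1$ is equivalent to the paper's $Z_i^\top\Sigma_{0i}^{-1}Z_i\preceq D_0^{-1}$), which yields the same constant apart from your harmless $\sqrt2$. That $\sqrt2$ is unnecessary here, because $\theta$ lists both $d_{12}$ and $d_{21}$ as coordinates, so $\|\Delta D\|_F\le\|\theta_1-\theta_2\|_2$; your remark that the Lipschitz bound holds globally, so that no restriction to $\widehat\theta\in\Theta_0$ is needed, makes explicit a point the paper leaves implicit.
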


\begin{proof}
Because $\Sigma(\theta)$ is block diagonal,
\[
\|E(\theta_1)-E(\theta_2)\|_s
=
\max_{1\le i\le N}
\|E_i(\theta_1)-E_i(\theta_2)\|_s,
\]
where
\[
E_i(\theta)
=
\Sigma_{0i}^{-1/2}
\{\Sigma_i(\theta)-\Sigma_{0i}\}
\Sigma_{0i}^{-1/2}.
\]
Recall that the parametrization of $D(\theta)$ and $\sigma^2(\theta)$ is defined in \eqref{eq:var_para}.
Writing
\[
\Delta D=D(\theta_1)-D(\theta_2),
\qquad
\Delta\sigma^2=\sigma^2(\theta_1)-\sigma^2(\theta_2),
\]
we have
\[
\begin{aligned}
&\|E_i(\theta_1)-E_i(\theta_2)\|_s
\\= &\|\Sigma_{0i}^{-1/2}\{(Z_i\Delta DZ_i^T)+\Delta\sigma^2 I_{T_i}\}\Sigma_{0i}^{-1/2}\|_s\\\le &
\|\Sigma_{0i}^{-1/2}Z_i\|_s^2\|\Delta D\|_s
+
|\Delta\sigma^2|\,\|\Sigma_{0i}^{-1}\|_s.
\end{aligned}
\]
Let $B_i=Z_iD_0^{1/2}$. Since
$\Sigma_{0i}=B_iB_i^\top+\sigma_0^2I_{T_i}$, the singular value decomposition of $B_i$ gives
\[
B_i^\top(B_iB_i^\top+\sigma_0^2I)^{-1}B_i\preceq I_2.
\]
Equivalently,
\[
Z_i^\top\Sigma_{0i}^{-1}Z_i\preceq D_0^{-1}.
\]
Also $\Sigma_{0i}\succeq\sigma_0^2I_{T_i}$. Hence
\[
\|\Sigma_{0i}^{-1/2}Z_i\|_s^2\le\|D_0^{-1}\|_s,
\qquad
\|\Sigma_{0i}^{-1}\|_s\le\sigma_0^{-2}.
\]
Finally, because
\[
\|\Delta D\|_s\le\|\Delta D\|_F
\le\|\theta_1-\theta_2\|_2,
\qquad
|\Delta\sigma^2|\le\|\theta_1-\theta_2\|_2,
\]
We have 
\[
\begin{aligned}
&\|E_i(\theta_1)-E_i(\theta_2)\|_s
\\\le &
\|\Sigma_{0i}^{-1/2}Z_i\|_s^2\|\Delta D\|_s
+
|\Delta\sigma^2|\,\|\Sigma_{0i}^{-1}\|_s
\\\le & (\|D_0^{-1}\|_s+\sigma_{0}^{-2}) \|\theta_1-\theta_2\|_2.
\end{aligned}
\]
Since
$E(\theta_0)=0$, we have \(\|\widehat E\|_s\le (\|D_0^{-1}\|_s+\sigma_{0}^{-2})\|\hat \theta-\theta_0\|_2\). 
By assumption 2, \(\|\hat \theta-\theta_0\|_2=O_p(N_s^{-1/2})\). Hence, $\|\widehat E\|_s=O_p(N_s^{-1/2})$.
\end{proof}

\begin{lemma}
\label{lem:empirical_information_order}

For a fixed candidate model $M$, define
\[
H_M
=
X_M^\top\Sigma_0^{-1}X_M,
\qquad
\widehat H_M
=
X_M^\top\widehat\Sigma^{-1}X_M,
\]
and
\[
\widehat H_{M,\mathcal P}
=
X_{M,\mathcal P}^\top
\widehat\Sigma_{\mathcal P}^{-1}
X_{M,\mathcal P}.
\]
Recall that
\[
\widehat V_M=2\widehat H_M^{-1}.
\]

Under Assumptions~1, 2, and 4,
\[
\|H_M\|_s=O_p(N),
\qquad
\|H_M^{-1}\|_s=O_p(N^{-1}),
\]
\[
\|\widehat H_M\|_s=O_p(N),
\qquad
\|\widehat H_M^{-1}\|_s=O_p(N^{-1}),
\]
and
\[
\|\widehat V_M^{-1}\|_s=O_p(N),
\qquad
\|\widehat V_M^{-1/2}\widehat H_M^{-1}\|_s
=
O_p(N^{-1/2}).
\]
For the pilot block,
\[
\|\widehat H_{M,\mathcal P}\|_s=O_p(N_s)\]
and
\[
\begin{aligned}
\left\|
X_{M,\mathcal P}^\top
\Sigma_{0,\mathcal P}^{-1/2}
\right\|_s
&=
O_p(\sqrt{N_s}),
\\
\left\|
\Sigma_{0,\mathcal P}^{-1/2}
(Y_{\mathcal P}-\mu_{\mathcal P})
\right\|_2
&=
O_p(\sqrt{N_s}),
\\
\left\|
X_{M,\mathcal P}^\top
\Sigma_{0,\mathcal P}^{-1}
(Y_{\mathcal P}-\mu_{\mathcal P})
\right\|_2
&=
O_p(\sqrt{N_s}).
\end{aligned}
\]
\end{lemma}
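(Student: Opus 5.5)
The plan is to obtain everything from the law of large numbers for i.i.d.\ subject-level matrices, combined with a perturbation bound that transfers the results from $\Sigma_0$ to $\widehat\Sigma$ through Lemma~\ref{lem:relative_covariance_lipschitz}.

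\textbf{Oracle information matrix.} Write
\[
H_M=\sum_{i=1}^N X_{i,M}^\top\Sigma_{0i}^{-1}X_{i,M}.
\]
The summands are i.i.d.\ by Assumption~1, and they are integrable because Lemma~\ref{lem:full_to_submodel} gives $\mathbb E[X_{i,M}^\top\Sigma_{0i}^{-1}X_{i,M}]\preceq c_2I$. The weak law of large numbers for a fixed-dimensional matrix ($d_M\le\bar d$) therefore gives $N^{-1}H_M\to H_M^P$ in probability. Lemma~\ref{lem:full_to_submodel} also gives $c_1I\preceq H_M^P\preceq c_2I$. By continuity of eigenvalues, with probability tending to one
\[
\tfrac{c_1}{2}N\,I\preceq H_M\preceq 2c_2N\,I,
\]
so that $\|H_M\|_s=O_p(N)$ and $\|H_M^{-1}\|_s=O_p(N^{-1})$.

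\textbf{Estimated information matrix.} Write
\[
\widehat\Sigma=\Sigma_0^{1/2}(I+\widehat E)\Sigma_0^{1/2}.
\]
Lemma~\ref{lem:relative_covariance_lipschitz} gives $\|\widehat E\|_s=o_p(1)$. On the event $\|\widehat E\|_s\le1/2$,
\[
\tfrac23\Sigma_0^{-1}\preceq\widehat\Sigma^{-1}\preceq 2\Sigma_0^{-1}
\]
in the Loewner order, and hence $\tfrac23H_M\preceq\widehat H_M\preceq2H_M$. The stated orders for $\widehat H_M$ and $\widehat H_M^{-1}$ then follow from the previous step. Since $\widehat V_M^{-1}=\widehat H_M/2$, we get $\|\widehat V_M^{-1}\|_s=O_p(N)$. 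Moreover
\[
\widehat V_M^{-1/2}\widehat H_M^{-1}=2^{-1/2}\widehat H_M^{-1/2},
\]
whose spectral norm is $O_p(N^{-1/2})$ because $\lambda_{\min}(\widehat H_M)\gtrsim N$.

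\textbf{Pilot block.} The pilot index set is chosen independently of the data, so the pilot subjects form an i.i.d.\ sample of size $N_s\to\infty$ (Assumption~2). The same arguments apply with $N$ replaced by $N_s$.

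First, $\|\widehat H_{M,\mathcal P}\|_s=O_p(N_s)$ follows from the LLN on the pilot block together with the Loewner comparison, since $\widehat\Sigma_{\mathcal P}$ is a sub-block and satisfies $\|\widehat E_{\mathcal P}\|_s\le\|\widehat E\|_s$. One point needs care here: $\widehat\theta$ depends on the pilot data. The Loewner bound is deterministic given $\|\widehat E\|_s\le1/2$, however, so this dependence causes no difficulty.

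Second,
\[
\|X_{M,\mathcal P}^\top\Sigma_{0,\mathcal P}^{-1/2}\|_s^2=\|X_{M,\mathcal P}^\top\Sigma_{0,\mathcal P}^{-1}X_{M,\mathcal P}\|_s=O_p(N_s).
\]

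Third, $\Sigma_{0,\mathcal P}^{-1/2}(Y_{\mathcal P}-\mu_{\mathcal P})\sim N(0,I_{n_{\mathcal P}})$ conditionally on the design. Its squared norm therefore has expectation $n_{\mathcal P}=\sum_{i\in\mathcal P}T_i$, which is $O_p(N_s)$ because $\mathbb E T_i<\infty$. Markov's inequality then gives the stated $O_p(\sqrt{N_s})$ bound.

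Fourth, conditionally on the design, $X_{M,\mathcal P}^\top\Sigma_{0,\mathcal P}^{-1}(Y_{\mathcal P}-\mu_{\mathcal P})$ has mean zero and covariance $X_{M,\mathcal P}^\top\Sigma_{0,\mathcal P}^{-1}X_{M,\mathcal P}$. The expected squared norm is the trace of this covariance, which is at most $d_M$ times its spectral norm and hence $O_p(N_s)$. A conditional Markov/Chebyshev argument finishes the step.

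\textbf{Main obstacle.} The only delicate step is the lower eigenvalue bound for $H_M$. It needs the LLN in matrix form for a fixed $M$ (as stated, $M$ is fixed and $d_M$ is bounded), together with the population lower bound $c_1$ from Assumption~4 via Lemma~\ref{lem:full_to_submodel}. Note that the selection of $M$ depends on $\widehat Y^{\mathrm{sel}}$, but the lemma concerns a fixed candidate $M$, so no uniformity over models is required.
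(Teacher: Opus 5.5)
Your proposal is correct and follows essentially the same route as the paper. The paper likewise applies the law of large numbers, with the eigenvalue bounds from Lemma~\ref{lem:full_to_submodel}, to get the orders of $H_M$. It then sandwiches $\widehat H_M$ between multiples of $H_M$ through $(I+\widehat E)^{-1}$ on the event $\|\widehat E\|_s<1/2$, and bounds the pilot-block terms by conditional Gaussian second moments combined with Markov's inequality. The only minor difference is in bounding $\|H_{M,\mathcal P}\|_s$: you apply an LLN to the pilot subjects viewed as an i.i.d.\ sample of size $N_s$, whereas the paper conditions on $\mathcal D_N$ and uses $\mathbb E\{\operatorname{tr}(H_{M,\mathcal P})\mid\mathcal D_N\}=(N_s/N)\operatorname{tr}(H_M)$ under simple random sampling, followed by Markov's inequality; both arguments are valid.
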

\begin{proof}

By Assumption~4 and Lemma~\ref{lem:full_to_submodel},
\[
\frac{1}{N}H_M
=
\frac{1}{N}
\sum_{i=1}^N
X_{i,M}^\top\Sigma_{0i}^{-1}X_{i,M}
\overset{p}{\to}
\mathbb E
\left[
X_{i,M}^\top\Sigma_{0i}^{-1}X_{i,M}
\right].
\]
Hence
\[
\|H_M\|_s=O_p(N),
\qquad
\|H_M^{-1}\|_s=O_p(N^{-1}).
\]
Note
\[
\widehat H_M=X_{M}^\top \widehat \Sigma^{-1}X_M=X_M^\top\Sigma_0^{-1/2}(I+\widehat E)^{-1}\Sigma_0^{-1/2}X_M,
\]
and on the event $\|\widehat E\|_s<1/2$, with probability tending to $1$, we have
\[
\frac{1}{2}I\preceq(I+\widehat E)^{-1}\preceq 2I.
\]
Therefore, 
\[
\frac{1}{2}H_M
\preceq
\widehat H_M
\preceq
2H_M,
\]
and
\[
\|\widehat H_M\|_s=O_p(N),
\qquad
\|\widehat H_M^{-1}\|_s=O_p(N^{-1}).
\]

For the pilot block, define
\[
H_{M,\mathcal P}
=
X_{M,\mathcal P}^\top
\Sigma_{0,\mathcal P}^{-1}
X_{M,\mathcal P}.
\]
Because $\mathcal P$ is a simple random sample without replacement of
$N_s$ subjects, each subject is included in $\mathcal P$ with probability
$N_s/N$. Conditional on $\mathcal D_N$, we therefore have
\[
\begin{aligned}
\mathbb E\left\{
\operatorname{tr}(H_{M,\mathcal P})
\mid \mathcal D_N
\right\}
=
\frac{N_s}{N}\operatorname{tr}(H_M)\le\frac{N_s}{N}d_M\|H_M\|_s
=
O_p(N_s),
\end{aligned}
\]
where the last equality follows from
$\|H_M\|_s=O_p(N)$ and the fixed dimension $d_M$.
Hence, by Markov's inequality,
\[
\|H_{M,\mathcal P}\|_s
\le
\operatorname{tr}(H_{M,\mathcal P})=O_p(N_s).
\]

Moreover,
\[
\widehat H_{M,\mathcal P}
=
X_{M,\mathcal P}^\top
\Sigma_{0,\mathcal P}^{-1/2}
(I+\widehat E_{\mathcal P})^{-1}
\Sigma_{0,\mathcal P}^{-1/2}
X_{M,\mathcal P},
\]
where
\[
\|\widehat E_{\mathcal P}\|_s
\le
\|\widehat E\|_s.
\]
Thus, on the event $\|\widehat E\|_s<1/2$,
\[
\widehat H_{M,\mathcal P}
\preceq
2
H_{M,\mathcal P},
\]
which gives
\[
\|\widehat H_{M,\mathcal P}\|_s
=
O_p(N_s).
\]
Since
\[
\widehat V_M^{-1}
=
\frac{1}{2}\widehat H_M,
\]
we obtain
\[
\|\widehat V_M^{-1}\|_s
=
O_p(N).
\]
We also have
\[
\widehat V_M^{-1/2}\widehat H_M^{-1}
=
2^{-1/2}\widehat H_M^{-1/2},
\]
so that
\[
\|\widehat V_M^{-1/2}\widehat H_M^{-1}\|_s
=
O_p(N^{-1/2}).
\]
Next,
\[
\begin{aligned}
\left\|
X_{M,\mathcal P}^\top
\Sigma_{0,\mathcal P}^{-1/2}
\right\|_s^2
&=
\left\|
X_{M,\mathcal P}^\top
\Sigma_{0,\mathcal P}^{-1}
X_{M,\mathcal P}
\right\|_s=O_p(N_s).
\end{aligned}
\]
Therefore
\[
\left\|
X_{M,\mathcal P}^\top
\Sigma_{0,\mathcal P}^{-1}
\right\|_s
=
O_p(\sqrt{N_s}).
\]
Conditional on $\mathcal D_N$ and $\mathcal P$,
\[
\Sigma_{0,\mathcal P}^{-1/2}
(Y_{\mathcal P}-\mu_{\mathcal P})\mid \mathcal D_N,\mathcal P
\sim
N(0,I_{n_{\mathcal P}}),
\]
where
\[
n_{\mathcal P}
=
\sum_{i\in\mathcal P}T_i.
\]
Thus,
\[
\begin{aligned}
&\mathbb E
\left[
\left\|
\Sigma_{0,\mathcal P}^{-1/2}
(Y_{\mathcal P}-\mu_{\mathcal P})
\right\|_2^2
\mid
\mathcal D_N,\mathcal P
\right]
=
\operatorname{tr}(I_{n_{\mathcal P}})
=
O_p(N_s),
\end{aligned}
\]
and Markov's inequality gives
\[
\left\|
\Sigma_{0,\mathcal P}^{-1/2}
(Y_{\mathcal P}-\mu_{\mathcal P})
\right\|_2
=
O_p(\sqrt{N_s}).
\]
Finally, conditional on $\mathcal D_N$ and $\mathcal P$,
\[
X_{M,\mathcal P}^\top
\Sigma_{0,\mathcal P}^{-1}
(Y_{\mathcal P}-\mu_{\mathcal P})\mid \mathcal D_N,\mathcal P\sim N(0,X_{M,\mathcal P}^\top
\Sigma_{0,\mathcal P}^{-1}X_{M,\mathcal P}).
\]
Thus,
\[
\begin{aligned}
&\mathbb E
\left[
\left\|
X_{M,\mathcal P}^\top
\Sigma_{0,\mathcal P}^{-1}
(Y_{\mathcal P}-\mu_{\mathcal P})
\right\|_2^2
\mid
\mathcal D_N,\mathcal P
\right]
=
\operatorname{tr}(H_{M,\mathcal P})
=
O_p(N_s),
\end{aligned}
\]
and Markov's inequality gives
\[
\left\|
X_{M,\mathcal P}^\top
\Sigma_{0,\mathcal P}^{-1}
(Y_{\mathcal P}-\mu_{\mathcal P})
\right\|_2
=
O_p(\sqrt{N_s}).
\]
\end{proof}

\begin{lemma}
\label{lem:inv_dif}
    Let $A,B$ be two invertible matrices. Then we have 
    \[
    A^{-1}-B^{-1}=A^{-1}(B-A)B^{-1}=B^{-1}(B-A)A^{-1}.
    \]
\end{lemma}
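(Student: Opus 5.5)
Lemma~\ref{lem:inv_dif} is the resolvent identity, and the plan is to verify each equality by direct factorization, using only that $A$ and $B$ are invertible (so $A^{-1}$ and $B^{-1}$ exist and are two-sided inverses).

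For the first equality, I would expand the right-hand side by distributivity of matrix multiplication:
\[
A^{-1}(B-A)B^{-1}=A^{-1}BB^{-1}-A^{-1}AB^{-1}=A^{-1}-B^{-1}.
\]

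For the second equality, the same expansion in the other order gives
\[
B^{-1}(B-A)A^{-1}=B^{-1}BA^{-1}-B^{-1}AA^{-1}=A^{-1}-B^{-1}.
\]

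I expect no genuine obstacle. The only point needing care is the dimension requirement: $A$ and $B$ must be square and of the same size so that $B-A$ and the products are defined, which is implicit in the statement. In later use, with $A=\widehat H_M$ and $B=H_M$, this identity combines with the operator-norm bounds of Lemma~\ref{lem:empirical_information_order} to control $\widehat H_M^{-1}-H_M^{-1}$.
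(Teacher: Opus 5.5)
Your proof is correct and follows the paper's own argument: expand each right-hand side by distributivity and cancel using $AA^{-1}=A^{-1}A=I$ and $BB^{-1}=B^{-1}B=I$. Your second expansion, $B^{-1}BA^{-1}-B^{-1}AA^{-1}$, is in fact the correctly ordered form of the paper's intermediate expression $BB^{-1}A^{-1}-B^{-1}A^{-1}A$, which has its factors misplaced; separately, your remark about later use is slightly off, since the paper applies the identity to $(I+\widehat E_{\mathcal P})^{-1}-I$ and $G_i^{-1}(\theta_1)-G_i^{-1}(\theta_2)$ rather than to $\widehat H_M^{-1}-H_M^{-1}$, though this does not affect your proof.
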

\begin{proof}
A direct calculation shows
    \[
    \begin{aligned}
    A^{-1}(B-A)B^{-1}&=A^{-1}BB^{-1}-A^{-1}AB^{-1}=A^{-1}-B^{-1},\qquad\text{and}\\
    B^{-1}(B-A)A^{-1}&=BB^{-1}A^{-1}-B^{-1}A^{-1}A=A^{-1}-B^{-1}.
    \end{aligned}
    \]
\end{proof}

\begin{lemma}
\label{lemma:norm_change}
Let
\[
A\in\mathbb R^{m\times d},\qquad
B\in\mathbb R^{d\times d},\qquad
C\in\mathbb R^{d\times q}.
\]
Suppose that
\[
A^\top A=aI_d
\]
for some $a\ge 0$. Then
\[
\|ABC\|_s
\le
\|B\|_s\|AC\|_s.
\]
\end{lemma}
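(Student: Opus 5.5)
The plan is to write the spectral norm of $ABC$ through its Gram matrix and move the scalar $a$ out using $A^\top A=aI_d$. For any matrix $Z$, $\|Z\|_s^2=\lambda_{\max}(Z^\top Z)$. We therefore compute
\[
(ABC)^\top(ABC)=C^\top B^\top A^\top A B C = a\,C^\top B^\top B C,
\]
and likewise
\[
(AC)^\top(AC)=C^\top A^\top A C=a\,C^\top C .
\]
These give $\|ABC\|_s^2=a\,\|BC\|_s^2$ and $\|AC\|_s^2=a\,\|C\|_s^2$.

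We treat the degenerate case first. If $a=0$, then $\|ABC\|_s=0$ and the inequality is trivial.

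If $a>0$, submultiplicativity gives $\|BC\|_s\le\|B\|_s\|C\|_s$. Hence
\[
\|ABC\|_s=\sqrt a\,\|BC\|_s\le \sqrt a\,\|B\|_s\|C\|_s=\|B\|_s\,\sqrt a\,\|C\|_s=\|B\|_s\|AC\|_s .
\]

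The key point is that $A/\sqrt a$ is an isometry from $\mathbb R^d$ into $\mathbb R^m$, so multiplying on the left by $A$ scales every spectral norm by exactly $\sqrt a$. There is no genuine obstacle here; the only care needed is handling $a=0$ separately, so that we never divide by $\sqrt a$.
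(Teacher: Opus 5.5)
Your proof is correct and matches the paper's argument: both use $A^\top A=aI_d$ to get $\|ABC\|_s^2=a\|BC\|_s^2$ and $\|AC\|_s^2=a\|C\|_s^2$, and then apply submultiplicativity $\|BC\|_s\le\|B\|_s\|C\|_s$. Your separate treatment of $a=0$ is harmless but unnecessary, because your chain of inequalities never divides by $\sqrt a$.
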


\begin{proof}
Since $A^\top A=aI_d$,
\[
\begin{aligned}
\|ABC\|_s^2
&=\|C^\top B^\top A^\top ABC\|_s
\\
&=
a\|C^\top B^\top BC\|_s\\
&\le
a\|B\|_s^2\|C\|^2_s.
\end{aligned}
\]
We also have
\[
\|AC\|_s^2=\|C^\top A^\top AC\|_s=a\|C^\top C\|_s=a\|C\|_s^2.
\]
Therefore, 
\[
\|ABC\|_s^2\le a\|B\|_s^2\|C\|^2_s=\|B\|_s^2\|AC\|_s^2.
\]

\end{proof}

\begin{lemma}
\label{lemma:h_bound}
    Suppose that $M_{\mathrm I}=M_{\mathrm S}=J$ and 
$$
\Sigma_i(\theta) = Z_iD(\theta)Z_i^\top+\sigma^2(\theta)I_{T_i}, \qquad \theta\in\Theta_0
$$
where $\Theta_0$ is a sufficiently small fixed neighborhood of \(\theta_0\) contained in the covariance-parameter space. For 
$$
h_i(\theta) = X_{i,M}^\top \left\{ \Sigma_i^{-1}(\theta)-\Sigma_{0i}^{-1} \right\} r_{Mi}^{P},
$$
there exists a constant $L_0<\infty$, independent of $i$ and $M$, such that, for all $\theta_1,\theta_2\in\Theta_0$, 
$$
\left\| h_i(\theta_1)-h_i(\theta_2) \right\|_2 \le L_0 \left\| X_{i,M}^\top \Sigma_{0i}^{-1} r_{Mi}^{P} \right\|_2\left\| \theta_1-\theta_2 \right\|_2.
$$
\end{lemma}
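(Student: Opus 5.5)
The plan is to exploit the common-support structure so that the $T_i$-dimensional perturbation $\Sigma_i^{-1}(\theta)-\Sigma_{0i}^{-1}$ acts only through a $2\times 2$ matrix, and then to apply Lemma~\ref{lemma:norm_change}. Throughout, write $u_i=(1,x_{i,J}^\top)^\top$ and $S_i=Z_i^\top Z_i$.

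\textbf{Step 1 (factor the design).} Since $M_I=M_S=J$, we have $C_M(x_i)=I_2\otimes u_i^\top$. Hence
\[
X_{i,M}^\top=A_iZ_i^\top,\qquad A_i=I_2\otimes u_i,\qquad A_i^\top A_i=\|u_i\|_2^2I_2 .
\]
This is exactly the hypothesis of Lemma~\ref{lemma:norm_change} with $a=\|u_i\|_2^2$. When $M_I\neq M_S$ the factor $A_i$ is not a multiple of an isometry, which is where the common-support assumption enters.

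\textbf{Step 2 (push-through identity).} Apply the identity $Z_i^\top(Z_iDZ_i^\top+\sigma^2I)^{-1}=(S_iD+\sigma^2I)^{-1}Z_i^\top$. With $G_i(\theta)=(S_iD(\theta)+\sigma^2(\theta)I_2)^{-1}$ this gives
\[
Z_i^\top\Sigma_i^{-1}(\theta)=K_i(\theta)\,Z_i^\top\Sigma_{0i}^{-1},\qquad K_i(\theta)=G_i(\theta)G_i(\theta_0)^{-1}.
\]
Setting $w_i=Z_i^\top\Sigma_{0i}^{-1}r^P_{Mi}\in\mathbb R^2$, we obtain
\[
h_i(\theta_1)-h_i(\theta_2)=A_i\{K_i(\theta_1)-K_i(\theta_2)\}w_i .
\]
Lemma~\ref{lemma:norm_change}, applied with $B=K_i(\theta_1)-K_i(\theta_2)$ and $C=w_i$, then yields
\[
\|h_i(\theta_1)-h_i(\theta_2)\|_2\le\|K_i(\theta_1)-K_i(\theta_2)\|_s\,\|A_iw_i\|_2 .
\]
Here $A_iw_i=X_{i,M}^\top\Sigma_{0i}^{-1}r^P_{Mi}$. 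So it remains to show $\|K_i(\theta_1)-K_i(\theta_2)\|_s\le L_0\|\theta_1-\theta_2\|_2$ with $L_0$ not depending on $S_i$.

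\textbf{Step 3 (Lipschitz bound for $K_i$, the main obstacle).} By Lemma~\ref{lem:inv_dif},
\[
K_i(\theta_1)-K_i(\theta_2)=G_i(\theta_1)\{S_i\Delta D+\Delta\sigma^2I\}\,G_i(\theta_2)G_i(\theta_0)^{-1},
\]
where $\Delta D=D(\theta_2)-D(\theta_1)$ and $\Delta\sigma^2=\sigma^2(\theta_2)-\sigma^2(\theta_1)$.

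The difficulty is uniformity: $S_i$ is unbounded as $T_i$ and the observation times vary. To handle this, rewrite $G_i(\theta)=D^{-1}(S_i+\sigma^2D^{-1})^{-1}$. Then:
\begin{itemize}
\item $G_i(\theta)S_i=D^{-1}(S_i+\sigma^2D^{-1})^{-1}S_i$ and $G_i(\theta)$ are bounded uniformly in $S_i\succeq0$, because $(S_i+P)^{-1/2}S_i(S_i+P)^{-1/2}\preceq I$ and $(S_i+P)^{-1}\preceq P^{-1}$ for positive definite $P=\sigma^2D^{-1}$.
\item $K_i(\theta)=D^{-1}(S_i+\sigma^2D^{-1})^{-1}(S_i+\sigma_0^2D_0^{-1})D_0$ is bounded uniformly. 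This holds provided $\Theta_0$ is small enough that $\sigma^2D^{-1}\succeq c\,\sigma_0^2D_0^{-1}$ for some $c>0$, since then $(S_i+\sigma^2D^{-1})^{-1/2}(S_i+\sigma_0^2D_0^{-1})(S_i+\sigma^2D^{-1})^{-1/2}\preceq c^{-1}I$.
\end{itemize}
In both items, conjugating by symmetric square roots introduces only factors of $\|D^{\pm1/2}\|_s$ and $\|D_0^{\pm1/2}\|_s$, which are bounded over the compact neighborhood $\Theta_0$.

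Combining these bounds with $\|\Delta D\|_s+|\Delta\sigma^2|\le2\|\theta_1-\theta_2\|_2$ gives the claim, with $L_0$ depending only on $\Theta_0$. The step I would check most carefully is the uniform boundedness of the non-symmetric products such as $G_iS_i$ and $K_i$. I would make this rigorous by writing each as $D^{-1/2}\times(\text{symmetric PSD-sandwich})\times D^{\pm1/2}$ before taking spectral norms.
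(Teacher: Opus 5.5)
Your proof is correct and follows essentially the paper's route. Common support makes $C_M(x_i)^\top$ a scaled isometry, so Lemma~\ref{lemma:norm_change} applies. The push-through identity then reduces everything to the $2\times 2$ matrix $K_i(\theta)=\{L_i(\theta_0)L_i(\theta)^{-1}\}^\top$, which is exactly the matrix the paper bounds. Your left-sided form $Z_i^\top\Sigma_i^{-1}(\theta)=K_i(\theta)Z_i^\top\Sigma_{0i}^{-1}$ even avoids the fact $r^P_{Mi}=Z_iu_{Mi}$ that the paper invokes.

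One justification in Step~3, the uniform bound on $K_i$, needs replacing. The ordering $(S_i+\sigma^2D^{-1})^{-1/2}(S_i+\sigma_0^2D_0^{-1})(S_i+\sigma^2D^{-1})^{-1/2}\preceq c^{-1}I$ controls only the eigenvalues of the non-symmetric product $(S_i+\sigma^2D^{-1})^{-1}(S_i+\sigma_0^2D_0^{-1})$. Turning it into a spectral-norm bound requires conjugating by $(S_i+\sigma^2D^{-1})^{\pm1/2}$, and the condition number of that matrix is not bounded uniformly in $S_i$. By contrast, $G_iS_i=D^{-1/2}(\tilde S_i+\sigma^2I)^{-1}\tilde S_iD^{-1/2}$ with $\tilde S_i=D^{1/2}S_iD^{1/2}$ is an honest sandwich, so your first bullet is fine.

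The fix is one line: $K_i(\theta)-I=G_i(\theta)\{S_i(D_0-D)+(\sigma_0^2-\sigma^2)I_2\}$, which is uniformly bounded by your first bullet. Alternatively, argue as the paper does:
\begin{itemize}
\item $L_i(\theta)L_i(\theta_0)^{-1}$ is affine in $(D,\sigma^2)$, with Lipschitz constant depending only on $\|L_i(\theta_0)^{-1}\|_s$ and $\|Z_i^\top Z_iL_i(\theta_0)^{-1}\|_s$;
\item shrink $\Theta_0$ so that this matrix stays within $1/2$ of $I_2$;
\item pass to inverses.
\end{itemize}
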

\begin{proof}
    Here we take $\Theta_0$ sufficiently small such that, $\forall \theta\in  \Theta_0$, we have
    \[
    \frac{1}{2}D(\theta_0)\preceq D(\theta)\preceq 2 D(\theta_0), \frac{1}{2}\sigma^2(\theta_0)\le \sigma^2(\theta)\le 2\sigma^2(\theta_0).
    \]
    Let
    \begin{equation}
        \label{eq:L_i}
        L_i(\theta)=D(\theta)Z_i^TZ_i+\sigma^2(\theta)I_2
    \end{equation}
    Then we have
    \[
    \Sigma_i(\theta)Z_i=Z_iL_i(\theta), 
    \]
    therefore,
    \[
    \Sigma_i^{-1}(\theta)Z_i=Z_iL_i^{-1}(\theta).
    \]
    Besides, we also have
    \[
    \begin{aligned}
    \Sigma_i^{-1}(\theta)Z_i&=Z_iL_i^{-1}(\theta)
    \\&=\Sigma_{i}^{-1}(\theta_0)\Sigma_{i}(\theta_0)Z_i L^{-1}_i(\theta)
    \\&=\Sigma_i^{-1}(\theta_0)Z_iL_i(\theta_0)L^{-1}_i(\theta).
    \end{aligned}
    \]
    Hence, we have
    \[
    Z_i^\top \Sigma_i^{-1}(\theta)Z_i=Z_i^T\Sigma_i^{-1}(\theta_0)Z_iL_i(\theta_0)L^{-1}_i(\theta).
    \]
    Since the left-hand side is symmetric, we have
    \[
    Z_i^\top \Sigma_i^{-1}(\theta)Z_i=\{L_i(\theta_0)L^{-1}_i(\theta)\}^\top Z_i^T\Sigma_i^{-1}(\theta_0)Z_i.
    \]

    Recall that under \eqref{eq:C_M}, we have\[
C_M(X_i)
=
\begin{pmatrix}
1 & X_{i,M_I}^{\top}  & 0& 0^\top\\
0 & 0^\top & 1 & X_{i,M_S}^{\top}
\end{pmatrix},\]
    and
   $X_{iM}=Z_iC_{iM}$. As we have $r_{Mi}^P=Z_i u_{Mi}$, with
    $u_{Mi}=m(X_i)-C_{iM}\beta^{P}_M.$ Therefore, we find that
    \begin{align}
        h_i(\theta_1)-h_i(\theta_2)&=X_{i,M}^\top \{\Sigma_i^{-1}(\theta_1)-\Sigma_i^{-1}(\theta_2)\}r_{Mi}^P\nonumber
        \\&=C_{iM}^TZ_i^T\Sigma_i^{-1}(\theta_1)Z_iu_{Mi}-C_{iM}^TZ_i^T\Sigma_i^{-1}(\theta_2)Z_iu_{Mi}\nonumber
        \\&=C_{iM}^\top \{L_i(\theta_0)L^{-1}_i(\theta_1)\}^\top Z_i^\top\Sigma_i^{-1}(\theta_0)Z_iu_{M_i}- C_{iM}^\top \{L_i(\theta_0)L^{-1}_i(\theta_2)\}^\top Z_i^\top \Sigma_i^{-1}(\theta_0)Z_iu_{M_i} \nonumber
        \\&=C_{iM}^\top \{\{L_i(\theta_0)L^{-1}_i(\theta_1)\}^\top-\{L_i(\theta_0)L^{-1}_i(\theta_2)\}^\top\}Z_i^T\Sigma_i^{-1}(\theta_0)Z_iu_{M_i}.\nonumber        
    \end{align}
    Under $M_I=M_S$, $C_{i,M}C_{i,M}^\top= (1+\|x_{iM}\|_2^2)I$, by Lemma \ref{lemma:norm_change} and \eqref{eq:delta_h}, we have
    \begin{align}
    \|h_i(\theta_1)-h_i(\theta_2)\|_s&\le \|\{L_i(\theta_0)L^{-1}_i(\theta_1)\}^\top-\{L_i(\theta_0)L^{-1}_i(\theta_2)\}^\top\|_s\|C_{iM}^\top Z_i^T\Sigma_i^{-1}(\theta_0)Z_iu_{M_i}\|_2 \nonumber
    \\&=\|L_i(\theta_0)L^{-1}_i(\theta_1)-L_i(\theta_0)L^{-1}_i(\theta_2)\|_s\left\| X_{i,M}^\top \Sigma_{0i}^{-1} r_{Mi}^{P} \right\|_2.
    \label{eq:delta_h}
    \end{align}

    Let $G_i(\theta)=L_i(\theta)L_i^{-1}(\theta_0)$, so $G_i^{-1}(\theta)=L_i(\theta_0)L_i^{-1}(\theta)$. Now we only need to bound $\|G_i^{-1}(\theta_1)-G_i^{-1}(\theta_2)\|_s$.
    
    Note that $L_i(\theta)=D(\theta)Z_i^TZ_i+\sigma^2(\theta)I_2$ in \eqref{eq:L_i}, the difference between $G_i(\theta_1)$ and $G_i(\theta_2)$ is
	\begin{align}
		\|G_i(\theta_1)-G_i(\theta_2)\|_s
		&=
		\|\left[
		\{D(\theta_1)-D(\theta_2)\}Z_i^TZ_i
		+
		\{\sigma^2(\theta_1)-\sigma^2(\theta_2)\}I_2
		\right]
		L_i^{-1}(\theta_0)\|_s \nonumber
        \\&\le \|D(\theta_1)-D(\theta_2)\|_s\|Z_i^TZ_iL_i^{-1}(\theta_0)\|_s+|\sigma^2(\theta_1)-\sigma^2(\theta_2)|\|L_i^{-1}(\theta_0)\|_s.
        \label{eq:delta_G1}
	\end{align}
	Now
	\[
    	L_i(\theta_0)
    	=
    	D_0Z_i^TZ_i+\sigma_0^2I_2
    	=
    	D_0^{1/2}
    	\left(
    	D_0^{1/2}Z_i^TZ_iD_0^{1/2}
    	+\sigma_0^2I_2
    	\right)
    	D_0^{-1/2}.
	\]
	Since $Z_i^\top Z_i\succeq0$,
	\begin{equation}
    \label{eq:L_inv_norm}
	\|L_i(\theta_0)^{-1}\|_s
	\le
	\frac{
		\|D_0^{1/2}\|_s
		\|D_0^{-1/2}\|_s
	}{
		\sigma_0^2
	}
	<\infty.
	\end{equation}
    Besides, as $L(\theta_0)=D_0Z_i^TZ_i+\sigma^2_0I_2$, we have 
    \[
    Z_i^TZ_iL_i^{-1}(\theta_0)=D_0^{-1}(I-\sigma^2_0 L_i^{-1}(\theta_0)),
    \]
    and
    \begin{equation}
    \label{eq:zzl_inv}
    \|D_0^{-1}(I-\sigma^2_0 L_i^{-1}(\theta_0))\|_s\le  \|D_0^{-1}\|_s(1+\sigma_0^2\|L_i^{-1}(\theta_0)\|_s)\le \|D_0^{-1}\|_s(1+\|D_0^{1/2}\|_s\|D_0^{-1/2}\|_s)<\infty.
    \end{equation}
    Therefore, as 
    \[\max\{\|D(\theta_1)-D(\theta_2)\|_s,|\sigma^2(\theta_1)-\sigma^2(\theta_2)|\}\le \|\theta_1-\theta_2\|_2, \]
     taking $C_0=\frac{
		\|D_0^{1/2}\|_s
		\|D_0^{-1/2}\|_s
	}{
		\sigma_0^2
	}+\|D_0^{-1}\|_s(1+\|D_0^{1/2}\|_s\|D_0^{-1/2}\|_s)$,  by \eqref{eq:delta_h},\eqref{eq:L_inv_norm}, and \eqref{eq:zzl_inv}, we have
    \begin{equation}
    \label{eq:delta_G2}
         \|G_i(\theta_1)-G_i(\theta_2)\|_s\le C_0\|\theta_1-\theta_2\|_2.   
    \end{equation}
Choose $\Theta_0$ sufficiently small that
    \[
    \sup_{\theta\in\Theta_0}\|\theta-\theta_0\|_2
    \le \frac{1}{2C_0}.
    \]
    Since $G_i(\theta_0)=I_2$, we have
    \[
    \sup_{\theta\in\Theta_0}
    \|G_i(\theta)-I_2\|_s
    \le \frac12.
    \]
    For every $v\in\mathbb R^2$,
    \[
    \|G_i(\theta)v\|_2
    \ge
    \|v\|_2-\|(G_i(\theta)-I_2)v\|_2
    \ge \frac12\|v\|_2.
    \]
    Consequently,
    \[
    \sup_{\theta\in\Theta_0}
    \|G_i(\theta)^{-1}\|_s\le2.
    \]
    Therefore, combining with \eqref{eq:delta_G2}, we obtain that
    \begin{align}
    &\|G_i^{-1}(\theta_1)-G_i^{-1}(\theta_2)\|_s \nonumber
    \\=&\|G_i^{-1}(\theta_1)\{G_i(\theta_2)-G_i(\theta_1)\}G_i^{-1}(\theta_2)\|_s \nonumber
    \\\le &\|G_i^{-1}(\theta_1)\|_s\|G_i^{-1}(\theta_2)\|_s\|\{G_i(\theta_2)-G_i(\theta_1)\}\|_s \nonumber
    \\\le& 4 C_0\|\theta_1-\theta_2\|_2.
    \label{eq:G_inv_dif}
    \end{align}

Finally, combining\eqref{eq:G_inv_dif} and \eqref{eq:delta_h}, we have
$$
\left\| h_i(\theta_1)-h_i(\theta_2) \right\|_2 \le 4C_0 \left\| X_{i,M}^\top \Sigma_{0i}^{-1} r_{Mi}^{P} \right\|_2\left\| \theta_1-\theta_2 \right\|_2.
$$
\end{proof}

\begin{remark}
If we consider a stronger assumption in assumption 4, and require that there exists a constant $C_h<\infty$ such that
	\[
	\sup_M
	\mathbb E\left[
	\left\|
	X_{i,M}^{\top}
	\Sigma_i(\theta_0)^{-1/2}\right\|_s^2\left\|\Sigma_i(\theta_0)^{-1/2}
	r_{Mi}^{P}
	\right\|_2^2
	\right]
	\le C_h,\]
    we only need to prove a weaker result here, with
$$
\left\| h_i(\theta_1)-h_i(\theta_2) \right\|_2 \le L_0 \left\| X_{i,M}^\top \Sigma_{0i}^{-1/2}\right\|_s\left\|\Sigma_{0i}^{-1/2} r_{Mi}^{P} \right\|_2\left\| \theta_1-\theta_2 \right\|_2.
$$
The proof can be simplified. Because we have
\[
\begin{aligned}
    \left\| h_i(\theta_1)-h_i(\theta_2) \right\|_2&\le \|X_{i,M}^\top \{\Sigma_i^{-1}(\theta_1)-\Sigma_i^{-1}(\theta_2)\}r_{Mi}^P\|_s
    \\&\le \|X_{i,M}^\top \Sigma_{0}^{-1/2}\|_s\|\Sigma_0^{1/2}\{\Sigma_i^{-1}(\theta_1)-\Sigma_i^{-1}(\theta_2)\}\Sigma_0^{1/2}\|_s\|\Sigma_0^{-1/2} r_{Mi}^P\|_s.
\end{aligned}
\]
We take $\Theta$ sufficiently small, such that $\|E_i(\theta)\|_s\le 1/2$, and $\|(I+E_i(\theta))^{-1}\|_s\le 2$. Therefore,
\[
\begin{aligned}
&\|\Sigma_0^{1/2}\{\Sigma_i^{-1}(\theta_1)-\Sigma_i^{-1}(\theta_2)\}\Sigma_0^{1/2}\|_s
\\=&\|(I+E_i(\theta_1))^{-1}-(I+E_i(\theta_2))^{-1}\|_s
\\=&\|(I+E_i(\theta_1))^{-1}\{I+E_i(\theta_2)-I-E_i(\theta_1)\}(I+E_i(\theta_2))^{-1}\|_s
\\\le &\|(I+E_i(\theta_1))^{-1}\|_s\|E_i(\theta_2)-E_i(\theta_1)\|_s\|(I+E_i(\theta_2))^{-1}\|_s
\\\lesssim&\|\theta_1-\theta_2\|_2,
\end{aligned}
\]
where the second equality is based on Lemma \ref{lem:inv_dif} and the last inequality is based on Lemma \ref{lem:relative_covariance_lipschitz}.
Therefore, there exists $L_0$ such that
\[
\left\| h_i(\theta_1)-h_i(\theta_2) \right\|_2 \le L_0 \left\| X_{i,M}^\top \Sigma_{0i}^{-1/2}\right\|_s\left\|\Sigma_{0i}^{-1/2} r_{Mi}^{P} \right\|_2\left\| \theta_1-\theta_2 \right\|_2.
\]
\end{remark}

\begin{lemma}[Pinsker's inequality]
\label{lemma:Pinsker_inequality}
Let $P$ and $Q$ be probability measures with
$P\ll Q$. Let $d_{TV}(P,Q)$ be the total variation between $P$ and $Q$, and $d_{KL}(P,Q)$ is the KL divergence between $P$ and $Q$.
Then
\[
d_{\mathrm{TV}}(P,Q)
\le
\sqrt{\frac{1}{2}d_{\mathrm{KL}}(P,Q)}.
\]
\end{lemma}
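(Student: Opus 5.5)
The plan is the standard reduction of Pinsker's inequality to the two-point (Bernoulli) case, followed by a one-variable calculus argument. Throughout, use the convention $d_{\mathrm{TV}}(P,Q)=\sup_{A}|P(A)-Q(A)|$ and $d_{\mathrm{KL}}(P,Q)=\int \log(dP/dQ)\,dP$; since $P\ll Q$, the density $f=dP/dQ$ exists.

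\textbf{Step 1: reduce to two points.} Let $A^\ast=\{f>1\}$. Then
\[
d_{\mathrm{TV}}(P,Q)=P(A^\ast)-Q(A^\ast)=\tfrac12\int|f-1|\,dQ .
\]
Set $p=P(A^\ast)$ and $q=Q(A^\ast)$, and consider the measurable map $T=\mathbf 1_{A^\ast}$. The pushforwards $P\circ T^{-1}$ and $Q\circ T^{-1}$, as in \eqref{eq:pushforward}, are the Bernoulli laws $\mathrm{Ber}(p)$ and $\mathrm{Ber}(q)$. By construction, their total variation distance equals $p-q=d_{\mathrm{TV}}(P,Q)$.

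\textbf{Step 2: data processing for KL.} Show that
\[
d_{\mathrm{KL}}(\mathrm{Ber}(p),\mathrm{Ber}(q))\le d_{\mathrm{KL}}(P,Q).
\]
On $A^\ast$, Jensen's inequality applied to the convex function $\phi(t)=t\log t$ under the normalized measure $Q(\cdot\cap A^\ast)/q$ gives
\[
\int_{A^\ast} f\log f\,dQ \ge p\log(p/q).
\]
The same bound holds on the complement with $1-p$ and $1-q$. Summing the two pieces gives the claim. Edge cases are handled separately: if $q=0$ then $p=0$ by absolute continuity, and if $q=1$ the complement term vanishes.

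\textbf{Step 3: the binary inequality.} It remains to show, for $0\le q\le p\le 1$,
\[
g(q):=p\log\frac pq+(1-p)\log\frac{1-p}{1-q}-2(p-q)^2\ge 0 .
\]
Fix $p$ and view $g$ as a function of $q\in(0,p]$. We have $g(p)=0$, and
\[
g'(q)=-\frac pq+\frac{1-p}{1-q}+4(p-q)=(q-p)\Bigl\{\frac{1}{q(1-q)}-4\Bigr\}.
\]
Since $q(1-q)\le 1/4$, the brace is nonnegative. Hence $g'(q)\le 0$ on $(0,p]$, so $g$ is nonincreasing and $g(q)\ge g(p)=0$.

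\textbf{Conclusion and main obstacle.} Combining the three steps gives $2\,d_{\mathrm{TV}}^2\le d_{\mathrm{KL}}$, which is the stated bound. The main care point is Step 2: the Jensen argument must be applied correctly to the restricted, normalized measures, with the degenerate cases $q\in\{0,1\}$ treated separately. The calculus in Step 3 is routine once the derivative is factored as above.
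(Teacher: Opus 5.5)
Your proof is correct. The choice $A^\ast=\{f>1\}$ gives $p-q=d_{\mathrm{TV}}(P,Q)$, and the restricted Jensen step is exactly the log-sum (data-processing) inequality for KL. The factorization $g'(q)=(q-p)\{1/(q(1-q))-4\}$ also checks out. One small point: your edge case $q=1$ is vacuous, because $f>1$ on $A^\ast$ forces $p>q$ whenever $Q(A^\ast)>0$.

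The paper does not prove the lemma; it only cites Tsybakov (2009, Lemma~2.5), whose argument takes a different route. There, with $\psi(x)=x\log x-x+1$, one checks the pointwise bound $(x-1)^2\le(\tfrac43+\tfrac{2x}{3})\psi(x)$ for $x\ge 0$. One then applies Cauchy--Schwarz to $\tfrac12\int|f-1|\,dQ$, using $\int(\tfrac43+\tfrac23 f)\,dQ=2$ and $\int\psi(f)\,dQ=d_{\mathrm{KL}}(P,Q)$.

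That route works directly on the density in a single computation, with no reduction step, but the weight $\tfrac43+\tfrac{2x}{3}$ has to be guessed. Your route is more conceptual. Data processing shows that any lower bound on KL in terms of TV reduces to the Bernoulli case. It also makes clear where the sharp constant comes from: the curvature bound $q(1-q)\le\tfrac14$, which is tight as $p,q\to\tfrac12$.
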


\begin{proof}
This result can be found in \citet[Lemma~2.5]{tsybakov2009introduction}.
\end{proof}

\begin{lemma}[KL divergence between multivariate normal distributions]
\label{lemma:KL_mult}
Let
\[
P=N_d(\mu_1,\Sigma_1),
\qquad
Q=N_d(\mu_0,\Sigma_0),
\]
where $\Sigma_0$ and $\Sigma_1$ are positive definite. Then
\[
\begin{aligned}
d_{KL}(P,Q)
=
\frac{1}{2}
\Big[
&\operatorname{tr}(\Sigma_0^{-1}\Sigma_1)
+
(\mu_0-\mu_1)^\top
\Sigma_0^{-1}
(\mu_0-\mu_1)-d
+
\log\frac{\det(\Sigma_0)}{\det(\Sigma_1)}
\Big].
\end{aligned}
\]
In particular, if $Q=N_d(0,I_d)$, then
\[
d_{\mathrm{KL}}
\left\{
N_d(\mu,\Sigma)\,,N_d(0,I_d)
\right\}
=
\frac{1}{2}
\left\{
\operatorname{tr}(\Sigma)-d-\log\det(\Sigma)
+\mu^\top\mu
\right\}.
\]
\end{lemma}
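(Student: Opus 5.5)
The plan is to compute the KL divergence directly from its definition, $d_{\mathrm{KL}}(P,Q)=\mathbb E_P\{\log p(X)-\log q(X)\}$ with $X\sim N_d(\mu_1,\Sigma_1)$, using the explicit Gaussian densities. Both densities exist and are positive everywhere because $\Sigma_0$ and $\Sigma_1$ are positive definite. Hence $P\ll Q$ and the log-ratio is well defined. Writing
\[
\log p(x)=-\tfrac d2\log(2\pi)-\tfrac12\log\det\Sigma_1-\tfrac12(x-\mu_1)^\top\Sigma_1^{-1}(x-\mu_1),
\]
and similarly for $q$ with $(\mu_0,\Sigma_0)$, the $2\pi$ terms cancel. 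The log-ratio therefore equals
\[
\tfrac12\log\frac{\det\Sigma_0}{\det\Sigma_1}
-\tfrac12(x-\mu_1)^\top\Sigma_1^{-1}(x-\mu_1)
+\tfrac12(x-\mu_0)^\top\Sigma_0^{-1}(x-\mu_0).
\]

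The key tool is the quadratic-form moment identity: for $X$ with mean $\mu_1$ and covariance $\Sigma_1$, any symmetric $A$, and any vector $m$,
\[
\mathbb E\{(X-m)^\top A(X-m)\}=\operatorname{tr}(A\Sigma_1)+(\mu_1-m)^\top A(\mu_1-m).
\]
To justify it, I will write $X-m=(X-\mu_1)+(\mu_1-m)$, note that the cross term has mean zero, and use $\mathbb E\{(X-\mu_1)^\top A(X-\mu_1)\}=\operatorname{tr}(A\,\mathbb E[(X-\mu_1)(X-\mu_1)^\top])$ by the cyclic property of the trace.

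Applying the identity with $A=\Sigma_1^{-1}$ and $m=\mu_1$ gives $\operatorname{tr}(I_d)=d$. Applying it with $A=\Sigma_0^{-1}$ and $m=\mu_0$ gives $\operatorname{tr}(\Sigma_0^{-1}\Sigma_1)+(\mu_0-\mu_1)^\top\Sigma_0^{-1}(\mu_0-\mu_1)$. Collecting the three pieces with their factors of $\tfrac12$ yields the general formula exactly as stated.

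For the special case I substitute $\mu_0=0$ and $\Sigma_0=I_d$, with the first argument $N_d(\mu,\Sigma)$. Then $\operatorname{tr}(\Sigma_0^{-1}\Sigma_1)=\operatorname{tr}(\Sigma)$, the mean term becomes $\mu^\top\mu$, and $\log\{\det I_d/\det\Sigma\}=-\log\det\Sigma$.

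There is no real obstacle here. The only step needing care is the sign and orientation bookkeeping: the expectation is taken under $P$, the first argument, so the trace term is $\Sigma_0^{-1}\Sigma_1$ and not its reverse, and the determinant ratio has $\det\Sigma_0$ in the numerator. I will also note that $d_{\mathrm{KL}}$ is finite, since all the expectations involved are moments of a Gaussian.
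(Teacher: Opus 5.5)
Your proof is correct. The log-density ratio is right, and the quadratic-form identity $\mathbb E\{(X-m)^\top A(X-m)\}=\operatorname{tr}(A\Sigma_1)+(\mu_1-m)^\top A(\mu_1-m)$ is properly justified. You also handle the orientation correctly: the expectation is under the first argument, giving $\operatorname{tr}(\Sigma_0^{-1}\Sigma_1)$ and $\det\Sigma_0$ in the numerator. The special case follows correctly.

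The paper does not derive this lemma at all; it only cites Ihara (1993, Theorem~1.8.2). The only difference is that you supply the standard self-contained computation where the paper defers to a reference. The citation is shorter. Your version makes explicit that the expectation is taken under the first argument, which is the orientation the main proof relies on when it pairs the lemma with Pinsker's inequality for $P=N(\mu_K,\Gamma_K)$ and $Q=N(0,I_{d_M+n_{\mathcal P^c}})$.
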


\begin{proof}
This result can be found in \citet[Theorem 1.8.2]{ihara1993information}.
\end{proof}

\begin{lemma}[Schur complement determinant formula]
\label{lemma:schur}
Let
\[
M=
\begin{pmatrix}
A & B\\
C & D
\end{pmatrix},
\]
where $D$ is invertible. Then
\[
\det(M)
=
\det(D)
\det\left(A-BD^{-1}C\right).
\]
In particular,
\[
\det
\begin{pmatrix}
A & B\\
B^\top & I
\end{pmatrix}
=
\det\left(A-BB^\top\right).
\]
\end{lemma}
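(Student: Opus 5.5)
The determinant identity is the standard Schur-complement factorization of a block matrix, so I will prove it by exhibiting an explicit block LU-type factorization and then taking determinants.

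\textbf{Factorization.} Since $D$ is invertible, I will verify by direct block multiplication that
\[
\begin{pmatrix}
A & B\\
C & D
\end{pmatrix}
=
\begin{pmatrix}
I & BD^{-1}\\
0 & I
\end{pmatrix}
\begin{pmatrix}
A-BD^{-1}C & 0\\
0 & D
\end{pmatrix}
\begin{pmatrix}
I & 0\\
D^{-1}C & I
\end{pmatrix}.
\]
Multiplying the last two factors gives $\begin{pmatrix} A-BD^{-1}C & 0\\ C & D\end{pmatrix}$. Left-multiplying this by the first factor gives top-left block $A-BD^{-1}C+BD^{-1}C=A$ and top-right block $BD^{-1}D=B$. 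The bottom row is unchanged, namely $(C,\;D)$. This is the only computation to check.

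\textbf{Taking determinants.} I then apply multiplicativity of the determinant. The two outer factors are block unitriangular, so their determinants equal $1$; this follows, for example, by cofactor expansion or because they are triangular matrices with unit diagonal. The middle factor is block diagonal, so its determinant is $\det(A-BD^{-1}C)\det(D)$. This yields
\[
\det(M)=\det(D)\det(A-BD^{-1}C).
\]

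\textbf{Special case.} Substituting $C=B^\top$ and $D=I$, with $\det(I)=1$ and $I^{-1}=I$, gives $\det(A-BB^\top)$.

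There is no real obstacle here. The only point needing care is the block multiplication order, together with the remark that block triangular matrices with identity diagonal blocks have determinant one. Alternatively, the result can be cited directly from a standard matrix analysis reference such as Horn and Johnson.
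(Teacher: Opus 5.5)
Your block factorization checks out, and taking determinants gives the identity; setting $C=B^\top$, $D=I$ gives the special case. The paper offers no argument of its own and only cites Zhang (2006, Theorem~1.1). The standard proof there is essentially this block triangular factorization, so your proposal is the same argument written out in self-contained form.
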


\begin{proof}
This result can be found in \citet[Theorem 1.1]{zhang2006schur}.
\end{proof}

\begin{lemma}
\label{lemma:KL_cal}
Let $A\in\mathbb R^{d\times d}$ be symmetric. Then
\[
|\operatorname{tr}(A)|\le d\|A\|_s.
\]
If $\|A\|_s\le 1/2$, then $I_d+A$ is positive definite and
\[
|\log\det(I_d+A)|\le 2d\|A\|_s.
\]
\end{lemma}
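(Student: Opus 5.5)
The plan is to prove the two claims separately, using only elementary facts about symmetric matrices.

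\emph{Trace bound.} Since $A$ is symmetric, it has real eigenvalues $\lambda_1,\ldots,\lambda_d$ with $\operatorname{tr}(A)=\sum_k\lambda_k$. The spectral norm equals $\max_k|\lambda_k|$. The triangle inequality then gives $|\operatorname{tr}(A)|\le\sum_k|\lambda_k|\le d\|A\|_s$.

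\emph{Log-determinant bound.}
\begin{itemize}
\item \textbf{Positive definiteness.} When $\|A\|_s\le 1/2$, every eigenvalue of $A$ lies in $[-1/2,1/2]$. The eigenvalues of $I_d+A$ are $1+\lambda_k\in[1/2,3/2]$, so $I_d+A$ is positive definite.
\item \textbf{Reduction to scalars.} Write $\log\det(I_d+A)=\sum_{k}\log(1+\lambda_k)$. It then suffices to prove the scalar inequality $|\log(1+x)|\le 2|x|$ for $|x|\le 1/2$.
\item \textbf{Scalar inequality, main step.} By the mean value theorem, $\log(1+x)=x/(1+\xi)$ for some $\xi$ between $0$ and $x$. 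Since $1+\xi\ge 1/2$, this gives $|\log(1+x)|\le 2|x|$.
\item \textbf{Alternative for the scalar step.} For $x\ge0$ one has $0\le\log(1+x)\le x$. For $x\in[-1/2,0)$, concavity of $\log(1+x)$ on the interval gives $|\log(1+x)|\le |x|\cdot\log 2/(1/2)=2\log 2\,|x|\le 2|x|$.
\end{itemize}

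\emph{Conclusion.} Summing the scalar bound over the eigenvalues gives
\[
|\log\det(I_d+A)|\le\sum_k 2|\lambda_k|\le 2d\|A\|_s.
\]

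There is no real obstacle. The only point needing care is the scalar bound near $x=-1/2$, where the constant $2$ (rather than $1$) is required; the mean value argument handles this uniformly.
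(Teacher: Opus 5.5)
Your proposal is correct and follows the same route as the paper: diagonalize the symmetric matrix, bound the trace by $\sum_k|\lambda_k|\le d\|A\|_s$, and sum the scalar inequality $|\log(1+\lambda_k)|\le 2|\lambda_k|$ over the eigenvalues. The only difference is that you explicitly justify the positive definiteness of $I_d+A$ and the scalar inequality (via the mean value theorem, or alternatively concavity), whereas the paper simply asserts both.
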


\begin{proof}
Let $\lambda_1,\ldots,\lambda_d$ be the eigenvalues of $A$.
The trace bound follows from $|\lambda_j|\le\|A\|_s$.
If $\|A\|_s\le1/2$, then
$|\log(1+\lambda_j)|\le2|\lambda_j|$, and hence
\[
|\log\det(I_d+A)|
\le
\sum_{j=1}^d|\log(1+\lambda_j)|
\le
2d\|A\|_s.
\]
\end{proof}

% \begin{lemma}[Conditional contraction of total variation]
% \label{lem:conditional_tv_contraction}

% Let $(Z,S)$ be
% a random element taking values in
% $\Omega_1\times\Omega_2$, with $\mathcal G$ a $\sigma$-algebra.
% Let $(\Omega_1,\mathcal G_1,Q_1)$ and $(\Omega_2,\mathcal G_2,Q_2)$ be two probability space. For $j=1,2$, let $T_j:\Omega_{j}\to\Omega_{j}'$ be a measurable function. Let $Z=(Z_1,Z_2)$ with $Z_1\in \Omega_1, Z_2\in \Omega_1$. Let $Q_1\circ T_1^{-1}$ and $Q_2\circ T_1^{-2}$
% be the pushforward measure defined in \eqref{eq:pushforward}. Therefore,we have  
% \[
% \begin{aligned}
% d_{\mathrm{TV}}
% \left[
% \mathcal L\{T_1(Z),T_2(S)\mid\mathcal G\},
% (Q_Z\circ T_1^{-1})\otimes(Q_S\circ T_2^{-1})
% \right]
% \le
% d_{\mathrm{TV}}
% \left[
% \mathcal L(Z,S\mid\mathcal G),
% Q_Z\otimes Q_S
% \right],
% \end{aligned}
% \]
% where $\otimes$ is the product measure defined in \eqref{eq:product_prob}.
% .
% \end{lemma}
\begin{lemma}
\label{lem:conditional_tv_contraction}
Let $(\Omega_1,\mathcal G_1)$ and $(\Omega_2,\mathcal G_2)$ be
measurable spaces, and let $(Z_1,Z_2)$ be a random element taking
values in $\Omega_1\times\Omega_2$. Let $\mathcal G$ be a sigma-algebra
on the underlying probability space.

For $j=1,2$, let $Q_j$ be a reference probability distribution on
$(\Omega_j,\mathcal G_j)$, and let
\[
T_j:(\Omega_j,\mathcal G_j)\to(\Omega_j',\mathcal G_j')
\]
be a measurable map. The distributions $Q_1,Q_2$ and the maps
$T_1,T_2$ are
fixed conditional on $\mathcal G$.
Write $Q_j\circ T_j^{-1},j=1,2$ for the pushforward measure defined in
\eqref{eq:pushforward}, and use $\otimes$ for the product measure
defined in \eqref{eq:product_prob}. Assuming that the conditional
distributions below exist, we have, 
\[
\begin{aligned}
&d_{\mathrm{TV}}
\left[
\mathcal L\{T_1(Z_1),T_2(Z_2)\mid\mathcal G\},
(Q_1\circ T_1^{-1})\otimes(Q_2\circ T_2^{-1})
\right]
\le
d_{\mathrm{TV}}
\left[
\mathcal L(Z_1,Z_2\mid\mathcal G),
Q_1\otimes Q_2
\right].
\end{aligned}
\]
\end{lemma}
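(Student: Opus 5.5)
The plan is to use the standard coupling characterization of total variation together with the data-processing inequality. Work conditionally on $\mathcal G$ throughout. Since $Q_1,Q_2,T_1,T_2$ are $\mathcal G$-measurable, it suffices to prove the deterministic statement for a fixed probability measure $P$ on $\Omega_1\times\Omega_2$ (playing the role of a regular version of $\mathcal L(Z_1,Z_2\mid\mathcal G)$) and fixed reference measures. The conditional inequality then holds pointwise in $\omega$ for the regular conditional distributions, which we assume exist.

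Define the product map $T=(T_1,T_2):\Omega_1\times\Omega_2\to\Omega_1'\times\Omega_2'$, $T(z_1,z_2)=(T_1(z_1),T_2(z_2))$. This map is $\mathcal G_1\otimes\mathcal G_2/\mathcal G_1'\otimes\mathcal G_2'$-measurable, because the preimage of a rectangle $B_1\times B_2$ is $T_1^{-1}(B_1)\times T_2^{-1}(B_2)$, and rectangles generate the product sigma-field. The proof then has two steps.

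\emph{Step 1 (identify the pushforwards).} First, the conditional law of $(T_1(Z_1),T_2(Z_2))$ is the pushforward $P\circ T^{-1}$ by definition. Second, $(Q_1\otimes Q_2)\circ T^{-1}=(Q_1\circ T_1^{-1})\otimes(Q_2\circ T_2^{-1})$. To see this, note that on rectangles both sides equal $Q_1\{T_1^{-1}(B_1)\}Q_2\{T_2^{-1}(B_2)\}$ by \eqref{eq:product_prob} and \eqref{eq:pushforward}. Rectangles form a $\pi$-system generating the product sigma-field, so the uniqueness of the product measure (Dynkin's $\pi$–$\lambda$ theorem) gives equality on all of $\mathcal G_1'\otimes\mathcal G_2'$.

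\emph{Step 2 (contraction).} For any two measures $P,Q$ and any measurable $T$,
\[
d_{\mathrm{TV}}(P\circ T^{-1},Q\circ T^{-1})=\sup_{B}|P\{T^{-1}(B)\}-Q\{T^{-1}(B)\}|\le \sup_{A}|P(A)-Q(A)|=d_{\mathrm{TV}}(P,Q).
\]
The inequality holds because each $T^{-1}(B)$ is itself a measurable set in the domain, so the left supremum runs over a subcollection of the sets in the right supremum. Applying this with $P$ the conditional law of $(Z_1,Z_2)$, $Q=Q_1\otimes Q_2$, and $T$ as above, and then substituting Step 1, yields the claim.

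The only delicate point is the conditioning. I would handle it by working with regular conditional distributions given $\mathcal G$, which are assumed to exist. Because $T_j$ and $Q_j$ are fixed given $\mathcal G$, the pushforward of the conditional law equals the conditional law of the transformed vector almost surely. The deterministic argument then applies for each fixed $\omega$. No further obstacle is expected.
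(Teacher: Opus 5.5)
Your proof is correct and follows the same route as the paper: both pull sets back through the product map $T=(T_1,T_2)$ and note that the supremum over sets of the form $T^{-1}(B)$ is at most the supremum over all measurable sets. Your $\pi$--$\lambda$ verification that $(Q_1\otimes Q_2)\circ T^{-1}=(Q_1\circ T_1^{-1})\otimes(Q_2\circ T_2^{-1})$, together with your check that $T$ is measurable, supplies a step the paper uses only implicitly; the opening reference to couplings is never actually used and can be dropped.
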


\begin{proof}
Conditional on $\mathcal G$, define
$T(z_1,z_2)=\{T_1(z_1),T_2(z_2)\}$. For any measurable set $B$ in the range of
$T$,
\[
\begin{aligned}
&\sup\limits_{B}\left|
\mathbb P\{T(Z_1,Z_2)\in B\mid\mathcal G\}
-(Q_1\otimes Q_2)\{T^{-1}(B)\}
\right|
\\=& \sup\limits_{B}|\mathbb P\{(Z_1,Z_2)\in T^{-1}(B)\mid\mathcal G\}
-(Q_1\otimes Q_2)\{T^{-1}(B)\}|
\\
\le&
d_{\mathrm{TV}}
\left[
\mathcal L(Z_1,Z_2\mid\mathcal G),Q_1\otimes Q_2
\right].
\end{aligned}
\]

Therefore, we obtain
\[
\begin{aligned}
&d_{\mathrm{TV}}
\left[
\mathcal L\{T_1(Z_1),T_2(Z_2)\mid\mathcal G\},
(Q_1\circ T_1^{-1})\otimes(Q_2\circ T_2^{-1})
\right]
\le
d_{\mathrm{TV}}
\left[
\mathcal L(Z_1,Z_2\mid\mathcal G),
Q_1\otimes Q_2
\right],
\end{aligned}
\]
\end{proof}
\begin{lemma}
    Let $A,B$ be matrices, then $\|A^\top\|_s=\|A\|_s$,
    $\|AB\|_s\le \|A\|_s\|B\|_s$, and $\|A+B\|_s\le \|A\|_s+\|B\|_s$.
    Furthermore, we have $\|\mathrm{blockdiag}(A,B)\|_s=\max(\|A\|_s,\|B\|_s)$
\end{lemma}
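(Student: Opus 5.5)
The four facts are the standard properties of the spectral norm, and I will derive all of them from the variational characterization $\|A\|_s=\sup_{\|x\|_2=1}\|Ax\|_2=\sup_{\|x\|_2=\|y\|_2=1}y^\top Ax$. This follows from the definition $\|A\|_s^2=\lambda_{\max}(A^\top A)$ via the Rayleigh quotient, since $\|Ax\|_2^2=x^\top A^\top Ax$. The second form then comes from Cauchy--Schwarz, with equality attained at $y=Ax/\|Ax\|_2$.

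\textbf{Step 1 (transpose).} The bilinear form $y^\top Ax$ equals $x^\top A^\top y$. Taking the supremum over unit $x,y$ on both sides gives $\|A\|_s=\|A^\top\|_s$ directly.

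\textbf{Step 2 (submultiplicativity).} For any unit $x$, $\|ABx\|_2\le\|A\|_s\|Bx\|_2\le\|A\|_s\|B\|_s$. Taking the supremum over $x$ gives $\|AB\|_s\le\|A\|_s\|B\|_s$.

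\textbf{Step 3 (triangle inequality).} For any unit $x$, $\|(A+B)x\|_2\le\|Ax\|_2+\|Bx\|_2\le\|A\|_s+\|B\|_s$. Taking the supremum over $x$ gives the claim.

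\textbf{Step 4 (block-diagonal).} Write $x=(x_1^\top,x_2^\top)^\top$ conformally with the blocks. Then
\[
\|\mathrm{blockdiag}(A,B)x\|_2^2=\|Ax_1\|_2^2+\|Bx_2\|_2^2\le \|A\|_s^2\|x_1\|_2^2+\|B\|_s^2\|x_2\|_2^2\le \max(\|A\|_s,\|B\|_s)^2\|x\|_2^2,
\]
which gives the upper bound. For the reverse inequality, take $x=(x_1^\top,0^\top)^\top$ with $x_1$ a top right singular vector of $A$; this attains $\|A\|_s$, and the analogous choice for $B$ attains $\|B\|_s$.

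Alternatively, $\mathrm{blockdiag}(A,B)^\top\mathrm{blockdiag}(A,B)=\mathrm{blockdiag}(A^\top A,B^\top B)$, whose spectrum is the union of the two spectra. This identity is also what justifies the $\max_i$ reduction used in Lemma~\ref{lem:relative_covariance_lipschitz}.

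No step is a real obstacle. The only point needing care is to first establish the equivalence between the eigenvalue definition and the variational form, which rests on the Rayleigh quotient and the positive semidefiniteness of $A^\top A$.
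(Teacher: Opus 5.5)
Your proof is correct and follows essentially the same route as the paper: the operator-norm form $\sup_{\|x\|_2=1}\|Ax\|_2$ for submultiplicativity and the triangle inequality, and the split $\|\mathrm{blockdiag}(A,B)x\|_2^2=\|Ax_1\|_2^2+\|Bx_2\|_2^2$ with $x_2=0$ or $x_1=0$ for the reverse bound. The only differences are minor: for the transpose you use the symmetry $y^\top Ax=x^\top A^\top y$ of the bilinear form rather than the paper's SVD fact $\lambda_{\max}(A^\top A)=\lambda_{\max}(AA^\top)$, and you fill in steps the paper leaves implicit, namely the Rayleigh-quotient justification of the variational form and the bound $\|ABx\|_2\le\|A\|_s\|Bx\|_2$, which avoids the paper's division by $\|Bx\|_2$.
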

\begin{proof}
By the definition of the spectral norm, the singular value decomposition of $A$ gives
\[
\|A\|_s=\sqrt{\lambda_{\max}(A^\top A)}=\sqrt{\lambda_{\max}(A A^\top)}
=
\|A^\top \|_s.
\]
For conformable matrices $A$ and $B$,
\[
\|AB\|_s
=
\sup_{\|x\|_2=1}\|ABx\|_2
\le \sup\limits_{\|x\|_2=1}\{\|A\frac{Bx}{\|Bx\|_2}\|_2\|Bx\|_2\}\le \sup\limits_{\|x\|_2=1}\sup\limits_{\|y\|_2=1}\|Ay\|_2\|Bx\|_2
=
\|A\|_s\|B\|_s.
\]
Similarly, by the triangle inequality,
\[
\|A+B\|_s
=
\sup_{\|x\|_2=1}\|(A+B)x\|_2
\le \sup_{\|x\|_2=1}\|Ax\|_2+\sup_{\|x\|_2=1}\|Bx\|_2=
\|A\|_s+\|B\|_s.
\]
Finally, let
\[
C=\operatorname{blockdiag}(A,B).
\]
For any conformable vector $x=(x_1^\top,x_2^\top)^\top$,
\[
\|Cx\|_2^2
=
\|Ax_1\|_2^2+\|Bx_2\|_2^2
\le
\max\{\|A\|_s^2,\|B\|_s^2\}
\left(\|x_1\|_2^2+\|x_2\|_2^2\right).
\]
Hence,
\[
\|C\|_s
\le
\max\{\|A\|_s,\|B\|_s\}.
\]
The reverse inequality follows by taking either $x_2=0$ or $x_1=0$.
Therefore,
\[
\|\operatorname{blockdiag}(A,B)\|_s
=
\max\{\|A\|_s,\|B\|_s\}.
\]
\end{proof}

\subsection{Proof of Theorem \ref{thm:beta_est}}
%We first give proof of Theorem 
\begin{proof}
	Let
	\[
	C_i=C_M(X_i)
	=
	\begin{pmatrix}
		1 & X_{i,M_I}^\top & 0 & 0^\top\\
		0 & 0^\top & 1 & X_{i,M_S}^\top
	\end{pmatrix},
	\]
	and write
	\[
	m_i=m(X_i)
	=
	\begin{pmatrix}
		m_1(X_i)\\
		m_2(X_i)
	\end{pmatrix}.
	\]
	Recall that
	\[
	A_i=Z_i^\top\Sigma_i^{-1}Z_i.
	\]
	By the definition of the population projection,
	\[
	\beta_M^P
	=
	\left\{
	E(C_i^\top A_i C_i)
	\right\}^{-1}
	E(C_i^\top A_i m_i).
	\]
	Since $C_i$ and $m_i$ are functions of $X_i$, and
	\[
	E(A_i\mid X_i)
	=
	A_0
	=
	\begin{pmatrix}
		a&b\\
		b&c
	\end{pmatrix},
	\]
    therefore,
	\[
	\beta_M^P
	=
	\left\{
	E(C_i^\top A_0 C_i)
	\right\}^{-1}
	E(C_i^\top A_0m_i).
	\]
	
	We first calculate the matrix in the left-hand side. Direct multiplication gives
	\[
	C_i^\top A_0 C_i
	=
	\begin{pmatrix}
		a
		&
		aX_{i,M_I}^\top
		&
		b
		&
		bX_{i,M_S}^\top
		\\
		aX_{i,M_I}
		&
		aX_{i,M_I}X_{i,M_I}^\top
		&
		bX_{i,M_I}
		&
		bX_{i,M_I}X_{i,M_S}^\top
		\\
		b
		&
		bX_{i,M_I}^\top
		&
		c
		&
		cX_{i,M_S}^\top
		\\
		bX_{i,M_S}
		&
		bX_{i,M_S}X_{i,M_I}^\top
		&
		cX_{i,M_S}
		&
		cX_{i,M_S}X_{i,M_S}^\top
	\end{pmatrix}.
	\]
	Because $E(X_i)=0$,
	\[
	E(C_i^\top A_0 C_i)
	=
	\begin{pmatrix}
		a & 0^\top & b & 0^\top\\
		0 & a\Gamma_{II} & 0 & b\Gamma_{IS}\\
		b & 0^\top & c & 0^\top\\
		0 & b\Gamma_{SI} & 0 & c\Gamma_{SS}
	\end{pmatrix}.
	\]
	
	Next,
	\[
	A_0m_i
	=
	\begin{pmatrix}
		a m_1(X_i)+b m_2(X_i)\\
		b m_1(X_i)+c m_2(X_i)
	\end{pmatrix},
	\]
	and hence
	\[
	C_i^\top A_0m_i
	=
	\begin{pmatrix}
		a m_1(X_i)+b m_2(X_i)\\
		X_{i,M_I}
		\{a m_1(X_i)+b m_2(X_i)\}\\
		b m_1(X_i)+c m_2(X_i)\\
		X_{i,M_S}
		\{b m_1(X_i)+c m_2(X_i)\}
	\end{pmatrix}.
	\]
	Therefore,
	\[
	E(C_i^\top A_0m_i)
	=
	\begin{pmatrix}
		aE\{m_1(X)\}+bE\{m_2(X)\}\\
		a\Cov\{X_{M_I},m_1(X)\}
		+b\Cov\{X_{M_I},m_2(X)\}\\
		bE\{m_1(X)\}+cE\{m_2(X)\}\\
		b\Cov\{X_{M_S},m_1(X)\}
		+c\Cov\{X_{M_S},m_2(X)\}
	\end{pmatrix},
	\]
	where we have again used $E(X_i)=0$.
	
	Writing
	\[
	\beta_M^P
	=
	\begin{pmatrix}
		\alpha_I^P\\
		\beta_I^P\\
		\alpha_S^P\\
		\beta_S^P
	\end{pmatrix},
	\]
	the equation
	\[
	E(C_i^\top A_0C_i)\beta_M^P
	=
	E(C_i^\top A_0m_i)
	\]
	implies, for the two intercept coordinates,
	\[
	\begin{pmatrix}
		a&b\\
		b&c
	\end{pmatrix}
	\begin{pmatrix}
		\alpha_I^P\\
		\alpha_S^P
	\end{pmatrix}
	=
	\begin{pmatrix}
		a&b\\
		b&c
	\end{pmatrix}
	\begin{pmatrix}
		E\{m_1(X)\}\\
		E\{m_2(X)\}
	\end{pmatrix}.
	\]
	Since $A_0$ is positive definite,
	\[
	\alpha_I^P=E\{m_1(X)\},
	\qquad
	\alpha_S^P=E\{m_2(X)\}.
	\]
	
	The remaining coordinates satisfy
	\[
	\begin{pmatrix}
		a\Gamma_{II} & b\Gamma_{IS}\\
		b\Gamma_{SI} & c\Gamma_{SS}
	\end{pmatrix}
	\begin{pmatrix}
		\beta_I^P\\
		\beta_S^P
	\end{pmatrix}
	=
	\begin{pmatrix}
		a\Cov\{X_{M_I},m_1(X)\}
		+b\Cov\{X_{M_I},m_2(X)\}\\
		b\Cov\{X_{M_S},m_1(X)\}
		+c\Cov\{X_{M_S},m_2(X)\}
	\end{pmatrix},
	\]
	which gives the stated expression.
	
	Finally, suppose that
	\[
	M_I=M_S=J.
	\]
	Then
	\[
	\Gamma_{II}
	=
	\Gamma_{IS}
	=
	\Gamma_{SI}
	=
	\Gamma_{SS}
	=
	\Gamma_{JJ}.
	\]
	Let
	\[
	q_1=\Cov\{X_J,m_1(X)\},
	\qquad
	q_2=\Cov\{X_J,m_2(X)\}.
	\]
	The coefficient equations reduce to
	\[
	a\Gamma_{JJ}\beta_I^P
	+
	b\Gamma_{JJ}\beta_S^P
	=
	aq_1+bq_2,
	\]
	and
	\[
	b\Gamma_{JJ}\beta_I^P
	+
	c\Gamma_{JJ}\beta_S^P
	=
	bq_1+cq_2.
	\]
    Equivalently,
    \[
    (A_0 \otimes I_{|J|}) \begin{pmatrix}
        \Gamma_{JJ}\beta_I^P-q_1\\ \Gamma_{JJ}\beta_S^P-q_2
    \end{pmatrix}=0
    \]
    where $\otimes$ is the Kronecker product, and
    \[
    A_0 \otimes I_{|J|}=\begin{pmatrix}aI_{|J|}&bI_{|J|}\\b I_{|J|}&cI_{|J|}\end{pmatrix}
    \]
    
	Since $A_0$ is invertible, $A_0 \otimes I_{|J|}$ is also invertible. Therefore,
	\[
	\Gamma_{JJ}\beta_I^P=q_1,
	\qquad
	\Gamma_{JJ}\beta_S^P=q_2.
	\]
	Therefore,
	\[
	\beta_I^P
	=
	\Gamma_{JJ}^{-1}
	\Cov\{X_J,m_1(X)\},
	\qquad
	\beta_S^P
	=
	\Gamma_{JJ}^{-1}
	\Cov\{X_J,m_2(X)\}.
	\]
	This completes the proof.
\end{proof}

\subsection{Proof of Theorem~\ref{thm:pilot_assisted_validity}}
\label{sec:proof_pilot_assisted_validity}

\begin{proof}
Recall that 
\[
A_M
=
(X_M^\top\Sigma_0^{-1}X_M)^{-1}X_M^\top\Sigma_0^{-1},
\qquad
\widehat A_M
=
(X_M^\top\widehat\Sigma^{-1}X_M)^{-1}X_M^\top\widehat\Sigma^{-1}.
\]

We partition the columns of $\widehat A_M$ according to the pilot and non-pilot
observations, and denote
\[
\widehat A_M=(\widehat A_{M,\mathcal P},\widehat A_{M,\mathcal P^c}).
\]
Specifically, we have
\begin{equation}
\label{eq:A_P}
\widehat A_{M,\mathcal P}= (X_M^\top\widehat \Sigma^{-1}X_M)^{-1}X_{M,\mathcal P}^\top\widehat \Sigma^{-1}_{\mathcal P}\text{ and }\widehat A_{M,\mathcal P^c}= (X_M^\top\widehat \Sigma^{-1}X_M)^{-1}X_{M,\mathcal P^c}^\top\widehat \Sigma^{-1}_{\mathcal P^c}.
\end{equation}
Then
\begin{equation}
\label{eq:main_decomposition}
\begin{aligned}
\widehat V_M^{-1/2}(\widehat\beta_M-\beta_{M,N}^{D})
={}&
\widehat V_M^{-1/2}\widehat A_{M,\mathcal P^c}
(\widehat Y_{\mathcal P^c}^{\mathrm{inf}}-\mu_{\mathcal P^c})
\\
&+
\widehat V_M^{-1/2}\widehat A_{M,\mathcal P}
(\widehat Y_{\mathcal P}^{\mathrm{inf}}-\mu_{\mathcal P})
\\
&+
\widehat V_M^{-1/2}(\widehat A_M-A_M)\mu.
\end{aligned}
\end{equation}

Now we define the standardized post-selection statistic and the three terms in the right-hand side of \eqref{eq:main_decomposition} as follows:
\begin{align}
    T_{M,N}&=\widehat V_M^{-1/2}(\widehat\beta_M-\beta_{M,N}^{D})\nonumber
    \\
    T_{M,N}^{(1)}&=\widehat V_M^{-1/2}\widehat A_{M,\mathcal P^c}
(\widehat Y_{\mathcal P^c}^{\mathrm{inf}}-\mu_{\mathcal P^c})\nonumber
\\
    T_{M,N}^{(2)}&=\widehat V_M^{-1/2}\widehat A_{M,\mathcal P}
(\widehat Y_{\mathcal P}^{\mathrm{inf}}-\mu_{\mathcal P})\label{eq:T_2}
\\
    T_{M,N}^{(3)}&=\widehat V_M^{-1/2}(\widehat A_M-A_M)\mu. \label{eq:T_3}
\end{align}

To establish Theorem~\ref{thm:pilot_assisted_validity}, it is sufficient to show the following
asymptotic factorization:
\[
d_{\mathrm{TV}}\left[
\mathcal L\left(
T_{M,N},\widehat Y^{\mathrm{sel}}
\mid\mathcal D_N
\right),
N_d(0,I_d)\otimes
\mathcal L\left(
\widehat Y^{\mathrm{sel}}
\mid\mathcal D_N
\right)
\right]
=o_p(1).
\]
Indeed, this implies the desired conditional approximation after
restricting the selection coordinate to $\mathcal A_M$ and using
Assumption~5.

\paragraph{Step 1: Conditional on the pilot events}
Recall that
\(
\mathcal G_{\mathcal P}
\) is the sigma field generated by \(\mathcal D_N,\mathcal P,Y_{\mathcal P},\widehat W_{\mathcal P}\).
Now, the quantities
$\widehat\theta$, $\widehat\Sigma$, $\widehat A_M$, and
$\widehat V_M$ are fixed. Hence,
$T_{M,N}^{(2)}$ and $T_{M,N}^{(3)}$ are also fixed.

The only randomness $Y_{\mathcal P^c}$ and $\widehat W$ are conditionally independent, 
\[
Y_{\mathcal P^c}\mid\mathcal G_{\mathcal P}
\sim
N(\mu_{\mathcal P^c},\Sigma_{0,\mathcal P^c}),
\qquad
\widehat W_{\mathcal P^c}\mid\mathcal G_{\mathcal P}
\sim
N(0,\widehat\Sigma_{\mathcal P^c}).
\]

As $\widehat W_{\mathcal P^c}\mid\mathcal G_{\mathcal P}\sim N(0,\widehat \Sigma_{\mathcal P^c})$, we have 
$$
\begin{pmatrix}
\hat Y^{sel}_{\mathcal P^c}
\\\hat Y^{inf}_{\mathcal P^c}
\end{pmatrix} \bigm| \mathcal G_p\sim N(\begin{pmatrix}\mu_{\mathcal P^c}\\\mu_{\mathcal P^c}\end{pmatrix}\begin{pmatrix}\Sigma_{\mathcal P^c}^{+}&\Sigma_{\mathcal P^c}^{-}\\\Sigma_{\mathcal P^c}^{-}&\Sigma_{\mathcal P^c}^{+}\end{pmatrix}),
$$
where
\[
\Sigma_{\mathcal P^c}^{+}=\Sigma_{0,\mathcal P^c}+\widehat{\Sigma}_{\mathcal P^c},\text{ and }
\Sigma_{\mathcal P^c}^{-}=\Sigma_{0,\mathcal P^c}-\widehat\Sigma_{\mathcal P^c}.
\]

Hence,
\[
\begin{pmatrix}
    \widehat V_M^{-1/2}
\widehat A_{M,\mathcal P^c}
(\widehat Y_{\mathcal P^c}^{\mathrm{inf}}-\mu_{\mathcal P^c})
    \\(\Sigma_{\mathcal P^c}^+)^{-1/2}(\hat Y^{sel}_{\mathcal P^c}-\mu_{\mathcal P^c})
\end{pmatrix}\sim N(0,\Gamma_K),
\]
where 
\[
\Gamma_K=\begin{pmatrix}\widehat V_M^{-1/2}\widehat A_{M,\mathcal P^c}\Sigma^+_{\mathcal P^c}\widehat A_{M,\mathcal P^c}^\top \widehat V_M^{-1/2}&&\widehat V_M^{-1/2}
\widehat A_{M,\mathcal P^c}\Sigma_{\mathcal P^c}^{-}(\Sigma_{\mathcal P^c}^{+})^{-1/2}\\&&\\(\Sigma_{\mathcal P^c}^{+})^{-1/2}\Sigma_{\mathcal P^c}^{-}\widehat A_{M,\mathcal P^c}^\top \widehat V_M^{-1/2}
&&I_{n_{\mathcal{P}^c}}\end{pmatrix},
\]
Therefore, we have
\begin{equation}
\label{eq:Y_s_Y_i}
\begin{pmatrix}
    T_{M,N}
    \\(\Sigma_{\mathcal P^c}^+)^{-1/2}(\hat Y^{sel}_{\mathcal P^c}-\mu_{\mathcal P^c})
\end{pmatrix}\mid\mathcal G_{\mathcal P}\sim N(\mu_K,\Gamma_K),
\end{equation}
where \[\mu_K=\begin{pmatrix}
    T_{M,N}^{(2)}+T_{M,N}^{(3)}\\0
\end{pmatrix}.\]
Let 
\begin{align}
\Gamma_{K,11}&=\widehat V_M^{-1/2}\widehat A_{M,\mathcal P^c}\Sigma^+_{\mathcal P^c}\widehat A_{M,\mathcal P^c}^\top \widehat V_M^{-1/2} \in \mathbb R^{d_M\times d_M},\nonumber
\\\Gamma_{K,12}&=\widehat V_M^{-1/2}
\widehat A_{M,\mathcal P^c}\Sigma_{\mathcal P^c}^{-}(\Sigma_{\mathcal P^c}^{+})^{-1/2}\in \mathbb R^{d_M\times n_{\mathcal P^c}},\qquad \text{and }\nonumber
\\\Gamma_{K,21}&=(\Sigma_{\mathcal P^c}^{+})^{-1/2}\Sigma_{\mathcal P^c}^{-}\widehat A_{M,\mathcal P^c}^\top \widehat V_M^{-1/2}\in \mathbb R^{ n_{\mathcal P^c}\times d_M},
\label{eq:Gamma_K}
\end{align}
where $d_M=2+|M_I|+|M_s|$, defined in \eqref{eq:d_M}, is the dimension of $\beta_{M,N}^D$.

The Pinsker's inequality (Lemma \ref{lemma:Pinsker_inequality}), KL  divergence between multivariate normal distribution (Lemma \ref{lemma:KL_mult}) and Schur's formula (Lemma \ref{lemma:schur}) show that the total variation between  $N(\mu_K,\Gamma_{K})$ and $N(0,I_{d_M+n_{\mathcal P^c}})$ is
\begin{align}
&\Delta_N(\mathcal D_N,\mathcal P,Y_{\mathcal P},\widehat W_{\mathcal P})\\\nonumber:=&d_{TV}\{N(\mu_K,\Gamma_K), N(0,I_{d_M+n_{\mathcal P^c}})\} \notag
\\\le &\sqrt{\frac{1}{2}d_{KL}\{N(\mu_K,\Gamma_K),N(0,I_{d_M+n_{\mathcal P^c}})\}} \notag
\\=&
\frac12\sqrt{\left\{
\operatorname{tr}(\Gamma_K)
-(d_M+n_{\mathcal P^c})
-\log\det(\Gamma_K)+\mu_K^\top \mu_K
\right\}} \notag\\
=&\frac12\sqrt{-\log\det\{(\Gamma_{K,11}-\Gamma_{K,12}\Gamma_{K,12}^\top\}+\mathrm{tr}(\Gamma_{K,11}-I_{d_M})+(\|T_{M,N}^{(2)}\|_2+\|T_{M,N}^{(3)}\|_2)^2}\nonumber
\\
\le &
\frac12\sqrt{-\log\det\{(\Gamma_{K,11}-I_{d_M})-\Gamma_{K,12}\Gamma_{K,12}^\top+I_{d_M}\}+\mathrm{tr}(\Gamma_{K,11}-I_{d_M})+2\|T_{M,N}^{(2)}\|_2^2+2\|T_{M,N}^{(3)}\|_2^2}.
\label{eq:KL_control}
\end{align}
For the last inequality, we use
\[
\mu_K^\top\mu_K=\|T_{M,N}^{(2)}+T_{M,N}^{(3)}\|_2^2 \le 2\|T_{M,N}^{(2)}\|_2^2+2\|T_{M,N}^{(3)}\|_2^2.
\]

\paragraph{Step 2: Stochastic bounds for terms in \eqref{eq:KL_control}.}
We now establish stochastic bounds for the $\Gamma_K$,$T_{M,N}^{(2)}$, and $T_{M,N}^{(3)}$ appearing in \eqref{eq:KL_control}. Unlike in Step~1, we no
longer condition on $\mathcal G_{\mathcal P}$ when deriving these
bounds. Thus, $\Gamma_K$,$T_{M,N}^{(2)}$, and $T_{M,N}^{(3)}$ , which were fixed under the
conditional Gaussian representation given $\mathcal G_{\mathcal P}$,
are now regarded as random.

We will control
$\Gamma_{K,11}-I_{d_M}$, $\Gamma_{K,12}$, $T_{M,N}^{(2)}$, and
$T_{M,N}^{(3)}$ respectively. These bounds imply that the corresponding
KL divergence, and hence the total variation distance, converges to
zero in probability.

\paragraph{Step 2.1: Variance terms {\unboldmath$\Gamma_{K}$}}

Recall that the estimated covariance matrix \eqref{eq:hat_VM} can be separated by
\[
\begin{aligned}
\widehat V_M&=2\widehat A_M\widehat \Sigma\widehat A_M^\top\\&=2\begin{pmatrix}\widehat A_{M,\mathcal P},\widehat A_{M,\mathcal P^c}\end{pmatrix}\begin{pmatrix}\widehat \Sigma_{\mathcal P}&0\\0&\widehat\Sigma_{\mathcal P^c}\end{pmatrix}\begin{pmatrix}\widehat A_{M,\mathcal P}^\top\\\widehat A_{M,\mathcal P^c}^\top\end{pmatrix}\\&=2\widehat A_{M,\mathcal P}\widehat \Sigma_{\mathcal P}\widehat A_{M,\mathcal P}^\top+2\widehat A_{M,\mathcal P^c}\widehat \Sigma_{\mathcal P^c}\widehat A_{M,\mathcal P^c}^\top.
\end{aligned}
\]

The difference between $\Gamma_{K,11}$ and the identity matrix in terms of the spectral norm is

\begin{align}
    \|\Gamma_{K,11}-I\|_s&=\|\widehat V_M^{-1/2}\{\widehat A_{M,\mathcal P^c}
(\Sigma_{0,\mathcal P^c}+\widehat\Sigma_{\mathcal P^c})
\widehat A_{M,\mathcal P^c}^\top
-
\widehat V_M\}\widehat V_M^{-1/2}\|_s \nonumber
\\&=\|\widehat V_M^{-1/2}\{\widehat A_{M,\mathcal P^c}
(\Sigma_{0,\mathcal P^c}-\widehat\Sigma_{\mathcal P^c})
\widehat A_{M,\mathcal P^c}^\top
-
2\widehat A_{M,\mathcal P}\widehat\Sigma_{\mathcal P}
\widehat A_{M,\mathcal P}^\top\}\widehat V_M^{-1/2}\|_s \nonumber
\\&\le\|\widehat V_M^{-1/2}\{\widehat A_{M,\mathcal P^c}
(\Sigma_{0,\mathcal P^c}-\widehat\Sigma_{\mathcal P^c})
\widehat A_{M,\mathcal P^c}^\top\}\widehat V_M^{-1/2}\|_s
+\|\widehat V_M^{-1/2}\{2\widehat A_{M,\mathcal P}\widehat\Sigma_{\mathcal P}
\widehat A_{M,\mathcal P}^\top\}\widehat V_M^{-1/2}\|_s. 
\label{eq:Gamma11}
\end{align}

For the first term of \eqref{eq:Gamma11}, we have

\begin{align}
    &\|\widehat V_M^{-1/2}\{\widehat A_{M,\mathcal P^c}
(\Sigma_{0,\mathcal P^c}-\widehat\Sigma_{\mathcal P^c})
\widehat A_{M,\mathcal P^c}^\top\}\widehat V_M^{-1/2}\|_s \nonumber
\\\le &\|\widehat V_M^{-1/2}\widehat A_{M,\mathcal P^c}
\widehat \Sigma_{\mathcal P^c}^{1/2}\|_s
\left\|
\widehat\Sigma_{\mathcal P^c}^{-1/2}
(\Sigma_{0,\mathcal P^c}-\widehat\Sigma_{\mathcal P^c})
\widehat\Sigma_{\mathcal P^c}^{-1/2}
\right\|_s\|\widehat \Sigma_{\mathcal P^c}^{1/2}\widehat A_{M,\mathcal P^c} \widehat V_M^{-1/2}\|_s \nonumber
\\= &\|\widehat V_M^{-1/2}\widehat A_{M,\mathcal P^c}
\widehat \Sigma_{\mathcal P^c}^{1/2}\|_s^2
\left\|
\widehat\Sigma_{\mathcal P^c}^{-1/2}
(\Sigma_{0,\mathcal P^c}-\widehat\Sigma_{\mathcal P^c})
\widehat\Sigma_{\mathcal P^c}^{-1/2}
\right\|_s. \label{eq:proof_2_1_0}
\end{align}

Note that 
\[
\widehat A_{M,\mathcal P^c}\widehat\Sigma_{\mathcal P^c}\widehat A_{M,\mathcal P^c}^\top\preceq \widehat A_M\widehat \Sigma\widehat A_M^\top=\frac{1}{2}\widehat V_M.
\]
As $\|A\|_s^2\le \|AA^T\|_s$. Therefore, we have \begin{equation}
\label{eq:proof_2_11}
\|\widehat V_M^{-1/2}\widehat A_{M,\mathcal P^c}
\widehat \Sigma_{\mathcal P^c}^{1/2}\|_s^2=\|\widehat V_M^{-1/2}\widehat A_{M,\mathcal P^c}
\widehat \Sigma_{\mathcal P^c}\widehat A_{M,\mathcal P^c}^\top\widehat V_M^{-1/2}\|_s\le \frac{1}{2}.
\end{equation}

Let $\widehat E_{\mathcal P^c}$ be the non-pilot principal block of $\widehat E$, defined in \eqref{eq:hat_E_dec}. 
Therefore, we have
\[
\Sigma_{0,\mathcal P^c}-\widehat \Sigma_{\mathcal P^c}=-\Sigma_{0,\mathcal P^c}^{1/2}\widehat E_{\mathcal P^c}\Sigma_{0,\mathcal P^c}^{1/2}\text{ and } \widehat \Sigma_{\mathcal P^c}=\Sigma_{0,\mathcal P^c}^{1/2}(I+\widehat E_{\mathcal P^c})^{}\Sigma_{0,\mathcal P^c}^{1/2}.
\]
As $\widehat E=\mathrm{blockdiag}(\hat E_{\mathcal P},\widehat E_{\mathcal P^c})$, we have
$$
\|\widehat E_{\mathcal P^c}\|_s\le \max\{\|\widehat E_{\mathcal P}\|_s,\|\widehat E_{\mathcal P^c}\|_s\}
=\|\widehat E\|_s.
$$
Recall that by Lemma \ref{lem:relative_covariance_lipschitz}, we have $\|\hat E\|_s=O_p(N_s^{-1/2})=o_p(1)$.
When $\|\hat E\|_s\le 1/2$, with probability tending to $1$,
\begin{align}
&\left\|
\widehat\Sigma_{\mathcal P^c}^{-1/2}
(\Sigma_{0,\mathcal P^c}-\widehat\Sigma_{\mathcal P^c})
\widehat\Sigma_{\mathcal P^c}^{-1/2}
\right\|_s \nonumber
\\\le&\|
\widehat\Sigma_{\mathcal P^c}^{-1/2}\Sigma_{0,\mathcal P^c}^{1/2}\|_s\|\Sigma_{0,\mathcal P^c}^{-1/2}
(\Sigma_{0,\mathcal P^c}-\widehat\Sigma_{\mathcal P^c})\Sigma_{0,\mathcal P^c}^{-1/2}\|_s\|\Sigma_{0,\mathcal P^c}^{1/2}
\widehat\Sigma_{\mathcal P^c}^{-1/2}
\|_s \nonumber
\\=&\|\widehat E_{\mathcal P^c}\|_s\|(I+\widehat E_{\mathcal P^c})^{-1}\|_s \nonumber
\\\le &
\frac{\|\widehat E_{\mathcal P^c}\|_s}{1-\|\widehat E_{\mathcal P^c}\|_s}\nonumber
\\\le& \frac{\|\widehat E\|_s}{1-\|\widehat E\|_s} \nonumber
\\\le& 2\|\widehat E\|_s.
\label{eq:proof_2_12}
\end{align}

Combine \eqref{eq:proof_2_1_0}, \eqref{eq:proof_2_11} and \eqref{eq:proof_2_12} we have
\begin{equation}
\label{eq:proof1.1}
\|\widehat V_M^{-1/2}\{\widehat A_{M,\mathcal P^c}
(\Sigma_{0,\mathcal P^c}-\widehat\Sigma_{\mathcal P^c})
\widehat A_{M,\mathcal P^c}^\top\}\widehat V_M^{-1/2}\|_s\le \|\widehat E\|_s=O_p(N_s^{-1/2}).
\end{equation}

According to
Lemma~\ref{lem:empirical_information_order},
we have $\|\widehat V_M\|_s=O_p(N^{-1})$, and $\|\widehat H_{M,\mathcal P}^{}\|_s=O_p(N_s)$. Recall that, by \eqref{eq:A_P}, \[\widehat A_{M.\mathcal P}=(X_{M}^\top\widehat \Sigma^{-1}X_M)^{-1}X_{M,\mathcal P}^\top\widehat\Sigma^{-1}_{\mathcal P}=\frac{1}{2}\widehat V_MX_{M,\mathcal P}^\top\widehat\Sigma^{-1}_{\mathcal P}.\] So, for the second term of \eqref{eq:Gamma11}, 
\begin{align}
\|\widehat V_M^{-1/2}
\left(2\widehat A_{M,\mathcal P}\widehat\Sigma_{\mathcal P}
\widehat A_{M,\mathcal P}^\top\right)
\widehat V_M^{-1/2}\|_s
&=\|\frac{1}{2}\widehat V_M^{-1/2}\widehat V_MX_{M,\mathcal P}^\top\widehat\Sigma^{-1}_{\mathcal P}\widehat\Sigma^{}_{\mathcal P}\widehat\Sigma^{-1}_{\mathcal P}X_{M,\mathcal P}\widehat V_M\widehat V_M^{-1/2}\|\\&
=\frac{1}{2}\|\widehat V_M^{1/2}\widehat H_{M,\mathcal P}\widehat V_M^{1/2}\|_s \nonumber
\\&\le \|\widehat V_M^{1/2}\|_s^2\|\widehat H_{M,\mathcal P}^{}\|_s \nonumber
\\&\le \|\widehat V_M^{}\|_s\|\widehat H^{}_{M,\mathcal P}\|_s \nonumber
\\&=O_p(N_s/N).
\label{eq:proof1.2}
\end{align}

Therefore, by \eqref{eq:Gamma11},  \eqref{eq:proof1.1} and \eqref{eq:proof1.2}, the spectrum norm of $\Gamma_{K,11}-I$ is bounded by
\[
\|\Gamma_{K,11}-I\|_s=O_p(N_s/N+N^{-1/2}_s)=o_p(1).
\]

For the cross product term $\Gamma_{K,12}$ in \eqref{eq:Gamma_K}, when $\|\widehat E\|_s\le 1/2$, with probability tending to 1, we have
\[
\begin{aligned}
\|\Gamma_{K,12}\|_s&=\|\widehat V_M^{-1/2}
\widehat A_{M,\mathcal P^c}(\widehat \Sigma_{\mathcal P^c})^{1/2}(\widehat \Sigma_{\mathcal P^c})^{-1/2}
\Sigma_{\mathcal P^c}^{-}
(\Sigma_{\mathcal P^c}^{+})^{-1/2}\|_s
\\&\le \|\widehat V_M^{-1/2}
\widehat A_{M,\mathcal P^c}(\widehat \Sigma_{\mathcal P^c})^{1/2}\|_s\| (\widehat \Sigma_{\mathcal P^c})^{-1/2}
\Sigma_{\mathcal P^c}^{-}
(\Sigma_{\mathcal P^c}^{+})^{-1/2}\|_s
\\&\le  \frac{1}{\sqrt{2}}\| (\widehat \Sigma_{\mathcal P^c})^{-1/2}
\Sigma_{\mathcal P^c}^{-}
(\Sigma_{\mathcal P^c}^{+})^{-1/2}\|_s \qquad \text{(by \eqref{eq:proof_2_11})}
\\&\le  \frac{1}{\sqrt{2}}\| (\widehat \Sigma_{\mathcal P^c})^{-1/2}\Sigma_{0,\mathcal P^c}^{1/2}\|_s\|\Sigma_{0,\mathcal P^c}^{-1/2}
\Sigma_{\mathcal P^c}^{-}\Sigma_{0,\mathcal P^c}^{-1/2}\|_s\|\Sigma_{0,\mathcal P^c}^{1/2}
(\Sigma_{\mathcal P^c}^{+})^{-1/2}\|_s
\\&=\frac{1}{\sqrt{2}}\| \Sigma_{0,\mathcal P^c}^{1/2}\widehat \Sigma_{\mathcal P^c}^{-1}\Sigma_{0,\mathcal P^c}^{1/2}\|_s^{1/2}\|\Sigma_{0,\mathcal P^c}^{-1/2}
\Sigma_{\mathcal P^c}^{-}\Sigma_{0,\mathcal P^c}^{-1/2}\|_s\|\Sigma_{0,\mathcal P^c}^{1/2}
(\Sigma_{\mathcal P^c}^{+})^{-1}\Sigma_{0,\mathcal P^c}^{1/2}\|_s^{1/2}
\\&=\frac{1}{\sqrt{2}}\|(I+\widehat E_{\mathcal P^c})^{-1}\|_s^{1/2}\|\widehat E_{\mathcal P^c}\|_s\|(2I+\widehat E_{\mathcal P^c})^{-1}\|_s^{1/2}
\\&\le \frac{\|\widehat E_{\mathcal P^c}\|_s }{\sqrt{2(1-\|\widehat E_{\mathcal P^c}\|_s)(2-\|\widehat E_{\mathcal P^c}\|_s)}}
\\&\le \frac{\|\widehat E_{}\|_s }{\sqrt{2(1-\|\widehat E_{}\|_s)(2-\|\widehat E_{}\|_s)}}
\\&\le \|\widehat E\|_s=O_p(N_s^{-1/2})=o_p(1).
\end{aligned}
\]
The preceding bounds imply
\[
\left\|
\Gamma_{K,11}-I_d
-
\Gamma_{K,12}\Gamma_{K,12}^{\top}
\right\|_s\le\|\Gamma_{K,11}-I_d\|_s+\|\Gamma_{K,12}\|_s^2
=o_p(1).
\]
Hence, with probability tending to one,
\[
\left\|
\Gamma_{K,11}-I_d
-
\Gamma_{K,12}\Gamma_{K,12}^{\top}
\right\|_s
\le \frac12.
\]
Therefore, by Lemma \ref{lemma:KL_cal}, we have 
\begin{equation}
\label{eq:Gamma_control_tr}
    \mathrm{tr}(\Gamma_{K,11}-I)|\le d\|\Gamma_{K,11}-I\|_s=o_p(1),
\end{equation}|
and
\begin{align}
&|\log\det\{(\Gamma_{K,11}-I)-\Gamma_{K,12}\Gamma_{K,12}^\top+I\}|\nonumber \\\le& 2d\|\Gamma_{K,11}-I-\Gamma_{K,12}\Gamma_{K,12}^T\|_s\nonumber\\\le& 2d(\|\Gamma_{K,11}-I\|_s+\|\Gamma_{K,12}\|_s^2)=o_p(1).
\label{eq:Gamma_control}
\end{align}

\paragraph{Step 2.2: Bound of {\unboldmath$T^{(2)}_{M,N}$}.}

Recall from \eqref{eq:T_2} that
\[
T_{M,N}^{(2)}
=
\widehat V_M^{-1/2}
\widehat A_{M,\mathcal P}
(\widehat Y_{\mathcal P}^{\mathrm{inf}}-\mu_{\mathcal P}).
\]
Since
\[
\widehat A_{M,\mathcal P}
=
\widehat H_M^{-1}
X_{M,\mathcal P}^{\top}
\widehat\Sigma_{\mathcal P}^{-1},
\]
we can write
\begin{equation}
    \label{eq:T2}
T_{M,N}^{(2)}
=
\widehat V_M^{-1/2}\widehat H_M^{-1}
\left\{
X_{M,\mathcal P}^{\top}
\widehat\Sigma_{\mathcal P}^{-1}
(\widehat Y_{\mathcal P}^{\mathrm{inf}}-\mu_{\mathcal P})
\right\}.
\end{equation}
By Lemma~\ref{lem:empirical_information_order},
\[
\left\|
\widehat V_M^{-1/2}\widehat H_M^{-1}
\right\|_s
=
O_p(N^{-1/2}).
\]
It therefore remains to control $X_{M,\mathcal P}^{\top}
\widehat\Sigma_{\mathcal P}^{-1}
(\widehat Y_{\mathcal P}^{\mathrm{inf}}-\mu_{\mathcal P})$.
As 
\(
\widehat Y_{\mathcal P}^{\mathrm{inf}}-\mu_{\mathcal P}
=
(Y_{\mathcal P}-\mu_{\mathcal P})-\widehat W_{\mathcal P},
\)
we have 
\begin{equation}
\label{eq:proof_2_2_1}
    X_{M,\mathcal P}^{\top}
	\widehat\Sigma_{\mathcal P}^{-1}
	(\widehat Y_{\mathcal P}^{\mathrm{inf}}-\mu_{\mathcal P})=
	-
	X_{M,\mathcal P}^{\top}
	\widehat\Sigma_{\mathcal P}^{-1}
	\widehat W_{\mathcal P}
	+
	X_{M,\mathcal P}^{\top}
	\widehat\Sigma_{\mathcal P}^{-1}
	(Y_{\mathcal P}-\mu_{\mathcal P}).
\end{equation}

We control the two terms in \eqref{eq:proof_2_2_1} separately. First, conditional on
$(Y_{\mathcal P},\mathcal D_N,\mathcal P)$,
the pilot covariance estimate $\widehat\Sigma_{\mathcal P}$ is fixed,
whereas
\[
\widehat W_{\mathcal P}
\mid
Y_{\mathcal P},\mathcal D_N,\mathcal P
\sim
N(0,\widehat\Sigma_{\mathcal P}).
\]
Therefore,
\[
\begin{aligned}
	&
	\mathbb E\left[
	\left\|
	X_{M,\mathcal P}^{\top}
	\widehat\Sigma_{\mathcal P}^{-1}
	\widehat W_{\mathcal P}
	\right\|_2^2
	\middle|
	Y_{\mathcal P},\mathcal D_N,\mathcal P
	\right]
	\\
	&=
	\mathbb E\left[
	\widehat W_{\mathcal P}^{\top}
	\widehat\Sigma_{\mathcal P}^{-1}
	X_{M,\mathcal P}X_{M,\mathcal P}^{\top}
	\widehat\Sigma_{\mathcal P}^{-1}
	\widehat W_{\mathcal P}
	\middle|
	Y_{\mathcal P},\mathcal D_N,\mathcal P
	\right]
	\\
	&=
	\operatorname{tr}\left[
	\widehat\Sigma_{\mathcal P}^{-1}
	X_{M,\mathcal P}X_{M,\mathcal P}^{\top}
	\widehat\Sigma_{\mathcal P}^{-1}
	\mathbb E\left(
	\widehat W_{\mathcal P}
	\widehat W_{\mathcal P}^{\top}
	\middle|
	Y_{\mathcal P},\mathcal D_N,\mathcal P
	\right)
	\right]
	\\
	&=
	\operatorname{tr}\left(
	X_{M,\mathcal P}^{\top}
	\widehat\Sigma_{\mathcal P}^{-1}
	X_{M,\mathcal P}
	\right)
	\\
	&=
	\operatorname{tr}(\widehat H_{M,\mathcal P})
	\\
	&\le
	d_M\|\widehat H_{M,\mathcal P}\|_s.
\end{aligned}
\]
Since, by Lemma~\ref{lem:empirical_information_order}, 
$\|\widehat H_{M,\mathcal P}\|_s=O_p(N_s)$,
 the Markov inequality gives
\begin{equation}
\label{eq:proof_2_2_2}
  \left\|
X_{M,\mathcal P}^{\top}
\widehat\Sigma_{\mathcal P}^{-1}
\widehat W_{\mathcal P}
\right\|_2
=
O_p(\sqrt{N_s}).  
\end{equation}

We next consider the second term in \eqref{eq:proof_2_2_1}.
Decompose
\[
\begin{aligned}
	X_{M,\mathcal P}^{\top}
	\widehat\Sigma_{\mathcal P}^{-1}
	(Y_{\mathcal P}-\mu_{\mathcal P})=
	X_{M,\mathcal P}^{\top}
	\Sigma_{0,\mathcal P}^{-1}
	(Y_{\mathcal P}-\mu_{\mathcal P})+
	X_{M,\mathcal P}^{\top}
	(\widehat\Sigma_{\mathcal P}^{-1}
	-\Sigma_{0,\mathcal P}^{-1})
	(Y_{\mathcal P}-\mu_{\mathcal P}).
\end{aligned}
\]

We also find that 
\begin{align} &
\|X_{M,\mathcal P}^\top\widehat\Sigma_{\mathcal P}^{-1} (Y_{\mathcal P}-\mu_{\mathcal P})\|_s \nonumber
\\=&\| X_{M,\mathcal P}^\top\Sigma_{0,\mathcal P}^{-1} (Y_{\mathcal P}-\mu_{\mathcal P}) + X_{M,\mathcal P}^\top (\widehat\Sigma_{\mathcal P}^{-1}-\Sigma_{0,\mathcal P}^{-1}) (Y_{\mathcal P}-\mu_{\mathcal P})\|_s \nonumber
\\\le &\| X_{M,\mathcal P}^\top\Sigma_{0,\mathcal P}^{-1} (Y_{\mathcal P}-\mu_{\mathcal P})\|_s+\|X_{M,\mathcal P}^\top (\widehat\Sigma_{\mathcal P}^{-1}-\Sigma_{0,\mathcal P}^{-1}) (Y_{\mathcal P}-\mu_{\mathcal P})\|_s \nonumber
\\\le &\| X_{M,\mathcal P}^\top\Sigma_{0,\mathcal P}^{-1} (Y_{\mathcal P}-\mu_{\mathcal P})\|_s
+\|X^\top_{M,\mathcal P}\Sigma_{0,\mathcal P}^{-1/2}\|_s \|(I+\widehat E_{\mathcal P})^{-1}-I\|_s \|\Sigma_{0,\mathcal P}^{-1/2}(Y_{\mathcal P}-\mu_{\mathcal P})\|_2 . 
\label{eq:proof_2_2_3}
\end{align}
By Lemma \ref{lem:empirical_information_order}, we have 
\[ \begin{aligned} \|X_{M,\mathcal P}^\top\Sigma_{0,\mathcal P}^{-1/2}\|_s&=O_p(\sqrt{N_s}),
\\ \|\Sigma_{0,\mathcal P}^{-1/2}(Y_{\mathcal P}-\mu_{\mathcal P})\|_2&=O_p(\sqrt{N_s}),
\\ \|X_{M,\mathcal P}^\top \Sigma_{0,\mathcal P}^{-1} (Y_{\mathcal P}-\mu_{\mathcal P})\|_2&=O_p(\sqrt{N_s}). \end{aligned} \] 
By Lemma \ref{lem:relative_covariance_lipschitz}, \(\|\hat E_{\mathcal P}\|_s\le \|\hat E_{}\|_s=O_p(N^{-1/2}_s)\). Then, when $\|\hat E_{}\|_s\le \frac{1}{2}$, with probability tending to $1$, we have \[ \begin{aligned} &\|(I+\widehat E_{\mathcal P})^{-1}-I\|_s \\=& \|I(I+\widehat E_{\mathcal P}-I)(I+\widehat E_{\mathcal P})^{-1}\|_s \\\le &\|\hat E_{\mathcal P}\|_s\|(I+\widehat E_{\mathcal P})^{-1}\|_s\\\le &\frac{\|\hat E_{\mathcal P}\|_s}{1-\|\widehat E_{\mathcal P}^{}\|_s}\\\le&2\|\hat E_{\mathcal P}\|_s=O_p(N_s^{-1/2}) \end{aligned} \] 
Here the first equality use the fact that $A^{-1}-B^{-1}=A^{-1}(B-A)B^{-1}$, when $A,B$ are invertible matrix.
Therefore, 
\begin{equation}
    \label{eq:proof_2_2_4}
    \|X_{M,\mathcal P}^\top\widehat\Sigma_{\mathcal P}^{-1} (Y_{\mathcal P}-\mu_{\mathcal P})\|_s=O_p(N_s^{1/2})
\end{equation} 

By \eqref{eq:proof_2_2_1}, \eqref{eq:proof_2_2_2}, \eqref{eq:proof_2_2_3}, and \eqref{eq:proof_2_2_4}, we have
\begin{equation}
    \label{eq:proof_2_2_5}
\left\|X_{M,\mathcal P}^\top\widehat\Sigma_{\mathcal P}^{-1} (\widehat Y^{inf}_{\mathcal P}-\mu_{\mathcal P})\right\|_s\le \|X_{M,\mathcal P}^{\top}
	\widehat\Sigma_{\mathcal P}^{-1}
	\widehat W_{\mathcal P}\|_s
	+
	\|X_{M,\mathcal P}^{\top}
	\widehat\Sigma_{\mathcal P}^{-1}
	(Y_{\mathcal P}-\mu_{\mathcal P})\|_s=O_p(N^{1/2}_s)
\end{equation}

Finally, by \eqref{eq:T2} and \eqref{eq:proof_2_2_5}, we have

\begin{equation} \label{eq:pilot_negligible} \|T_{M,N}^{(2)}\|_2\le \left\| \widehat V_M^{-1/2}\widehat H_M^{-1} \right\|_s\left\|X_{M,\mathcal P}^\top\widehat\Sigma_{\mathcal P}^{-1} (\widehat Y^{inf}_{\mathcal P}-\mu_{\mathcal P})\right\|_s = O_p\left(\sqrt{\frac{N_s}{N}}\right) = o_p(1). \end{equation}

\paragraph{Step 2.3: Bound of {\unboldmath$T_{M,N}^{(3)}$}.}
Recall that, in \eqref{eq:T_3}, we have 
\[
T_{M,N}^{(3)}=\widehat V_M^{-1/2}(\widehat A_M-A_M)\mu,
\]
and in \eqref{eq:AM_hat}, we have
\[
\begin{aligned}
A_M&=(X_M^\top\Sigma_0^{-1}X_M)^{-1}X_M^\top\Sigma_{0}^{-1} \\
\widehat A_M&=(X_M^\top\widehat\Sigma^{-1}X_M)^{-1}X_M^\top\widehat\Sigma^{-1}. 
\end{aligned}
\]
Then we have
\[
\beta_{M,N}^D=(X_M^\top \Sigma_0^{-1}X_M)^{-1}X_M^T\Sigma_0^{-1}\mu=A_M\mu.
\]

Let
\[
r_M=\mu-X_M\beta_{M,N}^{D}=\mu-X_M(X_M^\top \Sigma_0^{-1}X_M)^{-1}X_M^T\Sigma_0^{-1}\mu.
\]
Then, we have
$X_M^\top\Sigma_0^{-1}r_M=0$, and therefore $A_Mr_M=0$. Since
$A_MX_M=\widehat A_MX_M=I_d$,

\begin{align}
(\widehat A_M-A_M)\mu 
&=(\widehat A_M-A_M)(r_M+X_M\beta_{M,N}^D) \nonumber
\\&=\widehat A_M r_M-A_Mr_M+(\widehat A_M X_M-A_M X_M)\beta_{M,N}^D \nonumber
\\&=\widehat A_M r_M \nonumber
\\&=\widehat H_M^{-1}X_M^\top
\widehat\Sigma^{-1}r_M-0 \nonumber
\\&=\widehat H_M^{-1}X_M^\top
(\widehat\Sigma^{-1}-\Sigma_0^{-1})r_M.
\label{eq:dif_A_mu}
\end{align}

By Lemma \ref{lem:empirical_information_order}, $\|\hat H^{-1}_M\|=O_p(N^{-1}),$ and we only need to bound $X_M^\top(\widehat\Sigma^{-1}-\Sigma_0^{-1})r_M$.

Let $r_M^P=\mu-X_M\beta_M^P$. Then
\[
r_M=r_M^P+X_M(\beta_M^P-\beta_{M,N}^{D}),
\]
so
\begin{equation}
\label{eq:proof_2_3_1}
X_M^\top(\widehat\Sigma^{-1}-\Sigma_0^{-1})r_M
={}
\underbrace{X_M^\top(\widehat\Sigma^{-1}-\Sigma_0^{-1})r_M^P}_{H_1}
+
\underbrace{X_M^\top(\widehat\Sigma^{-1}-\Sigma_0^{-1})X_M
(\beta_M^P-\beta_{M,N}^{D})}_{H_2}.
\end{equation}

\paragraph{Step 2.3.1: Bound of {\unboldmath $H_1$}.}
For $\theta\in\Theta_0$, define
\[
h_i(\theta)
=
X_{i,M}^\top
\{\Sigma_i(\theta)^{-1}-\Sigma_{0,i}^{-1}\}
r_{Mi}^P.
\]
Clearly, $h_i(\theta_0)=0$, and
\[
H_1=\sum_{i=1}^N h_i(\widehat\theta).
\]

We next show that
\[
\mathbb E\{h_i(\theta)\}=0,
\qquad \forall \theta\in\Theta_0.
\]
To see this, for each fixed $\theta\in\Theta_0$, consider the
covariance-weighted population projection
\[
\beta_M^P(\theta)
=
\left[
\mathbb E\left\{
X_{i,M}^\top\Sigma_i(\theta)^{-1}X_{i,M}
\right\}
\right]^{-1}
\mathbb E\left\{
X_{i,M}^\top\Sigma_i(\theta)^{-1}\mu_i
\right\}.
\]
By Assumption~3, $M_I=M_S=J$.
Therefore, Theorem~\ref{thm:beta_est}, applied for each fixed
$\theta\in\Theta_0$, gives
\[
\beta_M^P(\theta)
=
\beta_M^P(\theta_0)
=
\beta_M^P.
\]
In particular, under the common-support condition, the population
projection does not depend on the covariance parameter $\theta$.

For every fixed $\theta\in\Theta_0$, we have
\[
\mathbb E\left[
X_{i,M}^\top\Sigma_i(\theta)^{-1}
\{\mu_i-X_{i,M}\beta_M^P(\theta)\}
\right]
=0.
\]
Since $\beta_M^P(\theta)=\beta_M^P$, this becomes
\[
\mathbb E\left[
X_{i,M}^\top\Sigma_i(\theta)^{-1}r_{Mi}^P
\right]
=0.
\]
Take $\theta=\theta_0$,
\[
\mathbb E\left[
X_{i,M}^\top\Sigma_{0,i}^{-1}r_{Mi}^P
\right]
=0.
\]
Consequently,
\[
\begin{aligned}
\mathbb E\{h_i(\theta)\}
&=
\mathbb E\left[
X_{i,M}^\top\Sigma_i(\theta)^{-1}r_{Mi}^P
\right]
-
\mathbb E\left[
X_{i,M}^\top\Sigma_{0,i}^{-1}r_{Mi}^P
\right]
=0,
\qquad \theta\in\Theta_0.
\end{aligned}
\]

By Lemma \ref{lemma:h_bound}, there exists $L_0>0$ such that
\[
\begin{aligned}
\|h_i(\theta_1)-h_i(\theta_2)\|_2\le 
L_0
\|X_{i,M}^\top\Sigma_{0i}^{-1}r_{Mi}^P\|_2
\|\theta_1-\theta_2\|_2.
\end{aligned}
\] 
By Assumption~4, the Lipschitz factor
\[
Q_i
=
L_0
\|X_{i,M}^\top\Sigma_{0i}^{-1}r_{Mi}^P\|_2
\]
satisfies $\mathbb E(Q_i^2)<\infty$.
Lemma~\ref{lem:localized_bracketing_bound}, with $q=5$ and $d=d_M$, now
gives, for every deterministic $\delta_N\downarrow0$,
\[
\sup_{\|\theta-\theta_0\|_2\le\delta_N}
\left\|\sum_{i=1}^Nh_i(\theta)\right\|_2
=
O_p(\sqrt N\,\delta_N).
\]
By Assumption~2, for every $\eta>0$, there exists a sufficiently large
constant $K_0<\infty$ such that
\[
\mathbb P\left(
\|\widehat\theta-\theta_0\|_2
\le K_0 N_s^{-1/2}
\right)\ge1-\eta
\]
for all sufficiently large $N$.
Applying Lemma~\ref{lem:localized_bracketing_bound} with the deterministic sequence
$\delta_N=K_0 N_s^{-1/2}$ gives
\begin{equation}
\|H_1\|_2
=
O_p(\sqrt N\,N_s^{-1/2}).
\label{eq:H1-bound}
\end{equation}

\paragraph{Step 2.3.2: Bound of {\unboldmath $H_2$.}}
Recall that 
\begin{equation}
\label{eq:H2}
   H_2= X_M^\top(\widehat\Sigma^{-1}-\Sigma_0^{-1})X_M
(\beta_M^P-\beta_{M,N}^{D})
\end{equation}
Note that we have
\[
\begin{aligned}
    &\beta_{M,N}^D-\beta_{M}^P
    \\=&(X^T_M\Sigma^{-1}_{0}X_M)^{-1}X^T_M\Sigma_0^{-1}\mu -\beta_M^P
    \\=&(X^T_M\Sigma^{-1}_{0}X_M)^{-1}X^T_M\Sigma_0^{-1}(r_M^P+X_M\beta_M^P) -\beta_M^P
    \\=&(X^T_M\Sigma^{-1}_{0}X_M)^{-1}X^T_M\Sigma_0^{-1}r_M^P
    \\=&
    H_M^{-1}\sum_{i=1}^N X_{i,M}^\top\Sigma_{0i}^{-1}r_{Mi}^P.
\end{aligned}
\]
Since the participants are independent, with
$$
\mathbb E[X_{i,M}^\top \Sigma_{0i}^{-1}r_{Mi}^P]=0,
$$
and, by Assumption 4,
$$
\mathbb E[\|X_{i,M}^\top \Sigma_{0i}^{-1}r_{Mi}^P\|^2_2]<C_h,
$$
the Central Limit Theorem gives $\|\sum\limits_{i=1}^NX_{i,M}^\top \Sigma_{0i}^{-1}r_{Mi}^P\|_2=O_p(N^{1/2}).$ Since by Lemma \ref{lem:empirical_information_order}, $\|H_M^{-1}\|_s=O_p(N^{-1})$, we have
\begin{equation}
\label{eq:proof_2_3_2}
    \|\beta_{M,N}^{D}-\beta_M^P\|_2=
O_p(N^{-1/2}).
\end{equation}

For \(X_M^\top(\widehat\Sigma^{-1}-\Sigma_0^{-1})X_M\),  we find
\[
\widehat\Sigma^{-1}-\Sigma_0^{-1}
=
\Sigma_0^{-1/2}\{(I_n+\widehat E)^{-1}-I_n\}\Sigma_0^{-1/2},
\]
and, on $\|\widehat E\|_s<1/2$,
\[
\|(I_n+\widehat E)^{-1}-I_n\|_s\le2\|\widehat E\|_s.
\]
By Lemma \ref{lem:empirical_information_order}, 
\(\|H_M\|_s=O_p(N)\).
Therefore,
\begin{align}
&\left\|
X_M^\top(\widehat\Sigma^{-1}-\Sigma_0^{-1})X_M
\right\|_s \nonumber
\\\le&\|X_M^\top\Sigma_{0}^{-1/2}\|_s\|(I_n+\widehat E)^{-1}-I_n\|_s\|\Sigma_{0}^{-1/2}X_M\|_s\nonumber
\\\le &
2\|\widehat E\|_s\|X_M^\top\Sigma_{0}^{-1}X_M\|_s\nonumber
\\= &
2\|\widehat E\|_s\|H_M\|_s
=
O_p(NN_s^{-1/2}).
\label{eq:proof_2_3_3}
\end{align}
Hence, by \eqref{eq:proof_2_3_2} and \eqref{eq:proof_2_3_3}
\begin{equation}
\label{eq:H2_bound}
\|H_2\|_2\le \| X_M^\top(\widehat\Sigma^{-1}-\Sigma_0^{-1})X_M\|_s\|
(\beta_M^P-\beta_{M,N}^{D})\|_s
=
O_p(\sqrt N\,N_s^{-1/2}).
\end{equation}
Combining \eqref{eq:dif_A_mu}, \eqref{eq:proof_2_3_1}, \eqref{eq:H1-bound}, \eqref{eq:H2_bound}, and
$\|\widehat H_M^{-1}\|_s=O_p(N^{-1})$ gives
\[
\|(\widehat A_M-A_M)\mu\|_2\le \|\hat H_M^{-1}\|_s(\|H_1\|_s+\|H_2\|_s)
=
O_p\left(\frac1{\sqrt{NN_s}}\right)
=
o_p(N^{-1/2}).
\]
By Lemma \ref{lem:empirical_information_order} , $\|\hat V_M^{-1/2}\|=O_p(N^{1/2})$. Consequently,
\begin{equation}
\label{eq:center_negligible}
\|T_{M,N}^{(3)}\|_2=\left\|
\widehat V_M^{-1/2}(\widehat A_M-A_M)\mu
\right\|_2
=
o_p(1).
\end{equation}

\paragraph{Step 3: Conclusion.}
By \eqref{eq:KL_control},\eqref{eq:Gamma_control},\eqref{eq:pilot_negligible}, and \eqref{eq:center_negligible} we have
\[
\Delta_N(\mathcal D_N,\mathcal P,Y_{\mathcal P},\widehat W_{\mathcal P})=o_p(1).
\]

For any \(z\in\mathbb R^d\), given $\mathcal G_{\mathcal P}$, we have 
\[
\begin{aligned}
&\left|
\mathbb P\left\{
T_{M,N}\le z,
\widehat Y^{\mathrm{sel}}\in\mathcal A_M
\mid\mathcal G_{\mathcal P}
\right\}
-
\Phi_d(z)
\mathbb P\left\{
\widehat Y^{\mathrm{sel}}\in\mathcal A_M
\mid\mathcal G_{\mathcal P}
\right\}
\right|
\\\le&d_{TV}[\mathcal L(T_{M,N},\widehat Y^{sel}\mid\mathcal G_{\mathcal P}), N(0,I_d)\otimes \mathcal L(
\widehat Y^{\mathrm{sel}}\mid\mathcal G_{\mathcal P})] 
\\\le &d_{TV}[\mathcal L(T_{M,N},(\Sigma_{\mathcal P^c}^+)^{-1/2}(\hat Y^{sel}_{\mathcal P^c}-\mu_{\mathcal P^c})\mid\mathcal G_{\mathcal P}), N(0,I_d)\otimes N(0,I_{n_{\mathcal P^c}})] 
\\=&d_{TV}[N(\mu_K,\Gamma_{K}), N(0,I_d)\otimes N(0,I_{n_{\mathcal P^c}})] 
\\=&\Delta_N(\mathcal D_N,\mathcal P,Y_{\mathcal P},\widehat W_{\mathcal P})=o_p(1),
\end{aligned}
\]
where the first inequality is based on the definition of total variation. The second inequality is based on Lemma \ref{lem:conditional_tv_contraction}, and the equation in the third line is based on the distribution of \((T_{M,N},(\Sigma_{\mathcal P^c}^+)^{-1/2}(\hat Y^{sel}_{\mathcal P^c}-\mu_{\mathcal P^c})\mid\mathcal G_{\mathcal P})\) in \eqref{eq:Y_s_Y_i}.

Taking conditional expectations given $\mathcal D_N$ and then the supremum
over $z$ gives
\[
\begin{aligned}
&\sup_{z\in\mathbb R^d}
\left|
\mathbb P\left\{
T_{M,N}\le z,
\widehat Y^{\mathrm{sel}}\in\mathcal A_M
\mid\mathcal D_N
\right\}
-
\Phi_d(z)
\mathbb P\left\{
\widehat Y^{\mathrm{sel}}\in\mathcal A_M
\mid\mathcal D_N
\right\}
\right|
\\&=\sup_{z\in\mathbb R^d}
\left|
\mathbb E\left[\mathbb P\left\{
T_{M,N}\le z,
\widehat Y^{\mathrm{sel}}\in\mathcal A_M
\mid\mathcal G_{\mathcal P}
\right\}
-
\Phi_d(z)
\mathbb P\left\{
\widehat Y^{\mathrm{sel}}\in\mathcal A_M
\mid\mathcal G_{\mathcal P}
\right\}\mid \mathcal D_N\right]
\right|
\\&\le\sup_{z\in\mathbb R^d}
\mathbb E\left[\left|
\mathbb P\left\{
T_{M,N}\le z,
\widehat Y^{\mathrm{sel}}\in\mathcal A_M
\mid\mathcal G_{\mathcal P}
\right\}
-
\Phi_d(z)
\mathbb P\left\{
\widehat Y^{\mathrm{sel}}\in\mathcal A_M
\mid\mathcal G_{\mathcal P}
\right\}
\right| \mid\mathcal D_N\right]
\\
&\le
\sup_{z\in\mathbb R^d}\mathbb E(\Delta_N(\mathcal D_N,\mathcal P,Y_{\mathcal P},\widehat W_{\mathcal P})\mid\mathcal D_N).
\\&=
\mathbb E(\Delta_N(\mathcal D_N,\mathcal P,Y_{\mathcal P},\widehat W_{\mathcal P})\mid\mathcal D_N).
\end{aligned}
\]
As \(\Delta_N(\mathcal D_N,\mathcal P,Y_{\mathcal P},\widehat W_{\mathcal P})\overset p \to 0\), and \(0\le \Delta_N(\mathcal D_N,\mathcal P,Y_{\mathcal P},\widehat W_{\mathcal P})\le 1\), we have \[\mathbb E(\Delta_N(\mathcal D_N,\mathcal P,Y_{\mathcal P},\widehat W_{\mathcal P}))\to 0.\]
The Markov inequality then gives that \[\mathbb E(\Delta_N(\mathcal D_N,\mathcal P,Y_{\mathcal P},\widehat W_{\mathcal P})\mid\mathcal D_N)=o_p(1).\]

Assumption~5 gives \(\mathbb P\left\{
\widehat Y^{\mathrm{sel}}\in\mathcal A_M
\mid\mathcal D_N
\right\}>c_4\). Therefore, we have 
\[
\begin{aligned}
&\sup_{z\in\mathbb R^d}
\left|
\mathbb P\left\{
T_{M,N}\le z
\mid
\widehat Y^{\mathrm{sel}}\in\mathcal A_M,\mathcal D_N
\right\}
-
\Phi_d(z)
\right|
\\\le& \frac{1}{\mathbb P\left\{
\widehat Y^{\mathrm{sel}}\in\mathcal A_M
\mid\mathcal D_N
\right\}}\sup_{z\in\mathbb R^d}
\left|
\mathbb P\left\{
T_{M,N}\le z,
\widehat Y^{\mathrm{sel}}\in\mathcal A_M
\mid\mathcal D_N
\right\}
-
\Phi_d(z)
\mathbb P\left\{
\widehat Y^{\mathrm{sel}}\in\mathcal A_M
\mid\mathcal D_N
\right\}
\right|
\\\le &\frac{1}{c_4}E(\Delta_N(\mathcal D_N,\mathcal P,Y_{\mathcal P},\widehat W_{\mathcal P})\mid\mathcal D_N)
\\=&o_p(1),
\end{aligned}
\]
which completes the proof.
\end{proof}

\section{Variance Calculations for Signal Calibration}
\label{sec:signal_calibration}

Let $X=(X_1,\ldots,X_p)^\top$ follow the equicorrelated Gaussian
distribution specified in the simulation study, with
\[
\operatorname{Var}(X_j)=1,
\qquad
\operatorname{Cov}(X_j,X_k)=\rho
\quad (j\ne k).
\]
For the linear and nonlinear cases in Section~\ref{subsec:mean_stru}, $f_1$ and $f_2$
differ only by a relabeling of the covariates. Since the distribution of
$X$ is invariant under coordinate permutations, the two functions have
the same marginal distribution. It therefore suffices to calculate
\[
V_m(\rho)
=
\operatorname{Var}\{f_1(X)\}
=
\operatorname{Var}\{f_2(X)\}.
\]

For the linear case,  $f_1(X)=\sum_{k=1}^{10}X_k$, we obtain
\begin{equation}
V_m(\rho)
=
\sum_{k=1}^{10}\operatorname{Var}(X_k)
+
2\sum_{1\le j<k\le 10}\operatorname{Cov}(X_j,X_k)
=
10+90\rho.
\label{eq:Vm_linear}
\end{equation}

For the nonlinear case, define the cubic, linear, and interaction terms as
\[
U_{\mathrm{cub}}
=
\frac{1}{\sqrt{15}}\sum_{k=1}^{4}X_k^3,
\qquad
U_{\mathrm{lin}}
=
\sum_{k=5}^{8}X_k,
\qquad
U_{\mathrm{int}}
=
X_5X_9+X_6X_{10},
\]
so that
\[
f_1(X)
=
U_{\mathrm{cub}}+U_{\mathrm{lin}}+U_{\mathrm{int}}.
\]
The centered Gaussian distribution gives
\[
\mathbb E(U_{\mathrm{cub}})
=
\mathbb E(U_{\mathrm{lin}})
=
0,
\qquad
\mathbb E(U_{\mathrm{int}})=2\rho.
\] 
By Lemma \ref{lem:gaussian_moments}, we have
\[
\operatorname{Cov}(U_{\mathrm{cub}},U_{\mathrm{int}})
=
\operatorname{Cov}(U_{\mathrm{lin}},U_{\mathrm{int}})
=
0.
\]
By Lemma \ref{lem:gaussian_moments}, we also calculate that for
$j\ne k$,
\begin{align*}
\mathbb E(X_j^6)&=15,
&
\mathbb E(X_j^3X_k)&=3\rho,
\\
\mathbb E(X_j^3X_k^3)&=9\rho+6\rho^3,
&
\mathbb E(X_j^2X_k^2)&=1+2\rho^2.
\end{align*}
Consequently,
\begin{align*}
\operatorname{Var}(U_{\mathrm{cub}})
&=
\frac{1}{15}
\left\{
4\cdot15
+
2\binom{4}{2}(9\rho+6\rho^3)
\right\}
=
4+\frac{36}{5}\rho+\frac{24}{5}\rho^3,
\\
\operatorname{Var}(U_{\mathrm{lin}})
&=
4+2\binom{4}{2}\rho
=
4+12\rho,
\\
\operatorname{Cov}(U_{\mathrm{cub}},U_{\mathrm{lin}})
&=
\frac{1}{\sqrt{15}}
\sum_{j=1}^{4}\sum_{k=5}^{8}\mathbb E(X_j^3X_k)
=
\frac{48}{\sqrt{15}}\rho.
\end{align*}
For interaction terms, the mean is $\rho$ and the variance is
\[
\operatorname{Var}(X_jX_k)
=
\mathbb E(X_j^2X_k^2)-\{\mathbb E(X_jX_k)\}^2
=
1+\rho^2
\quad (j\ne k),
\]
and the covariance term is
\[
\operatorname{Cov}(X_5X_9,X_6X_{10})
=\mathbb E(X_5X_9X_6X_{10})-\mathbb E(X_5X_9)\mathbb E(X_6X_{10})=
3\rho^2-\rho^2
=
2\rho^2.
\]
It follows that
\[
\operatorname{Var}(U_{\mathrm{int}})
=
2(1+\rho^2)+2(2\rho^2)
=
2+6\rho^2.
\]
Combining these expressions yields
\begin{align}
V_m(\rho)
&=
\operatorname{Var}(U_{\mathrm{cub}})
+
\operatorname{Var}(U_{\mathrm{lin}})
+
\operatorname{Var}(U_{\mathrm{int}})
+
2\operatorname{Cov}(U_{\mathrm{cub}},U_{\mathrm{lin}})
\notag\\
&=
10+
\left(\frac{96}{5}+\frac{96}{\sqrt{15}}\right)\rho
+
6\rho^2
+
\frac{24}{5}\rho^3.
\label{eq:Vm_nonlinear}
\end{align}
The total variance for both linear and nonlinear case satisfies $V_m(0)=10$.
Substituting \eqref{eq:Vm_linear} or \eqref{eq:Vm_nonlinear} into
\eqref{eq:simulation_calibration} gives the corresponding values of
$\sigma_u^2$ and $\beta_v$.

\begin{lemma}[Isserlis's theorem]
\label{lem:gaussian_moments}
Let $X=(X_1,\ldots,X_p)^\top$ be a centered Gaussian random vector with
covariance matrix $\Sigma$. For any indices
$j_1,\ldots,j_{2m}$, 
\[
\mathbb E\left(
\prod_{\ell=1}^{2m}X_{j_\ell}
\right)
=
\sum_{\pi\in\mathcal P_{2m}}
\prod_{(a,b)\in\pi}
\Cov(X_{j_a},X_{j_b}),
\]
where $\mathcal P_{2m}$ denotes the collection of all pairings of
$\{1,\ldots,2m\}$. All odd-order moments are zero.
\end{lemma}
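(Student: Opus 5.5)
We prove the identity through the moment generating function, after first reducing to the case of distinct coordinates. The odd-order claim needs no computation. Since $X$ is centered Gaussian, $-X$ has the same distribution as $X$. Hence for an odd number $k$ of factors,
\[
\mathbb E\bigl(\textstyle\prod_{\ell=1}^{k}X_{j_\ell}\bigr)=(-1)^k\,\mathbb E\bigl(\textstyle\prod_{\ell=1}^{k}X_{j_\ell}\bigr),
\]
so the moment is zero. The moment is finite because Gaussian vectors have moments of all orders.

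\emph{Reduction to distinct coordinates.} The indices $j_1,\ldots,j_{2m}$ may repeat. To handle this, define $Y=(Y_1,\ldots,Y_{2m})^\top$ with $Y_\ell=X_{j_\ell}$. Then $Y$ is a linear image of $X$, so it is a centered Gaussian vector, possibly degenerate. Its covariance is $C_{ab}=\Cov(X_{j_a},X_{j_b})$. It therefore suffices to show
\[
\mathbb E(Y_1\cdots Y_{2m})=\sum_{\pi\in\mathcal P_{2m}}\prod_{(a,b)\in\pi}C_{ab}.
\]

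\emph{Generating-function step.} For $s\in\mathbb R^{2m}$, the variable $s^\top Y$ is $N(0,s^\top Cs)$, which gives
\[
\mathbb E\exp(s^\top Y)=\exp(s^\top Cs/2).
\]
This formula also holds when $C$ is singular. Both sides are entire in $s$, and all moments of $Y$ are finite, so we may expand both sides as power series and compare coefficients. On the left, the coefficient of the multilinear monomial $s_1s_2\cdots s_{2m}$ is exactly $\mathbb E(Y_1\cdots Y_{2m})$. On the right, write
\[
\exp(s^\top Cs/2)=\sum_{k\ge0}\frac{(s^\top Cs)^k}{2^k k!}.
\]
The term of index $k$ is a homogeneous polynomial of degree $2k$. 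Therefore only $k=m$ can contribute to a monomial of degree $2m$.

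\emph{Combinatorial count (main obstacle).} We expand
\[
(s^\top Cs)^m=\sum_{a_1,b_1,\ldots,a_m,b_m}\ \prod_{r=1}^m C_{a_rb_r}\,s_{a_r}s_{b_r}.
\]
The monomial $s_1\cdots s_{2m}$ arises exactly from those index tuples in which $\{a_1,b_1,\ldots,a_m,b_m\}$ is a permutation of $\{1,\ldots,2m\}$. Each such tuple determines a pairing $\pi=\{\{a_r,b_r\}\}$. Conversely, each pairing $\pi$ is produced by exactly $m!\,2^m$ tuples: $m!$ orderings of its pairs, times $2^m$ orientations within the pairs. By symmetry of $C$, every one of these tuples contributes the same product $\prod_{(a,b)\in\pi}C_{ab}$.

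Consequently, the coefficient of $s_1\cdots s_{2m}$ on the right is
\[
\frac{m!\,2^m}{2^m m!}\sum_{\pi\in\mathcal P_{2m}}\prod_{(a,b)\in\pi}C_{ab}=\sum_{\pi\in\mathcal P_{2m}}\prod_{(a,b)\in\pi}C_{ab}.
\]
Equating this with the coefficient on the left proves the identity. The step needing the most care is this count: one must confirm that the multiplicity is exactly $m!\,2^m$, which cancels the factor $1/(2^m m!)$. The argument also requires that distinct index tuples leading to the same pairing all give identical products, which holds because $C$ is symmetric.
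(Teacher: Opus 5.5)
The paper gives no argument of its own for this lemma; its proof is only a citation to \citet{isserlis1918formula}. Your proposal is a correct, self-contained proof along a standard route, so it differs from the paper only in actually supplying the argument.

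You first reduce to $Y=(X_{j_1},\ldots,X_{j_{2m}})^\top$, which handles repeated indices and a possibly singular covariance. You then match the coefficient of $s_1\cdots s_{2m}$ in $\mathbb E\exp(s^\top Y)=\exp(s^\top Cs/2)$. The multiplicity of $m!\,2^m$ ordered, oriented index tuples per pairing exactly cancels the $1/(2^m m!)$ from the exponential series. The citation buys brevity. Your version buys completeness and makes the repeated-index case explicit, which is exactly what the paper relies on in Appendix~\ref{sec:signal_calibration} for $\mathbb E(X_j^6)$, $\mathbb E(X_j^3X_k)$ and $\mathbb E(X_j^3X_k^3)$.

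One small tightening is needed. Identifying the left-hand coefficient with $\mathbb E(Y_1\cdots Y_{2m})$ requires interchanging the expectation with the power series, or differentiating under the integral. That step is justified by $\mathbb E\exp(|s^\top Y|)<\infty$ for all $s$, not by finiteness of all moments. Finite moments alone do not make the moment series converge. Since the Gaussian MGF is available in closed form, this justification is immediate.
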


\begin{proof}
This result can be found in \citet{isserlis1918formula}.
\end{proof}

\section{Additional Simulation Results}

\begin{figure}[ht!]
	\centering
	\includegraphics[width=\textwidth]{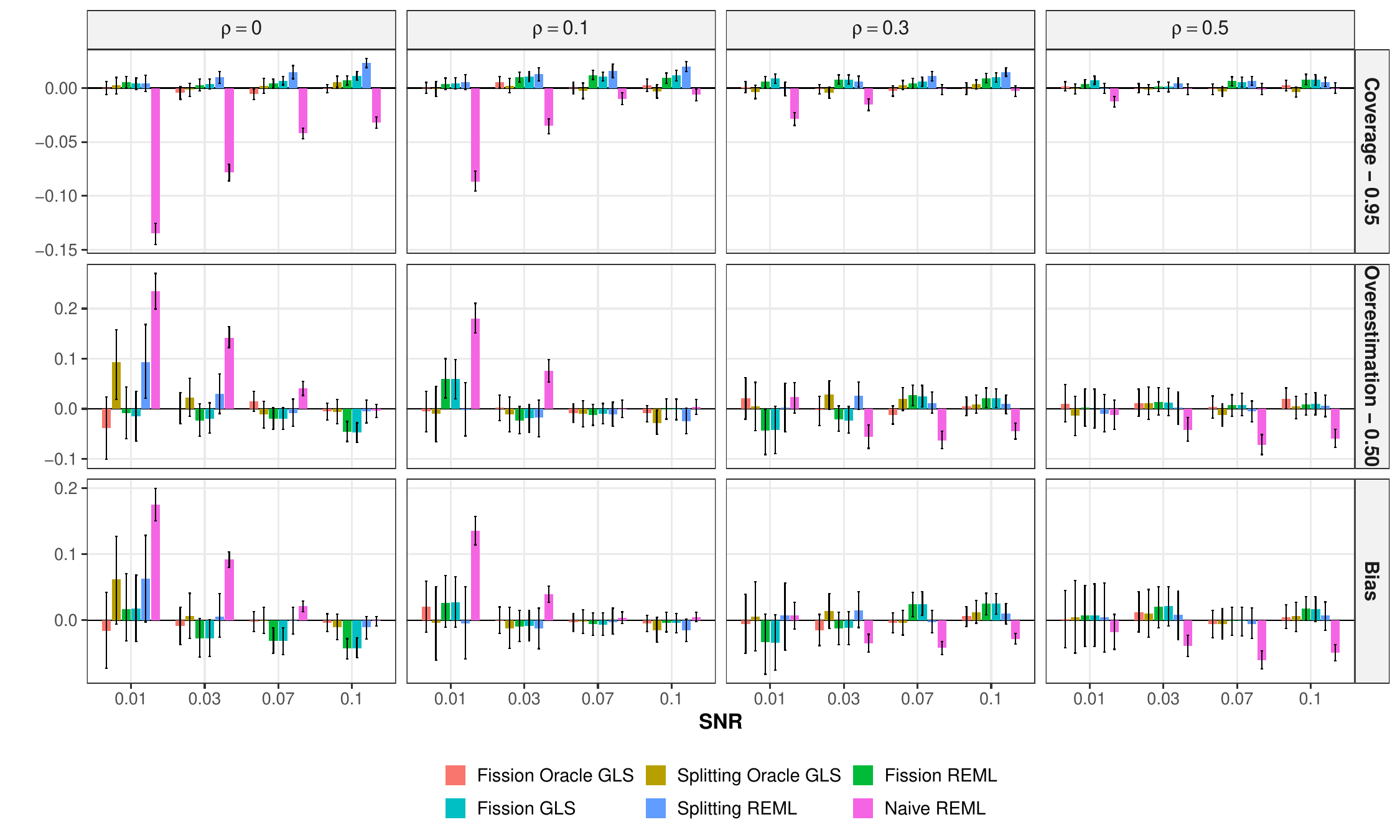}
	\includegraphics[width=\textwidth]{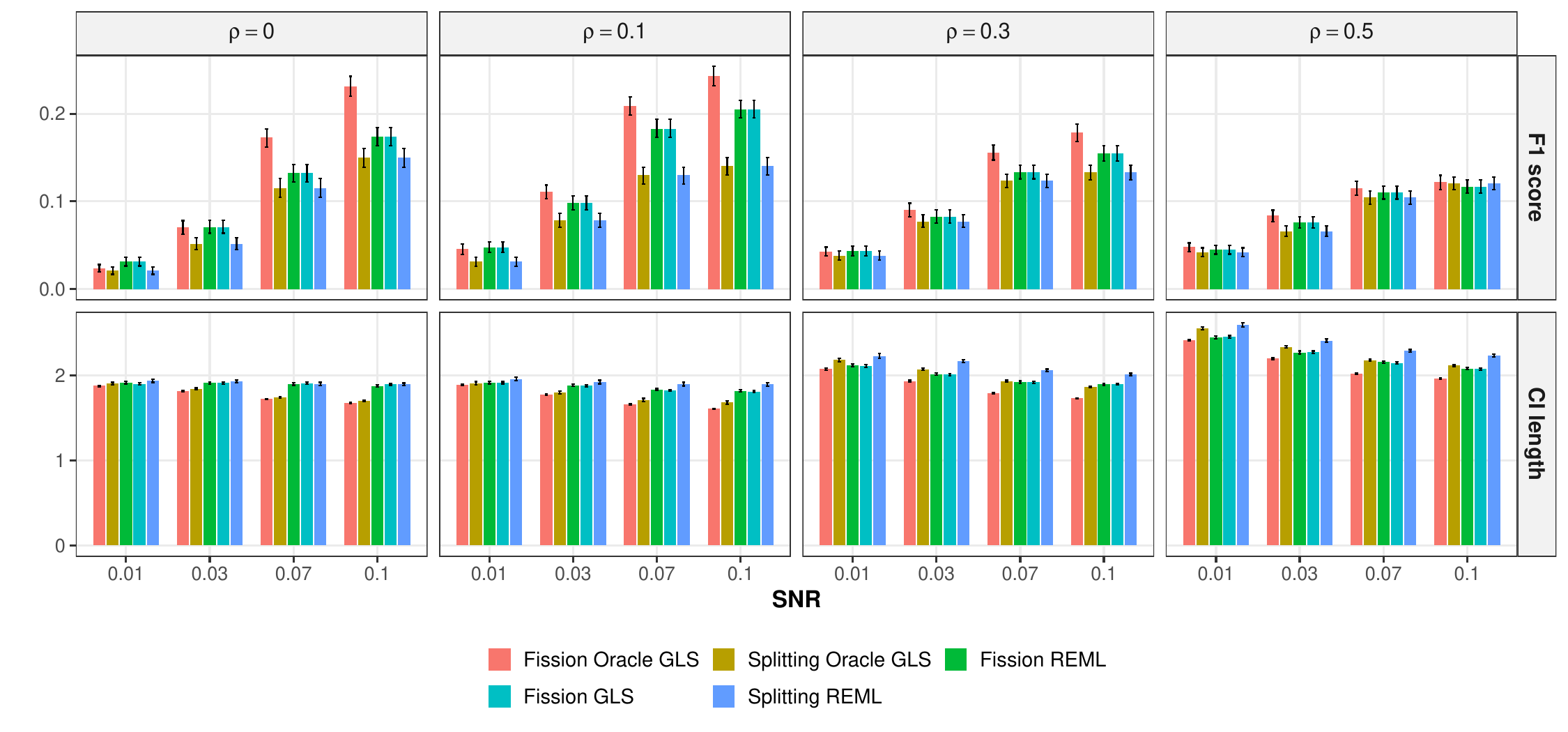}
	\caption{Inferential performance and selection accuracy of the neural-network selector under the linear mean trajectory with $N=200$ and $p=500$.}
	\label{fig:nn-linear-N200-p500}
\end{figure}

\begin{figure}[ht!]
	\centering
	\includegraphics[width=\textwidth]{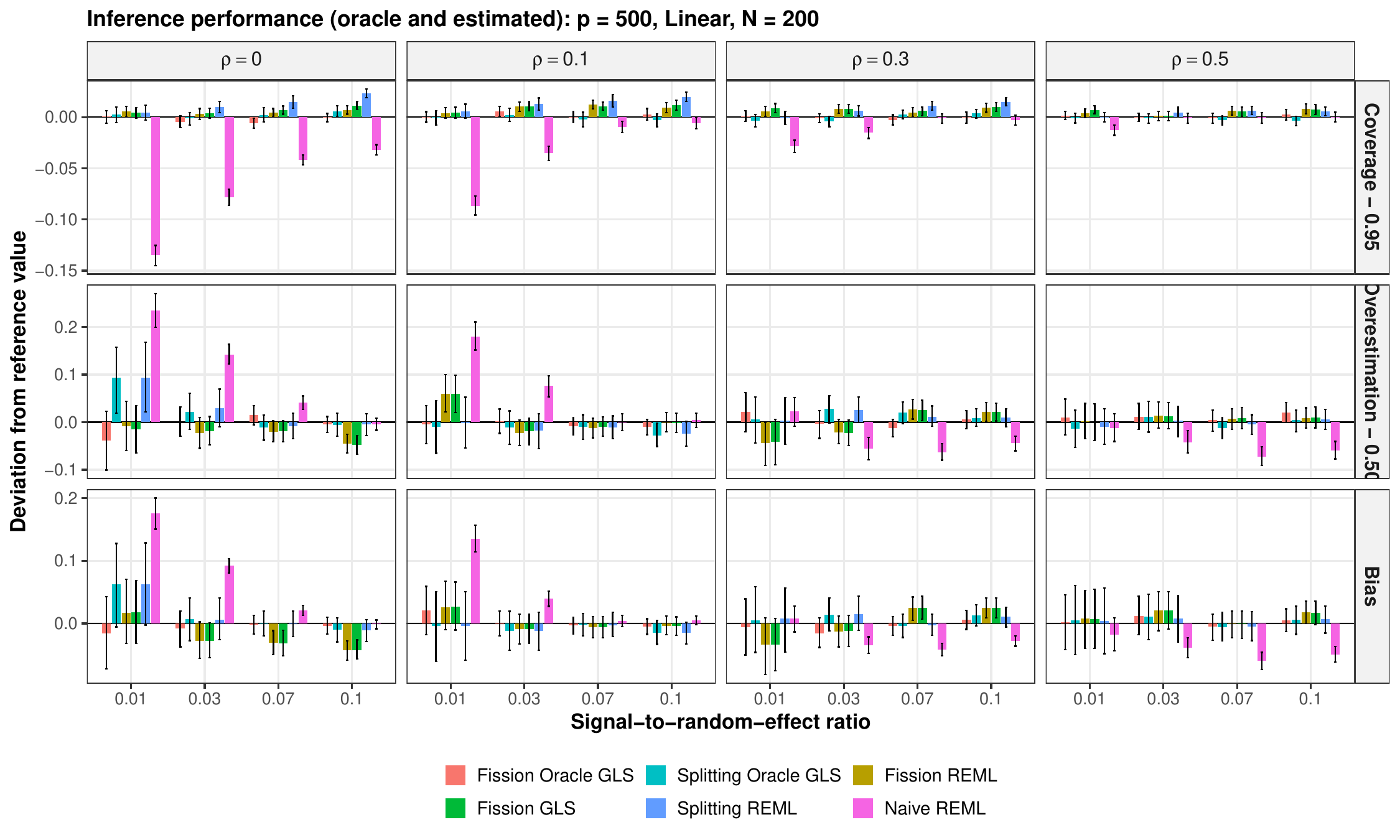}
	\includegraphics[width=\textwidth]{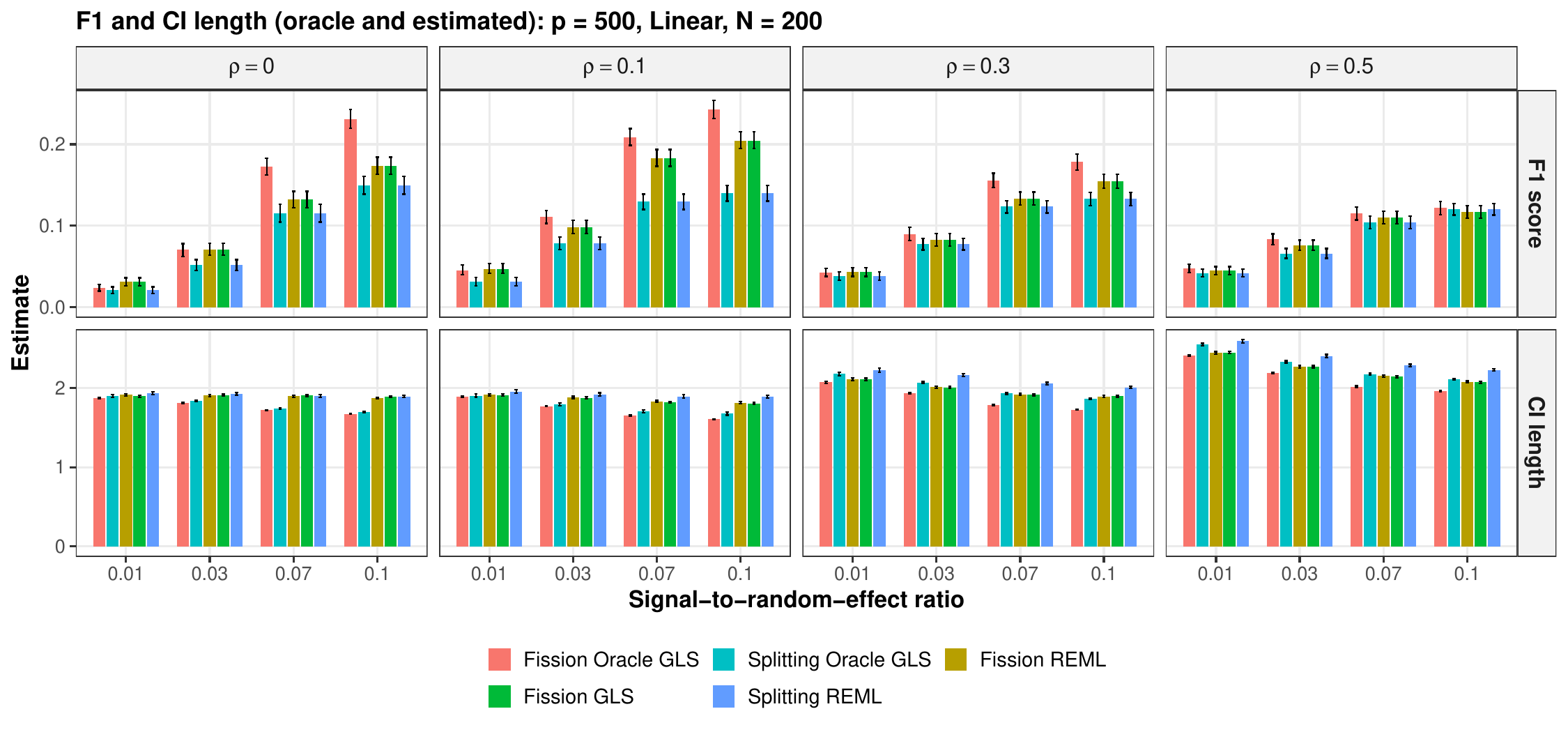}
	\caption{Inferential performance and selection accuracy of the mixed-effects-tree selector under the linear mean trajectory with $N=200$ and $p=500$.}
	\label{fig:tree-linear-N200-p500}
\end{figure}

\begin{figure}[ht!]
	\centering
	\includegraphics[width=\textwidth]{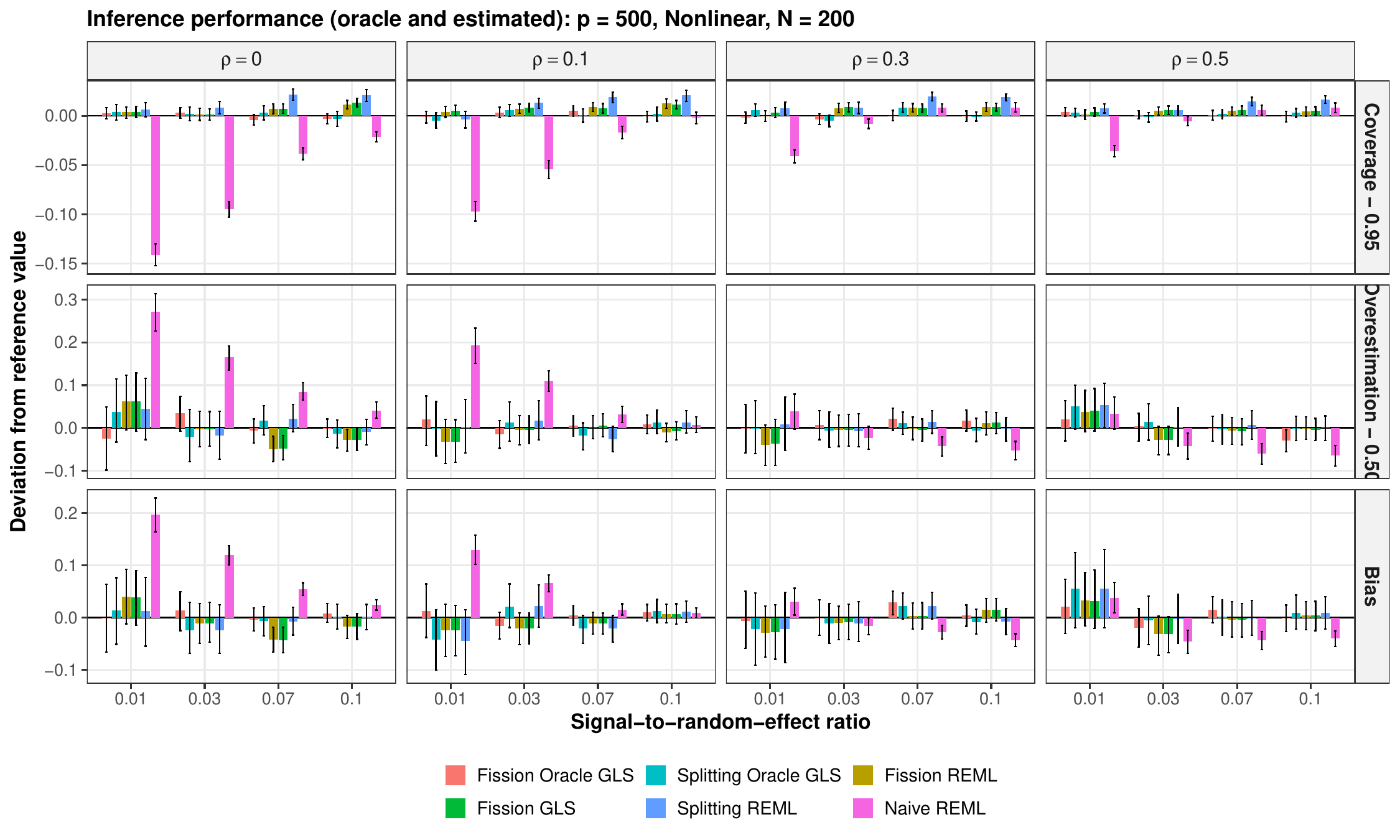}
	\includegraphics[width=\textwidth]{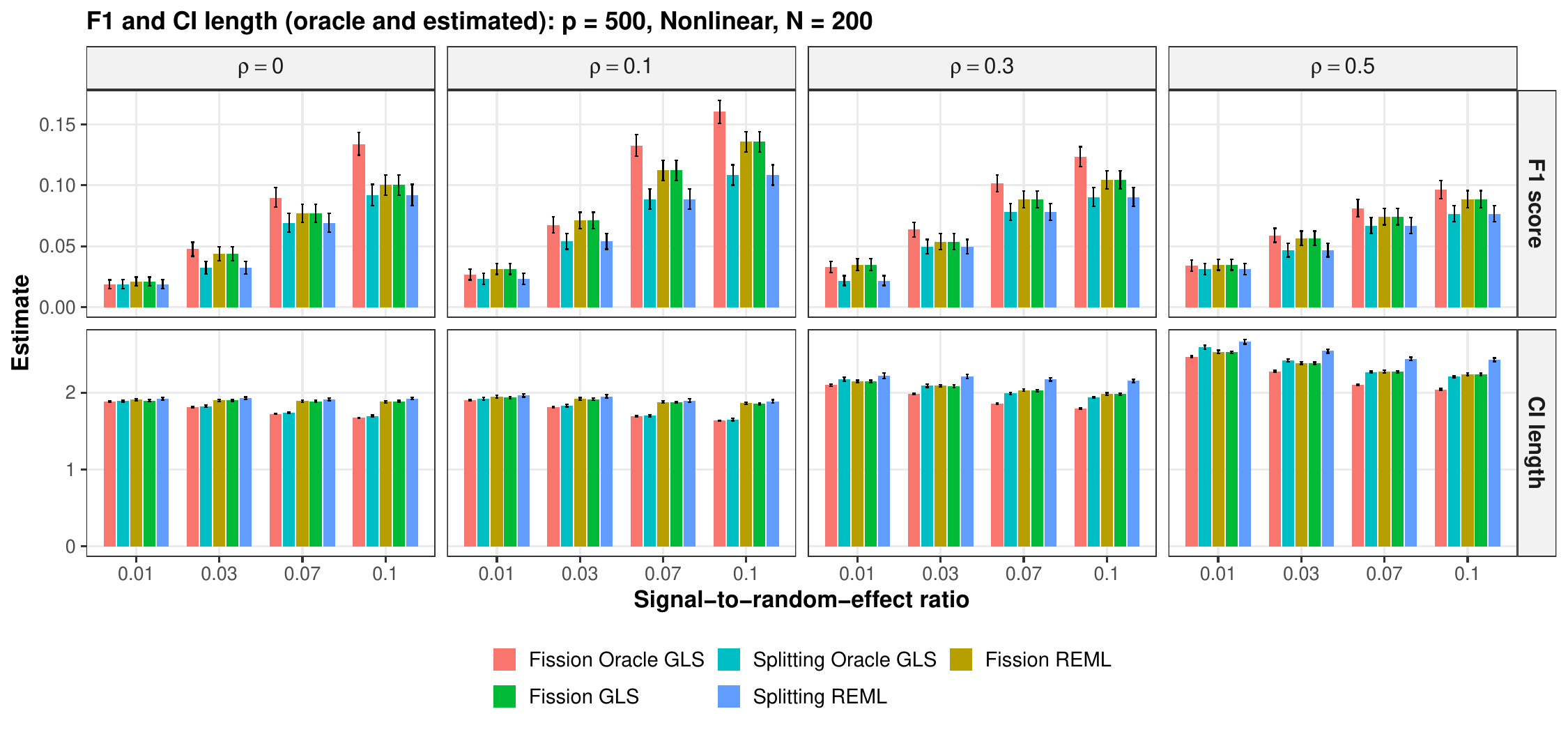}
	\caption{Inferential performance and selection accuracy of the mixed-effects-tree selector under the nonlinear mean trajectory with $N=200$ and $p=500$.}
	\label{fig:tree-nonlinear-N200-p500}
\end{figure}
\endgroup

\clearpage
\bibliographystyle{plainnat}
\bibliography{ref}
\end{document}